\newcommand{\cmark}{\ding{51}}%
\newcommand{\xmark}{\ding{55}}%
\DeclarePairedDelimiterX\set[1]\lbrace\rbrace{\,#1\,}
\newcommand*{\Reals}{\mathbb{R}}
\DeclareMathOperator*{\argmin}{arg\,min}
\newcommand{\stitle}[1]{\vspace{1mm}\noindent{\textbf{#1}}.}
\newcommand{\dee}{\mathcal{D}}
\newcommand{\Gee}{\mathcal{G}}
\newcommand{\gee}{\mathbf{g}}
\newcommand{\tee}{\mathcal{T}}
\newcommand{\eps}{\varepsilon}
\newcommand{\opt}{\mathrm{OPT}}
\newcommand{\at}[1]{{\tt \small #1}\xspace}
\newcommand{\eat}[1]{}
\newcommand{\DP}{\mathsf{DP}}
\newcommand{\LS}{\mathsf{LS}}
\newcommand{\DnC}{\mathsf{D\&C}}
\newtheorem{theorem}{Theorem} 
\newtheorem{lemma}[theorem]{Lemma}
\newtheorem{definition}{Definition}
\newtheorem{example}{Example}
\newtheorem{problem}{Problem}
\newtcolorbox{highlightbox}{
  colback=blue!10,
  colframe=blue!20,
  arc=0.5mm, 
  fonttitle=\bfseries,
  boxrule=0mm,
  boxsep=0mm,
  left=0mm,
  right=0mm,
  top=0mm,
  bottom=0mm
}
\newtcolorbox{examplebox}{
  colback=blue!10,
  colframe=blue!20,
  arc=2mm, 
  fonttitle=\bfseries,
  boxrule=0mm,
  boxsep=1mm,
  left=0mm,
  right=0mm,
  top=0mm,
  bottom=0mm
}
\newtcolorbox{defbox}{
  colback=orange!10,
  colframe=orange!20,
  arc=2mm, 
  fonttitle=\bfseries,
  boxrule=0mm,
  boxsep=1mm,
  left=0mm,
  right=0mm,
  top=0mm,
  bottom=0mm
}
\newtcolorbox{pbox}{
  colback=black!5,
  colframe=black!30,
  arc=2mm, 
  fonttitle=\bfseries,
  boxrule=0mm,
  boxsep=1mm,
  left=0mm,
  right=0mm,
  top=0mm,
  bottom=0mm
}
\newtcolorbox{visionbox}{ 
  colback=green!10,
  colframe=green!20,
  arc=2mm, 
  fonttitle=\bfseries,
  boxrule=0mm,
  boxsep=1mm,
  left=0mm,
  right=0mm,
  top=0mm,
  bottom=0mm
}
\newcommand{\technicalreport}[1]{#1\xspace}
\newcommand{\submit}[1]{}
\newcommand\vldbdoi{10.14778/3828612.3828619}
\newcommand\vldbpages{2617 - 2629}
\newcommand\vldbvolume{19}
\newcommand\vldbissue{X}
\newcommand\vldbyear{2026}
\newcommand\vldbtitle{\shorttitle} 
\newcommand\vldbavailabilityurl{https://github.com/asudeh/UnbiasedBinning}
\newcommand\vldbpagestyle{plain} 
\begin{document}

\title{Unbiased Binning for Fairness-aware Attribute Representation}

\author{Abolfazl Asudeh}
\orcid{0000-0002-5251-6186}
\affiliation{%
  \institution{University of Illinois Chicago}
  \country{}
}
\email{asudeh@uic.edu}

\author{Zeinab (Mila) Asoodeh}
\orcid{0009-0002-2087-4947}
\affiliation{%
  \institution{Independent Researcher}
  \country{}
}
\email{za.asudeh@gmail.com}

\author{Bita Asoodeh}
\orcid{0009-0007-8835-2746}
\affiliation{%
  \institution{University of Edinburgh}
  \country{}
}
\email{bita.asoodeh@ed.ac.uk}

\author{Omid Asudeh}
\orcid{0000-0002-5737-1780}
\affiliation{%
  \institution{University of Utah}
  \country{}
}
\email{asudeh@cs.utah.edu}

\begin{abstract}
Discretizing raw features into bucketized attributes is a common step before sharing a dataset. However, this process can inadvertently introduce bias and amplify unfairness in downstream tasks.

In this paper, we address this issue by formulating the unbiased binning problem, which seeks bucketized attributes that satisfy group parity. We develop an efficient dynamic programming algorithm to solve this problem for equal-size binning.
In practice, however, an unbiased binning may incur a high price of fairness or may not exist at all, particularly when group distributions differ substantially. To accommodate settings in which small deviations from perfect parity are acceptable, we introduce $\eps$-biased binning, which restricts group disparities across buckets to at most $\eps$. 
We first present a dynamic programming algorithm, $\DP$, that computes the optimal solution in quadratic time. Although polynomial, $\DP$ does not scale to large datasets. To address this limitation, we propose a practically scalable algorithm based on a local search strategy ($\LS$).
A key component of $\LS$ is a divide-and-conquer algorithm ($\DnC$) that finds a solution in near-linear time. We prove that $\DnC$ always returns a valid solution whenever one exists. The $\LS$ algorithm then performs a local search, using the $\DnC$ solution as an upper bound, to find the optimal solution. Our $\LS$ and $\DnC$ algorithms are general and are not restricted to equal-size binning.

To complement our theoretical analysis, we conduct extensive experiments on real-world and synthetic datasets. Besides confirming the efficiency of the algorithms, our experiments verify that while fairness-unaware binning can generate biased attribute representations, this bias can be significantly reduced with a negligible price of fairness. 
\end{abstract}

\maketitle

\pagestyle{\vldbpagestyle}
\begingroup\small\noindent\raggedright\textbf{PVLDB Reference Format:}
This is the extended version (technical report) of the following publication.
\\
Abolfazl Asudeh, Zeinab Asoodeh, Bita Asoodeh, and Omid Asudeh. \vldbtitle. PVLDB, \vldbvolume(\vldbissue): \vldbpages, \vldbyear.\\
\href{https://doi.org/\vldbdoi}{doi:\vldbdoi}
\endgroup
\begingroup
\renewcommand\thefootnote{}\footnote{\noindent
This work is licensed under the Creative Commons BY-NC-ND 4.0 International License. Visit \url{https://creativecommons.org/licenses/by-nc-nd/4.0/} to view a copy of this license. For any use beyond those covered by this license, obtain permission by emailing \href{mailto:info@vldb.org}{info@vldb.org}. Copyright is held by the owner/author(s). Publication rights licensed to the VLDB Endowment. \\
\raggedright Proceedings of the VLDB Endowment, Vol. \vldbvolume, No. \vldbissue\ %
ISSN 2150-8097. \\
\href{https://doi.org/\vldbdoi}{doi:\vldbdoi} \\
}\addtocounter{footnote}{-1}\endgroup

\ifdefempty{\vldbavailabilityurl}{}{
\vspace{.3cm}
\begingroup\small\noindent\raggedright\textbf{PVLDB Artifact Availability:}\\
The source code, data, and/or other artifacts have been made available at \url{\vldbavailabilityurl}.
\endgroup
}

\section{Introduction}\label{sec:intro}

For various reasons such as augmenting privacy preservation and minimizing the impact of noise~\cite{mironchyk2017monotone}, 
data sharing platforms often apply attribute binning (also known as bucketization) in domains such as credit scoring, {\em before} (publicly) sharing a dataset~\cite{oliveira2008rigorous,zeng2014necessary}.
Binning partitions the raw attribute values into discrete intervals, known as bins or buckets\footnote{We use the terms ``bin'' and ``bucket'' interchangeably.}, each representing a category of values. 
The original attribute values that fall into a given interval are replaced by the bin value.
For instance, equal-size (aka. equal-frequency) binning ensures an equal number of tuples fall inside each bucket~\cite{kotsiantis2006discretization,boulle2005optimal}.

Despite its conceptual simplicity and computational appeal, the process of attribute bucketization frequently induces unintended biases, thereby amplifying preexisting disparities embedded within the dataset. 
Since downstream analytics depend critically on the statistical properties of the input data, such biases can directly translate into unfairnesses in the outcomes of learned models. 
To further clarify this issue, let us present the following motivation example.

\begin{visionbox}
{\small
\begin{example}\label{ex-binning}
As a common practice before sharing the datasets, a data sharing platform, such as \at{Chicago Open Data Portal}~\cite{kassen2013promising}, 
may bucketize attributes such as income.
However, income disparities across different demographic groups, when combined with binning, may inadvertently cluster marginalized populations predominantly into lower-tier buckets. 
Such biased attributes can propagate biases throughout the data analytics pipeline, further exacerbating unfairness in downstream decisions.
This propagation is not merely theoretical: as we shall experimentally demonstrate in Section~\ref{exp:validation}, biases introduced during binning directly translate into measurable disparities in downstream models. This highlights the importance of addressing bias at the data preparation stage, prior to model training.
\end{example}
}
\end{visionbox}

The bias issues become critically important in sensitive domains such as finance, healthcare, employment screening, and criminal justice, wherein biased data transformations can yield discriminatory and inequitable outcomes~\cite{machinebias,barocas2016big}. For example, diagnostic healthcare models trained on bucketized attributes exhibiting demographic bias might disproportionately recommend preventive care measures to certain populations while neglecting others~\cite{obermeyer2019dissecting}. Similarly, credit scoring systems employing improperly bucketized financial attributes may lead to unfair loan application denials, particularly disadvantaging marginalized communities~\cite{hurley2016credit}.

\begin{table*}[!tb]
    \centering
    \small
    \caption{Key aspects of the algorithms presented in the paper.}
    \label{tab:results}
    \begin{tabular}{c|ccccc|c}
         &  & & \textbf{Practically} & \textbf{Time} & \textbf{Space}& \textbf{Problem} \\
        \textbf{Objective} & \textbf{Algorithm} & \textbf{Optimal?} & \textbf{Scalable?} & \textbf{Complexity} & \textbf{Complexity}& \textbf{Scope}\\
        \hline\hline
        Unbiased Binning & {\sc UnbiasedBinning} (Alg.~\ref{alg:step2}) &\cmark &\cmark &$O(n \log n + m^2 k)$&$O(mk)$&Equal-size Binning\\
        \hline 
        $\eps$-biased & $\DP$ (Alg.~\ref{alg:ebiased}) &\cmark & \xmark &$O(n^2k)$&$O(n^2)$&Equal-size Binning\\
        \cline{2-7}
        Binning &$\DnC$ (Alg.~\ref{alg:dnc})&\xmark&\cmark& $O(n\log n)$&$O(n)$& General \\
        \cline{2-7}
                &$\LS$ \technicalreport{(Alg.~\ref{alg:ls})}&\cmark&\cmark&$O(n \log n + n^{k-1})$&$O(n)$&General \\
    \end{tabular}
\end{table*}

Despite its substantial implications for fairness in data-driven systems, this problem has largely been overlooked in the literature, which has instead focused primarily on training fair models using potentially biased attributes. 
To address this gap, we introduce the {\em unbiased binning} problem, whose objective is to ensure demographic group parity across buckets while minimizing deviations from the original bucketization. 

Unbiased binning operates early in the data analytics pipeline~\cite{jagadish2014big} and complements, rather than replaces, existing algorithmic fairness approaches, including FairML~\cite{barocas2023fairness}. As a preprocessing step applied {\em before} a dataset is shared and subsequently used for a variety of data analytics tasks, such as training machine learning models, unbiased binning targets a distinct objective and {\em does not, by itself, guarantee fairness in downstream tasks}.
Rather, it is a necessary (but insufficient) component of the responsible data analytics pipeline~\cite{jagadish2014big, shahbazi2023representation,nargesian2022responsible}. Ensuring fairness-aware dataset preparation must therefore be coupled with the development of fair algorithms and other downstream interventions, including fair model development techniques~\cite{barocas2023fairness}, to achieve fair models and outcomes (see Sections~\ref{sec:discussion} and \ref{sec:relatedwork}).

We develop an efficient dynamic programming ($\DP$) to solve the problem for equal-size binning.
Specifically, we first define a small set of ``boundary candidates'', and prove that the optimal unbiased binning must draw the bucket boundaries only from this set. We then develop our $\DP$ algorithm on top of the boundary candidates to find the optimal unbiased binning.

Due to several reasons, such as distribution mean differences between various groups, unbiased binning may result in a high price of fairness, significantly deviating from the initial binning, while sometimes unbiased binning may not even exist.
Therefore, we next introduce the problem of {\em $\eps$-biased binning} that bounds the group disparity across buckets to $\eps$.
Our solution for unbiased binning is not applicable for $\eps$-biased binning, since the concept of boundary candidates does not extend to the new setting.
To address this issue, we introduce a matrix that specifies the set of possible $\eps$-biased buckets, and use it to extend our dynamic programming ($\DP$) approach for $\eps$-biased binning.
Our DP solution, however, is not scalable to large settings due to its quadratic time complexity.

Thus, we propose a practically scalable solution for $\eps$-biased binning based on a local search around the initial binning.
A key component of our practical solution is a divide-and-conquer ($\DnC$) algorithm that finds a valid $\eps$-biased binning in near-linear time. 
We prove that when $\DnC$ fails to find a valid solution, the problem is infeasible. 
$\DnC$ can serve as an efficient and scalable algorithm for finding a practical (not necessarily optimal) solution for the problem.
The local search $\LS$ algorithm uses the output of $\DnC$ to limit its search space to only the combinations with smaller objective values.
Unlike our $\DP$ algorithms, the $\DnC$ and $\LS$ algorithms are general, and not limited to equal-size binning.
Table~\ref{tab:results} summarizes the key aspects of our proposed algorithms.

Besides theoretical analysis, we conduct an extensive set of experiments on both real-world and synthetic datasets to
empirically validate the practical relevance of our problem and the effectiveness of our algorithms, and evaluate the performance of our algorithms across a wide range of settings.
Our experimental results demonstrate not only the efficiency and scalability of our algorithms but also underscore that traditional fairness-unaware binning approaches can lead to biased attribute representations, which may result in unfair decisions in downstream tasks.
On the other hand, our experiments confirm that unbiased binning can significantly reduce group unfairness at a small cost. 

\section{Preliminary}\label{sec:pre}

\subsection{Data Model}

Let \(\dee=\{t_i\}_{i=1}^n\) be a set of $n$ tuples defined over the attributes $X=\{x_j\}_{j=1}^d$.
Let $\Gee=\{\gee_1,\cdots,\gee_\ell\}$ be the set of (demographic) groups\footnote{$\Gee$ can be the set of subgroups (e.g., \at{black-female}) defined as the intersection of multiple sensitive attributes such as \at{race} and \at{gender}.}, where $\ell$ is a small constant.

Each tuple $t\in\dee$ belongs to a group $\gee(t)\in \Gee$.
We use $\Gee_l$ to refer to the set of tuples that belong to the group $\gee_l\in \Gee$. That is, $\Gee_l = \{t\in\dee~|~\gee(t)=\gee_l\}$.
In our examples, we use colors to distinguish the groups. Particularly, we use \at{red} and \at{blue} to refer to binary groups. 

\subsection{Problem Definition}
We study the attribute binning (aka bucketization) problem.

\stitle{$k$-binning}
Given an integer value $k\ll n$ and an attribute $x\in \Reals$ from the set $X$, a $k$-binning is a partitioning of $x$ into $k$ ordered buckets $\mathcal{B}=\{B_1,\cdots,B_k\}$ by specifying $k-1$ boundaries $\{v_1,\cdots,v_{k-1}\}$.
The size of a bucket $B_j$ is the number of tuples with their value of attribute $x$ being in the range $(v_{j-1},v_j]$\footnote{$B_1$ is $x\leq v_1$ and $B_k$ is $x>v_{k-1}$.}. That is,
\[ |B_j\cap \dee| = \left|\{t_i\in \dee~|~ t_i[x]\in (v_{j-1}, v_j]\}\right|.\]

Equal-size (aka equal-frequency) $k$-binning is the $k$-binning of $x$, where all buckets contain the same number of tuples. That is,
\(|
B_j\cap \dee|=\frac{n}{k}, \quad \forall j\in[k]
\).

\technicalreport{
\begin{table}[!tb]
    \centering
    \caption{Table of Notations}
    \label{tab:notations}
    \begin{tabular}{c l}
        \toprule
        {\bf notation}& {\bf description}
        \\ \toprule
        $\dee$ & the dataset 
        \\ \hline
        $n$ & size of $\dee$
        \\ \hline
        $\gee_i$ & group $i$ in the set of groups $\Gee$ 
        \\ \hline
        $\ell$ & The total number of groups, i.e., $|\Gee|$
        \\ \hline
        $\Gee_i$ & The set of tuples in group $\gee_i$ 
        \\ \hline
        $k$ & the total number of buckets, i.e., $|\mathcal{B}|$  
        \\ \hline
        $B_i$ & the $i$-th bucket   
        \\ \hline
        $T$ & the set of boundary candidates 
        \\ \hline
        $m$ & the number of the boundary candidates
        \\ \hline
        $\eps$ & maximum-bias threshold in $\eps$-biased binning
        \\ \hline
        $\beta(~)$ & The bias of a bucket $B$ or a binning $\mathcal{B}$ 
        \\ \bottomrule
    \end{tabular}
\end{table}
}

\stitle{Unbiased binning}
Our goal is to ensure group parity in
$k$-binning; that is, the group ratios are equal across all buckets.
Formally,

\begin{defbox}
\begin{definition}[Unbiased Binning]\label{def:unbiased}
    Given a dataset $\dee$, a binning $\mathcal{B}=\{B_i\}_{i=1}^k$ of an attribute $x$ is unbiased if
    \[
    \frac{|\Gee_l|}{|\dee|} = \frac{|B_j\cap \Gee_l|}{|B_j\cap \dee|}, \quad \forall l\in[\ell], j\in[k]
    \]
\end{definition}
\end{defbox}

Due to various reasons, such as social and historical discrimination, equal-size binning without considering group ratios may not achieve group parity. 
Therefore, to ensure unbiased binning, Problem~\ref{problem} aims for the closest unbiased binning to the equal-size binning.
Formally,

\begin{pbox}
    \begin{problem}\label{problem}
        Given a dataset $\dee$, an attribute $x\in X$, and a positive integer value $k$,
        partition $x$ into $k$ bins $\mathcal{B}=\{B_i\}_{i=1}^k$, such that $\mathcal{B}$ is unbiased, and 
        \(
        \big( \max_{i\in[k]} |B_i\cap \dee| - \min_{j\in[k]} |B_j\cap \dee|~\big)
        \)
        is minimized.
    \end{problem}
\end{pbox}

\begin{visionbox}
\small\it
\noindent{\sc Example~\ref{ex-binning} (continued).}
To illustrate the notion of unbiased binning,
consider a toy dataset of $n=16$ tuples belonging to two groups,
$\Gee_1$ (blue) and $\Gee_2$ (red), where $|\Gee_1|=8$ and $|\Gee_2|=8$.
Suppose the tuples are sorted by an attribute (income), as shown in the first row of Figure~\ref{fig:step1}.
Thus, the overall ratio of group $\Gee_1$ is $\frac{|\Gee_1|}{|\dee|}=0.5$.

Let $k=4$.
The fairness-unaware equal-size binning selects boundaries at (the income value of) the 4th, 8th, and 12th tuples, producing four buckets of size $4$.
However, the resulting buckets are biased -- for example, the first bucket contains three red and one blue tuple.

\noindent\emph{Unbiased binning.}
An unbiased binning requires each bucket to preserve the global ratio (i.e., $0.5$).
One valid partition is:
\[
B_1 = \{1,\ldots,6\},\;
B_2 = \{7,8\},\;
B_3 = \{9,10,11,12\},\;
B_4 = \{13,14,15,16\}.
\]
\end{visionbox}

While Problem~\ref{problem} is defined on top of equal-size binning, the Unbiased Binning problem, in general, can be defined on top of any $k$-binning. 
Specifically, suppose an algorithm has generated an initial binning $\hat{\mathcal{B}}=\{\hat{B}_i\}_{i=1}^k$ of an attribute $x$ on a dataset $\dee$.
The extension of Problem~\ref{problem} for general binning, aims to minimally change the initial boundaries to generate an unbiased binning $\mathcal{B}=\{B_i\}_{i=1}^k$. That is, to minimize $\vert\hat{B}_i - B_i\vert$, $\forall i\in[k]$.

In the following sections, we first provide an efficient optimal solution for Problem~\ref{problem}. Then, in Section~\ref{sec:localsearch}, we extend our scope to general binning, where we provide practically efficient and scalable algorithms.

\begin{figure*}[!tb]
    \centering
    \resizebox{.8\textwidth}{!}{\begin{tikzpicture}
    \def\labelwidth{2.2}
    \def\ncols{16}

    \node[font=\bfseries, align=center] at (\labelwidth/2, -.5) {Sorted\\Indices};
    \node[font=\bfseries] at (\labelwidth/2, -1.5) {$\frac{\text{\bf Blue}}{\text{\bf All}}$ Ratios};
    \node[font=\bfseries, align=center] at (\labelwidth/2, -2.5) {Boundary\\Candidates};

    \def\groups{{"R","R","B","R","B","B","R","B","R","R","B","B","B","R","R","B"}}
    \foreach \i in {1,...,16} {
        \pgfmathtruncatemacro{\idx}{\i - 1}
        \pgfmathsetmacro{\xpos}{\labelwidth + \idx}
        \pgfmathparse{\groups[\idx]}
        \edef\groupval{\pgfmathresult}

        \ifthenelse{\equal{\groupval}{R}}{
            \fill[red!30] (\xpos,0) rectangle ++(1,-1);
        }{
            \fill[blue!30] (\xpos,0) rectangle ++(1,-1);
        }

        \node at (\xpos + 0.5, -0.5) {$\i$};
    }
    \def\ratios{{"0","0","1/3","1/4","2/5","1/2","3/7","1/2","4/9","2/5","5/11","1/2","7/13","1/2","7/15","1/2"}}
    \foreach \i in {1,...,16} {
        \pgfmathtruncatemacro{\idx}{\i - 1}
        \pgfmathsetmacro{\xpos}{\labelwidth + \idx}
        \pgfmathparse{\ratios[\idx]}
        \edef\ratioval{\pgfmathresult}
        \node at (\xpos + 0.5, -1.5) {$\ratioval$};
    }

    \foreach \i in {1,...,16} {
        \pgfmathtruncatemacro{\idx}{\i - 1}
        \pgfmathsetmacro{\xpos}{\labelwidth + \idx}
        \pgfmathparse{\ratios[\idx]}
        \edef\ratioval{\pgfmathresult}
        \ifthenelse{\equal{\ratioval}{1/2}}{
            \node at (\xpos + 0.5, -2.5) {$\checkmark$};
        }{}
    }

    \draw[->, thick, opacity=0.8]
    ({\labelwidth + 0.5}, -1.7) -- ({\labelwidth + 5.2}, -1.7);
    \node[] at ({\labelwidth + 3}, -1.85) {\tt \small First Pass};
    \draw[->, thick, opacity=0.8]
    ({\labelwidth + .5}, -2.7) -- ({\labelwidth + 5.2}, -2.7);
    \node[] at ({\labelwidth + 3}, -2.85) {\tt\small Second Pass};

    \foreach \row in {0,...,2} {
        \draw (0,-\row) rectangle ++(\labelwidth,-1); 
        \foreach \col in {1,...,16} {
            \draw ({\labelwidth + \col - 1}, -\row) rectangle ++(1,-1);
        }
    }
    \draw[thick] (0,-1) -- ({\labelwidth + 16}, -1);
    \draw[thick] (0,-1.05) -- ({\labelwidth + 16}, -1.05);
\end{tikzpicture}}
    \caption{Illustration of Boundary-candidates identification based on the Blue/all ratios.}
    \label{fig:step1}
\end{figure*}

\section{Unbiased Binning Algorithm}\label{sec:solution}
There are $n \choose k$ possible ways to partition an attribute $x$ into $k$ buckets.
Checking if a $k$-binning is unbiased and computing the objective value takes $O(n)$ time.
As a result, the brute-force approach for solving Problem~\ref{problem} has a time complexity of $O\big(n {n\choose k}\big)$.
Alternatively, one can model the problem as an Integer Programming (IP) problem and use an existing solver to solve the problem. This approach also is not practical, since IP$\in${\sc NP}-complete.

Therefore, in this section, we instead develop an efficient algorithm for the unbiased binning problem (Problem~\ref{problem}).
Our algorithm has two sequential steps, where in step 1, a set of ``boundary candidates'' is identified. In step 2, we use the boundary candidates and develop a dynamic programming (DP) algorithm that quickly finds the optimal unbiased binning.
Without loss of generality, to simplify the explanations, we first consider the binary-group case. 
We then extend our approach to non-binary grouping in Section~\ref{sec:extension}.

\subsection{Step 1: Finding the boundary candidates}\label{sec:unbiased:step1}

Let us sort and index the tuples in $\dee$ based on their values on $x$, such that $\forall i\in[n]$, $t_i[x]\leq t_{i+1}[x]$.
{For every index $i\in [n]$ such that $t_i[x]<t_{i+1}[x]$\footnote{Since boundaries are defined by value thresholds, all tuples with identical values are assigned to the same bucket. Hence, if $t_i[x]=t_{i+1}[x]$, they cannot be separated.}, let $r_i$ be the ratio of the $\gee_1$ (blue) tuples in the set $\dee[1:i] = \{t_1, t_2, \cdots, t_i\}$.} That is,
\(
r_i = \frac{1}{i}\big|\dee[1:i]\cap \Gee_{1}\big|
\).
Note that $r_n$ is the overall ratio of the blue tuples in the dataset, i.e., $r_n = \frac{|\Gee_{1}|}{n}$.

We say an index $i\in [n]$ is a {\bf boundary candidate} if its blue ratio is equal to the overall ratio of blue tuples, i.e., $r_i=r_{n}$.
Accordingly, the set of boundary candidates is,
\(
T = \{i\in[n]~|~ r_i = r_n\}
\).
We use $m=|T|$ to refer to the size of the boundary candidates set. 

In order to efficiently identify the set of boundary candidates, we make two passes over $\dee$ (sorted on $x$) -- Figure~\ref{fig:step1} provides a toy example with $n=16$ tuples.

During the first pass, we maintain a counter $c$ that counts the number of blue tuples observed so far. For each index $i$, we increment the counter if $\gee(t_i)=\gee_1$. The ratio $r_i$ is hence $\frac{c}{i}$.
After computing all ratios, the second pass adds the indices where $r_i=r_n$ to the set of boundary candidates.

The pseudo-code of the function {\sc BCandidates} for finding the boundary candidates is provided in \submit{the technical report~\cite{techrep}}\technicalreport{Algorithm~\ref{alg:step1}}. 

\technicalreport{
\begin{algorithm}[ht]
\caption{Finding the Boundary Candidates}
\label{alg:step1}
\begin{algorithmic}[1]
\Require The tuples $\{t_1,\cdots,t_n\}$ sorted on $x$.
\Ensure the boundary candidates array $T$.
\Function{\sc BCandidates}{$\{t_1,\cdots,t_n\}$} 
    \Statex {\small /* First Pass: computing the ratios */ }
    \State $c\gets [0]_{l=1}^{(\ell-1)}$; $r\gets [0]_{n\times (\ell-1)}$ \Comment{\small initialize the variables}
    \For{$i\gets 1$ to $n$}
        \If{$\gee(t_i)\neq l$}
            $c[\gee(t_i)]\gets c[\gee(t_i)]+1$
        \EndIf
        \State $r[i,l]\gets c[l]/i,~\forall l\in[\ell-1]$
    \EndFor
    \Statex {\small /* Second Pass: finding the boundary candidates */ }
    \State $T\gets [~]$ \Comment{\small initialize the array of boundary candidates}
    \For{$i\gets 1$ to $n$}
        \If{$r[i,l]=r[n,l],~\quad \forall l\in[\ell-1]$}
            \State $T.append(i)$ \Comment{\small add $i$ to the end of $T$}
        \EndIf
    \EndFor
    \State {\bf Return} $T$
\EndFunction
\end{algorithmic}
\end{algorithm}
}

The following equation, (\ref{eq:buckets}), shows the corresponding binning of $x$ for a subset of boundary candidates $T'\subseteq T$ , where $|T'|=k-1$.

\begin{examplebox}
\begin{align} \label{eq:buckets}
    \mathcal{B}_{T'} = \big\{~~&B_1~:\hspace{19mm} x\leq t_{T'[1]}[x],
    \\
    \nonumber &B_2~:\hspace{3.5mm} t_{T'[1]}[x]<x\leq t_{T'[2]}[x],
    \\
    \nonumber&\cdots,
    \\
    \nonumber& B_k~: t_{T'[k-1]}[x]<x\hspace{18mm}
    \big\}
\end{align}
\end{examplebox}

Theorem~\ref{th:cboundary} plays a key role in the design of our algorithm.  

\begin{theorem}\label{th:cboundary}
    Let $\mathcal{B}$ be an unbiased $k$-binning for the attribute $x$ of a dataset $\dee$ (Definition~\ref{def:unbiased}).
    The boundaries of $\mathcal{B}$ must only belong to the set of boundary candidates $T$.
\end{theorem}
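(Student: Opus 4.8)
The plan is to show that at every bucket boundary the cumulative group counts must exactly match the overall group ratios, which is precisely the condition defining membership in $T$. I would establish this for the binary case first, since the argument for the general case is identical applied coordinate-wise to each of the $\ell-1$ tracked ratios.

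First I would fix notation. Write the cumulative blue count $C_i = \big|\dee[1:i]\cap\Gee_1\big|$, so that $r_i = C_i/i$ and $p := r_n = C_n/n$ is the overall blue ratio. Since the tuples are sorted on $x$, an unbiased $k$-binning $\mathcal{B}$ corresponds to cumulative boundary indices $0 = i_0 < i_1 < \cdots < i_{k-1} < i_k = n$, where bucket $B_j$ consists of the tuples indexed $i_{j-1}+1,\dots,i_j$. Then the size of $B_j$ is $i_j - i_{j-1}$ and its blue count is $C_{i_j} - C_{i_{j-1}}$.

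The heart of the argument is a telescoping induction. Rewriting the unbiasedness condition of Definition~\ref{def:unbiased} for bucket $B_j$ gives $C_{i_j} - C_{i_{j-1}} = p\,(i_j - i_{j-1})$. Starting from the base case $C_{i_0} = C_0 = 0 = p\cdot i_0$, I would assume inductively that $C_{i_{j-1}} = p\,i_{j-1}$; substituting into the bucket identity yields $C_{i_j} = p\,i_{j-1} + p\,(i_j - i_{j-1}) = p\,i_j$. Hence $r_{i_j} = C_{i_j}/i_j = p = r_n$, so every boundary index $i_j$ for $j = 1,\dots,k-1$ lies in $T$ by definition. For the general $\ell$-group case the same induction runs independently for each group $l$ using its cumulative count and overall ratio $p_l = r_n^{(l)}$, forcing $r_{i_j}=p_l$ simultaneously for all $l$; since $T$ requires all these ratio equalities at once, each $i_j\in T$.

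I do not expect a genuine obstacle here: the result falls out cleanly once the cumulative-count reformulation is in place, and the telescoping is what converts the per-bucket ratio constraints into the per-prefix constraints that define $T$. The only point requiring care is the correspondence between value-boundaries and indices when $x$ has ties: if $t_{i}[x] = t_{i+1}[x]$, the two tuples cannot be placed in different buckets, so I would remark that each bucket boundary is unambiguously identified with the index of its last tuple, which is exactly the $i_j$ used above, keeping the index-based argument well-defined.
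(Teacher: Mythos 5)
Your proof is correct, and it takes a genuinely different route from the paper's. The paper argues necessity by contradiction, anchored at the \emph{right} end: it selects the \emph{last} boundary $i_j \notin T$, observes that the next boundary $i_{j+1}$ (or the dummy index $n$, which always lies in $T$ since $r_n = r_n$) must then belong to $T$, and computes the blue ratio of the bucket $\big(t_{i_j}[x], t_{i_{j+1}}[x]\big]$ as $r_n - \rho\, i_j/(i_{j+1}-i_j) \neq r_n$, where $\rho = r_{i_j} - r_n \neq 0$, contradicting unbiasedness. Your telescoping induction instead anchors at the \emph{left} end ($C_{i_0} = 0 = p\, i_0$) and propagates $C_{i_j} = p\, i_j$ forward through each unbiased bucket, directly forcing every boundary into $T$ without the contradiction scaffolding, the ``last offending boundary'' selection, or the dummy index; it also makes the tie-handling subtlety (identifying value boundaries with the index of the last tuple of each bucket) explicit, which the paper leaves implicit. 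Two further observations. First, the paper's proof additionally establishes the converse --- any $k-1$ boundaries drawn from $T$ yield an unbiased binning --- which the DP in Step 2 relies on; your cumulative-count identity delivers that direction for free (for $i < j$ both in $T$, $C_j - C_i = p\,j - p\,i = p\,(j-i)$, so the bucket $(i,j]$ has ratio exactly $p$), but you did not state it, so for the theorem as literally stated your argument is complete, while as a drop-in replacement for the paper's proof you would want to record that corollary. Second, your per-group, coordinate-wise extension to $\ell > 2$ groups matches the paper's definition of $T$ for non-binary grouping in Section~3.2.3, so no gap arises there.
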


\begin{proof}
    Let $m=|T|$.
    First, let us prove that any pair of the boundary candidates forms an unbiased bucket. That is, 
    \[
        \frac{|\{i'\in (i,j]~|~t_{i'}\in \Gee_{1}\}|}{j-i}=r_n, \quad \forall i,j\in T, i<j
    \]
    This is easy to prove, following the definition of boundary candidates:
    \[
    r_i = r_n \Rightarrow \big|\{\dee[1:i]\cap \Gee_{1}\}\big| = i\times r_n
    \]
    \begin{align*}
        \frac{|\{i'\in (i,j]~|~t_{i'}\in \Gee_{1}\}|}{j-i}  &= \frac{|\{\dee[1:j]\cap \Gee_{1}\}|-|\{\dee[1:i]\cap \Gee_{1}\}|}{j-i}\\
        &= \frac{r_n(j-i)}{j-i} = r_n 
    \end{align*}
    As a result, any selection of $k-1$ boundaries from the set of boundary candidates $T[1,\cdots,m-1]$ produces an unbiased binning of size $k$ (recall that $T[m]=n$).
    This proves that any binning $\mathcal{B}$ whose boundaries belong to $T$ is unbiased.

    Next, we prove that the boundaries of any unbiased binning must only belong to $T$. In other words, any binning with at least one boundary that does not belong to $T$ is biased.
    
    Following proof-by-contradiction, suppose there exists an unbiased binning $\mathcal{B}$ with at least one boundary that does not belong to $T$.
    Let $\{{i_1}, \cdots,{i_{k-1}}\}$ be the indices of the boundaries of $\mathcal{B}$.
    Let $i_j\notin T$ be the last index in the boundaries that do not belong to $T$.
    Consider the bucket $B$ with this boundary as its beginning. If this is the last boundary, i.e., $j=k-1$, let us define the dummy index $i_{j+1}=n$. The bucket $B$ is then $\big(t_{i_j}[x], t_{i_{j+1}}[x]\big]$. Note that $i_{j+1}\in T$ since $i_j$ is the last boundary that does not belong to $T$. Also, recall that $T[m]=n$.
    In the following, we show that $B$ is biased, contradicting with the assumption that $\mathcal{B}$ is unbiased.

    Let $\rho = r_{i_j}-r_n$. Then $r_{i_j} = \rho+r_n$.
    Since $i_j\notin T$, $\rho \neq 0$. The ratio of group $\gee_1$ in $B$ is:
    \begin{align*}
        \frac{|\{i\in (i_j,i_{j+1}]~|~t_i\in \Gee_{1}\}|}{i_{j+1}-i_j}= \frac{r_{j+1}-r_{i_j}}{i_{j+1}-i_j}
        &= \frac{r_n\times i_{j+1} - (r_n+\rho)i_j}{i_{j+1}-i_j}\\
        &= r_n - \frac{\rho\times i_j}{i_{j+1}-i_j} \neq r_n
    \end{align*}
\end{proof}

\subsection{Step 2: Dynamic Programming}

\begin{figure}[tb!]
    \centering
    \begin{tikzpicture}[scale=.4\textwidth/5cm]

    \definecolor{coolblue}{RGB}{160,200,255}
    \definecolor{warmyellow}{RGB}{255,225,140}

    \foreach \i in {0,...,4} {
        \ifnum\i<4
            \fill[coolblue] (\i,0) rectangle ++(1,1/3);
        \else
            \fill[warmyellow] (\i,0) rectangle ++(1,1/3);
        \fi
        \draw (\i,0) rectangle ++(1,1/3);
    }

    \node at (0,-.2) {$0$};
    \node at (1,-.2) {$1$};
    \node at (2,-.2) {$2$};
    \node at (2.5,-.2) {$\cdots$};
    \node at (3,-.2) {$T[i]{-}1$};
    \node at (4,-.2) {$T[i]$};
    \node at (4.5,-.2) {$\cdots$};
    \node at (5,-.2) {$T[j]$};

    \draw[<->, thick] (0.1,0.5) -- (3.98,0.5);
    \node at (2,0.7) {OPT$(i, \kappa-1)$};

    \draw[<->, thick] (4,0.5) -- (5,0.5);
    \node at (4.5,0.7) {$B[\kappa]$};

    \draw[->, thick]
        (3.2,-.45) .. controls (3.7,-.46) .. (3.9,-0.3);
    \node at (1.6,-.45) {selecting $T[i]$ as the last boundary};

\end{tikzpicture}
    \caption{illustration of selecting $T[i]$ as the last boundary of the subproblem $\opt(j,\kappa)$ for unbiased binning.}
  \label{fig:dp-unbiased}
\end{figure}

Based on Theorem~\ref{th:cboundary}, the optimal solution for Problem~\ref{problem} is a subset of $k-1$ boundary candidates that minimizes the maximum bucket-size difference between the buckets.
Following this observation, we develop our dynamic programming (DP) algorithm as follows.

\stitle{The DP formulation}
We define a recursive function \(\mathrm{OPT}(j, \kappa)\) to represent the optimal solution for the subproblem partitioning $\{t_1,t_2,\cdots,t_{T[j]}\}$, into $\kappa$ unbiased buckets.
\(\opt(j, \kappa)\) returns the maximum width ($w^\uparrow$) and the minimum width ($w^\downarrow$) of the buckets for the optimal solution of this subproblem.
Following this formulation, the optimal solution for our problem is $\opt(m,k)$.

\stitle{The boundary condition} 
The number of buckets for any valid subproblem is at most $\kappa\leq j$.
When $\kappa=1$, the optimal solution for $\opt(j,1)$ generates the bucket $B_1: x\leq t_{T[j]}[x]$, where the width of the bucket is $|B_1\cap \dee| = T[j]$.
Hence,
\[
\opt(j,1) = (T[j],T[j]) \quad \text{\small \tt // $w^\uparrow=w^\downarrow=T[j]$}
\]

\subsubsection{Recursive formula (general case)}
At any step $j\in[n]$, the optimal solution for $\opt(j,\kappa)$ should decide 
where to put {\em the boundary of the last bucket}, i.e., $B_\kappa$.
According to Theorem~\ref{th:cboundary}, it must select one of $\{T[1],\cdots,T[j-1]\}$ as the boundary of the last bucket $B_\kappa$ since choosing any other index $i'\notin T$ will make $B_\kappa$ biased. 

\stitle{Selecting $T[i]$ (Figure~\ref{fig:dp-unbiased})}
selecting $T[i]$, where $i<j$, is equivalent of forming a new bucket $B_\kappa = \big(t_{T[i]}[x],t_{T[j]}[x]\big]$. Then, the subproblem to solve is $\opt(i,\kappa-1)$: to form $\kappa-1$ bins, with a new end index $i$.
Let $(wp^\uparrow,wp^\downarrow) = \opt(i,\kappa-1)$.
Then after adding $B_\kappa$,
\(w_{i}^\uparrow = \max (wp^\uparrow, T[j]-T[i])\) and
\(w_{i}^\downarrow = \min (wp^\downarrow, T[j]-T[i])\), and the objective value (the max bucket-size difference) is
\(
obj_{i} = w_{i}^\uparrow - w_{i}^\downarrow
\).

\stitle{Choosing between the valid options}
Between various valid options, the optimal solution selects the one with the smallest objective value. As a result, 
\begin{align}\label{eq:dp}
\opt(j,\kappa) = (w^\downarrow_{i^*},w^\uparrow_{i^*}) \text{, where } i^*=\argmin\limits_{i\in [j-1]} {w_{i}^\uparrow - w_{i}^\downarrow}
\end{align}

\begin{algorithm}[t]
\caption{Unbiased Binning}
\label{alg:step2}
\begin{algorithmic}[1]
\Require The dataset $\dee$, the attribute x, the value $k$.
\Ensure The optimal unbiased $k$-binning.
    \State $\{t_1,\cdots,t_n\}\gets$ {\sc sort}$(\dee,x)$ \Comment{\small sort $\dee$ on $x$}
    \State $T\gets$ {\sc BCandidates}$(\{t_1,\cdots,t_n\})$  \Comment{\small \technicalreport{Algorithm~\ref{alg:step1}}\submit{\textcolor{blue}{details in the technical report~\cite{techrep}}}}
    \State $m\gets |T|$
    \If{$m<k$}
        \State {\bf Return} infeasible
    \EndIf
    \State $M\gets [~]_{m\times k}$\Comment{\small initialize $M$}
    \For{$j\gets 1$ to $m$} \Comment{\small the boundary condition}
        \State $M[j,1] = \big\langle (T[j],T[j])~,~ 0~ \big\rangle$ 
    \EndFor
    \Statex {\small /*Filling the matrix $M$*/}
    \For{$\kappa\gets 2$ to $k$}
    \For{$j\gets \kappa$ to $m$}
        \State $\langle (w^\downarrow,w^\uparrow),i^*\rangle\gets \langle(0,\infty),null \rangle$
        \For{$i\gets 1$ to $j-1$}
            \Statex \hspace{9mm}{\small /*evaluating the selection $T[i]$ for the last bucket*/}
            \State $\langle (wp^\downarrow,wp^\uparrow),\square \rangle\gets M[i,\kappa-1]$
            \State $w_{i}^\uparrow = \max (wp^\uparrow, T[j]-T[i])$
            \State $w_{i}^\downarrow = \min (wp^\downarrow, T[j]-T[i])$
            \If{$(w_i^\uparrow - w_i^\downarrow)<(w^\uparrow - w^\downarrow)$}
                \State $\langle (w^\downarrow,w^\uparrow),i^*\rangle\gets \langle(w_i^\downarrow,w_i^\uparrow),i \rangle$
            \EndIf
        \EndFor
        $M[j,\kappa]\gets \langle (w^\downarrow,w^\uparrow),i^*\rangle$
    \EndFor
    \EndFor
    
    \State {\bf Return} {\sc TraceBack}$(M)$\Comment{\small \technicalreport{Algorithm~\ref{alg:traceback}}\submit{\textcolor{blue}{details in the technical report~\cite{techrep}}}}
\end{algorithmic}
\end{algorithm}

\subsubsection{Efficient implementation}
The pseudo-code of our algorithm for solving Problem~\ref{problem} is provided in Algorithm~\ref{alg:step2}.
Following a matrix-filling approach for the implementation of our dynamic programming formulation, we define the matrix $M$ of size $m \times k$.
Each cell $M[j,\kappa]$ contains two components: the pair $(w^\uparrow,w^\downarrow) = OPT\left(j,\kappa \right)$ and the left boundary $i^*$ of the last bucket -- the index $i^*$ that minimizes the objective value in Equation~\ref{eq:dp}.

Using the boundary condition, we fill the first column of the matrix as 
\[
M[j,1] = \big\langle (T[j],T[j])~,~ 0~ \text{\tt\small /*choosing $B_1:x\leq t_{T[j]}[x]$*/ } \big\rangle
\]
Next, we fill the matrix using Equation~\ref{eq:dp}.
Once $M$ is filled, we apply a backward tracing of the corresponding cells for the optimal solution to identify the generated buckets $\mathcal{B} = \{B_1, \cdots, B_k\}$.

\technicalreport{
\begin{algorithm}[t]
\caption{Trace back}
\label{alg:traceback}
\begin{algorithmic}[1]
\Require The matrix $M$ and the boundary candidates $T$.
\Ensure The selected boundaries.
\Function{\sc TraceBack}{$M,T$}
    \State $S\gets [~]$ \Comment{\small initialize the output}
    \State $(m,k)\gets shape(M)$
    \State $j\gets m; \kappa \gets k$
    \While{$\kappa>0$}
        \State $\langle (w^\uparrow,w^\downarrow),i \rangle \gets M[j,\kappa]$
        \State $S.push(t_{T[i]}[x])$ \Comment{\small add $t_{T[i]}[x]$ to the beginning of $S$}
        \State $j\gets i$; $\kappa \gets \kappa-1$
    \EndWhile
    \State {\bf Return} $~S$
\EndFunction
\end{algorithmic}
\end{algorithm}
}

\begin{lemma}\label{lem:unbiasedcomplexity}
    The unbiased binning algorithm (Algorithm~\ref{alg:step2}) has a space complexity of $O(mk)$ and time complexity of $O(n\log n + m^2k)$.
\end{lemma}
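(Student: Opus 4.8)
The plan is to decompose Algorithm~\ref{alg:step2} into its four sequential stages and bound the time and space of each independently, then sum. The stages are: (i) sorting $\dee$ on $x$, (ii) computing the boundary candidates via {\sc BCandidates} (Algorithm~\ref{alg:step1}), (iii) filling the DP matrix $M$, and (iv) the backward trace {\sc TraceBack} (Algorithm~\ref{alg:traceback}). Since the stages run one after another, the total time is their sum and the total space is their maximum.

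For the time bound, sorting costs $O(n\log n)$. {\sc BCandidates} makes two linear passes over the $n$ sorted tuples, doing $O(\ell)$ work per tuple; because $\ell$ is a small constant this is $O(n)$. The dominant term comes from filling $M$, which is a triple-nested loop: the outer loop over $\kappa$ runs $k-1$ times, the middle loop over $j$ runs at most $m$ times, and the inner loop over $i$ runs at most $j-1\le m-1$ times, with only $O(1)$ work per inner iteration (a constant number of $\max$/$\min$ and comparison operations reading the single cell $M[i,\kappa-1]$). This yields $O(m^2k)$. {\sc TraceBack} decrements $\kappa$ from $k$ down to $0$ with $O(1)$ work per step, giving $O(k)$. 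Summing, $O(n\log n + n + m^2k + k) = O(n\log n + m^2k)$, as claimed.

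For the space bound, the matrix $M$ has $m\times k$ cells, each storing a constant-size payload (the pair $(w^\uparrow,w^\downarrow)$ together with the index $i^*$), hence $O(mk)$. The boundary-candidate array $T$ uses $O(m)\subseteq O(mk)$ space, and {\sc TraceBack} needs only $O(k)$ for its output stack. Treating the sorted input as given, the auxiliary working space is therefore dominated by $M$ at $O(mk)$.

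The analysis is essentially routine, so I do not expect a genuine obstacle; the one point requiring care is the space claim. A literal reading of {\sc BCandidates} allocates the $n\times(\ell-1)$ ratio array $r$, which would inflate the bound to $O(n+mk)$. The main thing to get right is to observe that this array is unnecessary: the overall ratio $r_n$ can be obtained in a first streaming pass without retaining the intermediate ratios, after which a second pass collects the candidates using only $O(m)$ space, so the genuinely required auxiliary space matches the stated $O(mk)$. For the time bound, the only subtlety is bounding the inner loop by $m$ (rather than the loosely varying $j$) and folding the per-tuple $O(\ell)$ cost into a constant.
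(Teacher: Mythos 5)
Your proof is correct and follows essentially the same stage-by-stage decomposition as the paper's appendix proof: $O(n\log n)$ for sorting, $O(n)$ for {\sc BCandidates}, $O(m^2k)$ for filling $M$, $O(k)$ for {\sc TraceBack}, and space dominated by the $m\times k$ matrix. In fact you go slightly beyond the paper by noticing that a literal reading of Algorithm~\ref{alg:step1} allocates the $n\times(\ell-1)$ ratio array $r$ (which would make the space $O(n+mk)$) and by explaining how streaming removes it --- a point the paper's own proof silently glosses over.
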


\begin{proof}
{\em Space complexity:}
The algorithm stores (and fills) the matrix $M$ of size $m\times k$, hence having the space complexity of $O(mk)$.

\vspace{2mm}\noindent{\em Time complexity:}
\begin{itemize}[leftmargin=*]
    \item Sorting $\dee$ based on the attribute $x$ is done in $O(n\log n)$.
    \item The function {\sc BCandidates} (Algorithm~\ref{alg:step1}) makes two linear passes over $\dee$ and, hence, is in $O(n)$.
    \item Iterating through the indices $i<j$ to fill each cell of matrix $M[j,\kappa]$ is in $O(m)$. Therefore, filling the matrix $M$ is done in $O(m^2k)$.
    \item Finally, the {\sc TraceBack} function (Algorithm~\ref{alg:traceback} is in $O(k)$.
\end{itemize}
Therefore, putting all steps together, Algorithm~\ref{alg:step2} has the time complexity of $O(n\log n + m^2 k)$.
\end{proof}

\subsubsection{Extension beyond binary grouping}\label{sec:extension}

The extension of our algorithm beyond binary grouping is straightforward and only requires updating step 1.
Recall that in step 1, our algorithm identifies a set of boundary candidates. Then in step 2, it selects a subset of $k-1$ boundaries to minimize the maximum size difference between the buckets.

When $\ell=|\Gee|\geq 2$, we need to compute the group ratios for all but the last group~\footnote{When a binning is unbiased for the first $\ell-1$ groups, it is also unbiased for $\gee_\ell$.}. That is,
\[
r^{(l)}_i = \frac{1}{i}\big|~\dee[1:i]\cap \Gee_l\big|\, ,\quad \forall l\in [\ell-1], i\in [n]
\]

The boundary candidates, then, are the indices for which the overall ratios are equal to the group ratios:
\[
T = \{i\in [n]~|~ r^{(l)}_i = r^{(l)}_n,~\forall l\in [\ell-1]\}
\]

\section{$\eps$-biased Binning}
Problem~\ref{problem} aims at finding an unbiased binning.
This requirement, however, may be too restrictive in practice, particularly (a) when the expected values of $x$ largely vary for different groups, and (b) for non-binary groups, where only a small boundary candidates set may satisfy the overall ratios for all groups.
On the other hand, a small deviation from the unbiased binning may be acceptable in such settings.

Using Definition~\ref{def:unbiased}, we measure the bias $\beta$ of a bucket $B_j$ with respect to a group $\gee_l$ as the
difference in the ratio of $\gee_l$ in $B_j$ compared to its overall ratio\footnote{Alternatively, one could measure the bias using division, instead of subtraction. That is, for a bucket $B_j$ and group $\gee_l$,\\ $\beta^{div}_\dee(B_j,\gee_l) = 1-\big(\max(\frac{|\Gee_l\cap B_j|}{|\dee\cap B_j|}, \frac{|\Gee_l|}{n})/\min(\frac{|\Gee_l\cap B_j|}{|\dee\cap B_j|}, \frac{|\Gee_l|}{n})\big)$. Our algorithms are agnostic to the choice bias measure, and can use $\beta^{div}$ instead of Equation~\ref{eq:bias_bucket}.}. That is,
\begin{align}\label{eq:bias_bucket}
    \beta_\dee(B_j,\gee_l) = \left\vert
    \frac{|\Gee_l\cap B_j|}{|\dee\cap B_j|} - \frac{|\Gee_l|}{n}
    \right\vert
\end{align}

The bias of a binning $\mathcal{B}$ with respect to a group $\gee_l$ is the maximum bias of its buckets for $\gee_l$, and the overall bias of $\mathcal{B}$ is measured as the maximum of its bias across different groups. That is,
\begin{align}\label{eq:bias}
    \beta_\dee(\mathcal{B}) = \max_{l\in[\ell]}  \max_{j\in[k]} \beta_\dee(B_j,\gee_l)
\end{align}

\begin{defbox}
\begin{definition}[$\eps$-biased Binning]\label{def:epsbiased}
    Given a dataset $\dee$, a binning $\mathcal{B}=\{B_i\}_{i=1}^k$ of an attribute $x$ is $\eps$-biased, if \(\beta_\dee(\mathcal{B})\leq \eps\).
\end{definition}
\end{defbox}

Following this definition, we 
extend Problem~\ref{problem} to find an $\eps$-biased $k$-binning, instead of an unbiased one\footnote{Section~\ref{sec:discussion} provides guidelines for selecting of $\eps$.}.

\begin{visionbox}
\small\it
\noindent{\sc Example~\ref{ex-binning} (continued).}
Let us continue with the toy example in Figure~\ref{fig:step1} to further explain the notion of $\eps$-biased binning.
In practice, an unbiased binning may not exist or may incur a high price of fairness.
In this example, the unbiased solution produces imbalanced bucket sizes (e.g., $|B_1|=6$ vs.\ $|B_2|=2$).
Allowing a bounded deviation, $\epsilon$-biased binning enables more balanced partitions.
For instance, with $\eps=0.17$, a valid $\eps$-biased binning is:
\[
B_1 = \{1,\ldots,5\},\;
B_2 = \{6,7,8\},\;
B_3 = \{9,10,11,12\},\;
B_4 = \{13,14,15,16\},
\]
which yields bucket sizes within $4\pm1$.
\end{visionbox}

The solution proposed for unbiased binning does not directly extend to $\eps$-biased binning, mainly since Theorem~\ref{th:cboundary} no longer holds for $\eps$-biased binning.
Therefore, we introduce new ideas that enable the development of $\DP$ for this setting.
The details of the dynamic programming algorithm are provided in Appendix~\ref{sec:dp2}.

\begin{lemma}\label{lem:ebiased-complexity}
    The DP algorithm for $\eps$-biased binning has a time complexity of $O(n^2k)$ and space complexity of $O(n^2)$.
\end{lemma}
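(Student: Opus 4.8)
```latex
The plan is to prove the two complexity claims of Lemma~\ref{lem:ebiased-complexity} by analyzing the three distinct phases of Algorithm~\ref{alg:ebiased} separately: (i) constructing the table $\tee$, (ii) filling the DP matrix $M$, and (iii) tracing back the solution. The final bounds will be the maximum over these phases, and I expect the DP matrix-filling phase to dominate both the time and space costs.

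\textbf{Time complexity.} First I would analyze the construction of $\tee$. The table is filled by a double loop over pairs $i<j$, giving $O(n^2)$ cells. The key observation, which I would state explicitly, is that each entry $\tee[i,j]$ via Equation~\ref{eq:ebiasreq} can be computed in $O(\ell)=O(1)$ amortized time (since $\ell$ is a small constant) by precomputing prefix counts $|\dee[1{:}i]\cap \Gee_l|$ for each group $l\in[\ell-1]$ in a single $O(n\ell)$ pass; then the group count inside any bucket $(i,j]$ is a difference of two prefix counts, evaluated in $O(1)$. Thus building $\tee$ costs $O(n^2)$. Next I would analyze the matrix-filling phase: the outer loops over $\kappa\in[2,k]$ and $j\in[k,n]$ contribute $O(nk)$ iterations, and the innermost loop over $i\in[1,j-1]$ contributes up to $O(n)$ work per cell, each inner step being $O(1)$ (a constant-size comparison and update of the $(w^\downarrow,w^\uparrow)$ pair). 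This yields $O(n^2 k)$. Finally, the trace-back is a single \texttt{while} loop that decrements $\kappa$ from $k$ to $0$, performing $O(1)$ work per iteration, for a total of $O(k)$. Summing the three phases gives $O(n^2 + n^2k + k) = O(n^2 k)$, establishing the time bound.

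\textbf{Space complexity.} The dominant storage is the table $\tee$, which holds one bit per pair $i<j$ and therefore occupies $O(n^2)$ space. The DP matrix $M$ has $n\times k$ cells, each storing a constant amount of information (the width pair and the back-pointer $i^*$), for $O(nk)$ space; since $k\ll n$, this is subsumed by $O(n^2)$. The prefix-count arrays require $O(n\ell)=O(n)$ space, and the trace-back output $S$ uses $O(k)$. Hence the overall space complexity is $O(n^2)$, as claimed.

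\textbf{Main obstacle.} The only subtle point — and the step I would treat most carefully — is the claim that each $\tee[i,j]$ is computable in $O(1)$ rather than $O(j-i)$; a naive reading of Equation~\ref{eq:ebiasreq} recomputes a group count from scratch and would inflate the table-building cost to $O(n^3)$, dominating the stated bound. I would therefore make the prefix-sum preprocessing explicit and note that it is precisely what keeps $\tee$-construction at $O(n^2)$, so that the matrix-filling phase remains the true bottleneck. Everything else is a routine accounting of nested loops.
```
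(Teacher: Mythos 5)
Your proposal is correct and follows essentially the same phase-by-phase accounting as the paper's proof: $O(n^2)$ to build $\tee$, $O(n^2k)$ to fill $M$ with $O(n)$ work per cell, $O(k)$ for the trace-back, and $\tee$ dominating the space. Your explicit prefix-count argument for computing each $\tee[i,j]$ in $O(1)$ is a detail the paper leaves implicit (it simply asserts $\tee$ is filled in $O(n^2)$ for constant $\ell$), and making it explicit is a sound clarification rather than a different approach; note only that you omit the $O(n\log n)$ sorting step, which is harmless since it is subsumed by $O(n^2k)$.
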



\subsection{A Practically Scalable Algorithm based on Local Search (LS)}\label{sec:localsearch}
The $\DP$ algorithm for $\eps$-biased binning has a quadratic time and space complexity, which prevents it from scaling to very large values of $n$.
Therefore, we propose a local search algorithm ($\LS$) as our practical approach for the large-scale settings.

At a high level, the $\LS$ algorithm follows two steps. Step 1 is an efficient divide-and-conquer ($\DnC$) algorithm that quickly finds a valid solution for the problem (if one exists). While $\DnC$ finds an $\eps$-biased binning, it may not be optimal, i.e., its solution may not minimize the maximum width difference between the buckets.
In other words, Step 1 quickly finds a valid solution for the problem.
Next, using the objective value of the solution found during Step 1 as an upper bound, Step 2 of the $\LS$ algorithm applies a local search around the equal-size binning to find the optimal solution.

Unlike our dynamic programming algorithms, the {\em $\DnC$ and $\LS$ algorithms are not limited to equal-size binning}, as they can be applied as a post-processing step on the output of any (black-box) bucketization approach.


\begin{figure}[!tb]
    \centering
    \begin{tikzpicture}
    \definecolor{highlight}{RGB}{255,220,120}



    \node at (.7,3.7) {$0$};
    \node at (6.7,3.7) {$n$};
    \node at (3.7,3.7) {$\mathsf{D\&C}(0,n,k,\eps)$};

    \fill[highlight] (3.6,2.5) rectangle ++(.2,1);
    \draw[dashed] (3.7,2.5) -- (3.7,3.5);
    \draw[thick] (3.8,2.5) -- (3.8,3.5);
    \draw (0.7,2.5) rectangle ++(6,1);

    \draw[dashed] (.7,2.5) -- (0,2);
    \draw[dashed] (3.8,2.5) -- (3,2);
    \draw[dashed] (3.8,2.5) -- (4.5,2);
    \draw[dashed] (6.7,2.5) -- (7.5,2);

    \node at (3.8,2.3) {$i_1$};
    \node at (1.8,2.3) {$\mathsf{D\&C}(0,i_1,\lceil \frac{k}{2}\rceil,\eps)$};
    \node at (5.7,2.3) {$\mathsf{D\&C}(i_1,n,\lfloor \frac{k}{2}\rfloor,\eps)$};

    \fill[highlight] (1.35,1) rectangle ++(.3,1);
    \draw[dashed] (1.5,1) -- (1.5,2);
    \draw[thick] (1.65,1) -- (1.65,2);
    \node at (1.65,.8) {$i_2$};
    \draw (0,1) rectangle ++(3,1);

    \draw[dashed] (1.15,.5) -- (1.65,1);
    \draw[dashed] (-0.5,.5) -- (0,1);
    \draw (-0.5,-.5) rectangle ++(1.65,1);
    \node at (0.3, -0.9) [] {\large $\cdots$};

    \draw[dashed] (3.5,.5) -- (3,1);
    \draw[dashed] (2.15,.5) -- (1.65,1);
    \draw (2.15,-.5) rectangle ++(1.35,1);
    \node at (2.85, -0.9) [] {\large $\cdots$};

    \fill[highlight] (5.8,1) rectangle ++(.4,1);
    \draw[dashed] (6,1) -- (6,2);
    \draw[thick] (5.8,1) -- (5.8,2);
    \node at (5.8,.8) {$i_3$};
    \draw (4.5,1) rectangle ++(3,1);

    \draw[dashed] (5.3,.5) -- (5.8,1);
    \draw[dashed] (4,.5) -- (4.5,1);
    \draw (4,-.5) rectangle ++(1.3,1);
    \node at (4.8, -0.9) [] {\large $\cdots$};

    \draw[dashed] (8,.5) -- (7.5,1);
    \draw[dashed] (6.3,.5) -- (5.8,1);
    \draw (6.3,-.5) rectangle ++(1.7,1);
    \node at (7.1, -0.9) [] {\large $\cdots$};

    \draw[<->, dashed] (-.6,-.9) -- (-.6,3.6) node[midway, right] {\rotatebox{-90}{$\log k$}};
\end{tikzpicture}
    \caption{Illustration of the $\DnC$ algorithm for finding a near-optimal $\eps$-biased $k$-binning.}
    \label{fig:dnc}
\end{figure}

\subsubsection{Step 1: the $\DnC$ algorithm for finding a near-optimal solution}
The key idea of our practical algorithm is to quickly identify $k-1$ boundaries that form a near-optimal $\eps$-biased $k$-binning, and use it to establish an upper bound for the local-search step.

We develop a divide-and-conquer algorithm $\DnC$ (Algorithm~\ref{alg:dnc}) that works by recursively dividing its search space.
Specifically, assuming the data is sorted on $x$, $\DnC(l,h,\kappa,\eps)$ partitions the tuples $\dee_{l,h}=\{t_{l+1},\cdots,t_h\}$ into {\em two $\eps$-biased super buckets}.
To do so, it initiates the boundary at the index of the $\lceil \frac{\kappa}{2}\rceil$-th boundary of 
a given $\kappa$-binning of $\dee_{l,h}$.
Particularly, for equal-size $\kappa$-binning, the $\lceil \frac{\kappa}{2}\rceil$-th boundary is
\(i=l+\lceil \frac{\kappa}{2}\rceil\times \frac{h-l}{\kappa}\).
The vertical dashed lines in Figure~\ref{fig:dnc} show the initial boundaries. The top row in the figure depicts the first step of the algorithm, where $\DnC(0,n,k,\eps)$ is divided into two super buckets.

If the two super buckets based on the initial boundary do not satisfy the $\eps$-bias requirement, the algorithm checks if varying $i$ by $\pm 1$ resolves the bias issue. If not, it keeps increasing the exploration radius (the highlighted yellow regions in Figure~\ref{fig:dnc}) until an $\eps$-biased boundary is discovered. Theorem~\ref{th:dnc} shows that if such a boundary does not exist, the problem is infeasible -- i.e., there is no $\eps$-biased $k$-binning for the given bias threshold.

\begin{algorithm}[t]
\caption{the $\DnC$ algorithm for finding a near-optimal $\eps$-biased $k$-binning}
\label{alg:dnc}
\begin{algorithmic}[1]
\Require The tuples $\{t_1,\cdots,t_n\}$ sorted on $x$, group-counts $cnt$.
\Ensure the bucket boundaries.
\Function{$\DnC$}{$l,h,\kappa,\eps, cnt$} \Comment{$cnt_{i,l} = |\Gee_l\cap \{t_{1},\cdots,t_i\}|$}
    \If{$\kappa=1$} 
        {\bf Return} $\{~\}$ 
    \EndIf
    \State $i\gets \lceil\frac{\kappa}{2}\rceil$-th boundary of the $\kappa$-binning of $\dee_{l,h}$ 
    \Comment{$l+ \lceil \frac{\kappa}{2}\rceil\times \frac{h-l}{\kappa}$ for equal-size binning}
    \State found$\gets$ true {\bf if} {\sc is-$\eps$-biased}$(l,i,h,cnt)$ {\bf else} false
    \State $j\gets 1$
    \While{found=false AND $j<\min(i-l,h-i)$}
        \If{{\sc is-$\eps$-biased}$(l,i+j,h,cnt)$}
            \State $i\gets i+j$; found$\gets$true
        \ElsIf{{\sc is-$\eps$-biased}$(l,i-j,h,cnt)$}
            \State $i\gets i-j$; found$\gets$true
        \Else $~j\gets j+1$
        \EndIf
    \EndWhile
    \If{found=false}
        {\bf Return} infeasible
    \EndIf
    \State $S_l\gets \DnC(l,i,\lceil \frac{\kappa}{2}\rceil,\eps,cnt)$
    \State $S_h\gets \DnC(i,h,\lfloor \frac{\kappa}{2}\rfloor,\eps,cnt)$
    \State {\bf Return} $S_l\cup \{i\}\cup S_h$
\EndFunction
\Statex
\Function{is-$\eps$-biased}{$l,i,h,cnt$}
    \State $e_l \gets \max\limits_{j\in[\ell]}\left\vert \frac{cnt_{i,j} - cnt_{l,j}}{i-l} - \frac{|\Gee_j|}{n} \right\vert$
    \State $e_h \gets \max\limits_{j\in[\ell]}\left\vert \frac{cnt_{h,j} - cnt_{i,j}}{h-i} - \frac{|\Gee_j|}{n} \right\vert$
    \State {\bf Return} $(e_l\leq \eps)$ AND $(e_h\leq \eps)$
\EndFunction
\end{algorithmic}
\end{algorithm}

As illustrated in Figure~\ref{fig:dnc}, after partitioning the search space $\dee_{l,h}$ into two super buckets, the algorithm recursively solves the subproblems $\DnC(l,i,\lceil \frac{\kappa}{2}\rceil,\eps)$ and $\DnC(i,h,\lfloor \frac{\kappa}{2}\rfloor,\eps)$ and continues as long as $\kappa>1$.

\begin{theorem}\label{th:dnc}
    The algorithm $\DnC$ (Algorithm~\ref{alg:dnc}) finds an $\eps$-biased binning unless none exists.
\end{theorem}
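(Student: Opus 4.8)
The plan is to prove the two directions implicit in the statement: \emph{soundness} (whenever $\DnC$ returns a set of boundaries, they induce a genuine $\eps$-biased $k$-binning) and \emph{completeness} (if some $\eps$-biased $k$-binning exists, $\DnC$ never returns infeasible). The single tool underlying everything is an averaging/convexity lemma: if two adjacent buckets are each $\eps$-biased, their union is $\eps$-biased, because for every group $\gee_l$ the ratio of the union is a tuple-count-weighted average of the two ratios, and a weighted average of values lying in $[\rho_l-\eps,\rho_l+\eps]$ stays in that interval (here $\rho_l=|\Gee_l|/n$). Two consequences follow at once. First, \emph{downward monotonicity}: if a range admits a partition into $\kappa$ $\eps$-biased buckets then, by merging adjacent buckets, it admits one into any $\kappa'\le\kappa$ such buckets; in particular every $\kappa$-feasible range is itself $\eps$-biased as a single bucket. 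Second, \emph{existence of a balanced split}: given a witnessing $\kappa$-partition with boundaries $b_0<\cdots<b_\kappa$, the cut $b_{\lceil\kappa/2\rceil}$ splits the range into the super-buckets $[l,b_{\lceil\kappa/2\rceil}]$ and $[b_{\lceil\kappa/2\rceil},h]$, each a union of $\eps$-biased buckets and hence $\eps$-biased, so \textsc{is-$\eps$-biased} would accept it.

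For soundness I would show that a successful run returns exactly $k-1$ increasing cut points (an easy induction using $\lceil\kappa/2\rceil+\lfloor\kappa/2\rfloor=\kappa$ and $l<i<h$), and that each of the $k$ resulting final buckets is $\eps$-biased. The latter is the key observation: a leaf call $\DnC(l,h,1,\cdot)$ returns without testing $[l,h]$, but such a leaf is reached only as a child of a parent whose \textsc{is-$\eps$-biased}$(l',i,h')$ test certified precisely this range as $\eps$-biased; the only untested range is the root's whole dataset, whose bias is $0$. Hence every final bucket meets the threshold and the output is a valid $\eps$-biased $k$-binning.

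Completeness is where the real work lies, and I would prove it by induction on recursion depth, with the statement: if $[l,h]$ is $\kappa$-feasible then $\DnC(l,h,\kappa,\eps)$ does not fail. The step has two parts. (i) The while-loop finds \emph{some} accepting super-split: at least one exists by the balanced-split consequence above, and I would verify that the equal-size start $i_0=l+\lceil\kappa/2\rceil\frac{h-l}{\kappa}$ is centred enough that the expanding window $j<\min(i_0-l,h-i_0)$ actually reaches an accepting cut. (ii) The cut $i$ the greedy commits to yields subproblems that are $\lceil\kappa/2\rceil$- and $\lfloor\kappa/2\rfloor$-feasible, so the inductive hypothesis applies to both children.

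Claim (ii) is the main obstacle, and it is genuinely delicate. The greedy commits to the accepting cut nearest $i_0$, which need not be the witness midpoint $b_{\lceil\kappa/2\rceil}$; and feasibility for a \emph{fixed} number of buckets is rigid---it is only downward-monotone in the count, whereas a sub-bucket or a set-difference of two $\eps$-biased ranges need not be $\eps$-biased, so one cannot freely re-slice the chosen super-buckets to hit the required counts. My plan to close this is an exchange/sliding lemma: starting from a witnessing $\kappa$-partition, slide its $\lceil\kappa/2\rceil$-th boundary toward the committed $i$ one tuple at a time, showing each intermediate position is realised by a valid $\kappa$-partition (repairing the two buckets adjacent to the moving boundary via the averaging lemma) and that the positions reachable this way contain every accepting cut at which the greedy can stop. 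Pinning down exactly which accepting cuts are ``slideable,'' and confirming that the nearest-to-centre choice always lands among them, is the crux on which the whole theorem rests; the averaging lemma and downward monotonicity are the levers I would push with, but this step is the one I expect to demand the most careful case analysis.
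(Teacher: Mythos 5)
Your completed core is, in substance, the paper's entire proof. The paper establishes Theorem~\ref{th:dnc} by (a) a merge lemma --- two adjacent $\eps$-biased buckets merge into an $\eps$-biased super bucket --- proved there by a two-case analysis on whether the two bucket ratios lie on the same or opposite sides of the overall ratio; (b) chain merging by iterating (a); and (c) the contradiction: take a witnessing $\kappa$-bucket partition of the current range, merge its first $\lceil\kappa/2\rceil$ and its last $\lfloor\kappa/2\rfloor$ buckets, and observe that the resulting boundary is a valid two-way super-split, contradicting the failure of the while loop. Your averaging/convexity lemma is exactly (a) with a cleaner one-line proof (a count-weighted average of ratios lying in $[\rho_l-\eps,\rho_l+\eps]$ stays in that interval, which subsumes both of the paper's cases), and your ``balanced split'' consequence is (c) verbatim.

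Where you diverge is instructive: the two steps you flag as the real work are simply not addressed in the paper. The paper's proof is the one-level contradiction (c) and stops there. It does not verify your point (i), that the symmetric exploration window $j<\min(i_0-l,\,h-i_0)$ actually reaches an accepting cut --- and indeed for odd $\kappa$ the start $i_0$ sits right of center, so the witness boundary $b_{\lceil\kappa/2\rceil}$ can fall outside the window when the witness's left buckets are small. And it does not address your point (ii) at all: after the algorithm commits to a cut $i$ that may differ from any witness boundary, nothing in the paper shows the two children are $\lceil\kappa/2\rceil$- and $\lfloor\kappa/2\rfloor$-feasible, which is what a recursion-depth induction needs. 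Your worry there is well founded, since an $\eps$-biased range need not split into two $\eps$-biased sub-ranges (a blue tuple followed by a red one is unbiased as a pair but maximally biased as singletons), so child feasibility genuinely does not follow from the committed super-split being $\eps$-biased. In short: restricted to its completed parts, your proposal is the paper's proof with a tidier merge lemma; the sliding/exchange crux you could not close is a gap the paper's own proof leaves open rather than something you failed to reproduce.
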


\begin{proof}
    We provide the proof by contradiction.
    Suppose there exists a valid $\eps$-biased binning $\mathcal{B}$ that partitions $\dee_{l,h}=\{t_{l+1},\cdots,t_h\}$ into $\kappa$ $\eps$-biased buckets, while the $\DnC(l,h,\kappa,\eps)$ could not find a boundary $i$ to partition $\dee_{l,h}$ into the two $\eps$-biased super-buckets.
    Note that $i$ represents the boundary of the  $\lceil \frac{\kappa}{2}\rceil$-th bucket.  
    Let $\{i_1,\cdots,i_{\kappa-1}\}$ be the boundary indices of the $\eps$-biased binning $\mathcal{B}$.
    Let $\mathsf{merge}(j)$ be the operation that merges the buckets $B_j$ and $B_{j+1}$ into the super bucket $B_{j..j+1}$ by removing the $j$-th boundary from $\mathcal{B}$.
    In the following, we first prove that if both $B_j$ and $B_{j+1}$ are $\eps$-biased, then $B_{j..j+1}$ is also $\eps$-biased.

    For every group $\gee_l$, let $r_l =  \frac{|\Gee_l|}{n}$. 
    Also, let 
    \[C_l(j,j') = |\Gee_l\cap \{t_{j+1},\cdots,t_{j'}\}|\]
    Finally, for a bucket $B_j\in\mathcal{B}$ and a group $\gee_l$, let $r_l(j) =  \frac{C_l(i_j,i_{j+1})}{i_{j+1}-i_j}$.

    Since $B_j$ and $B_{j+1}$ are $\eps$-biased, for every $l\in[\ell]$, we have
    \[
    |r_l(j)-r_l|\leq \eps \text{ and } |r_l(j+1)-r_l|\leq \eps
    \]
    Let us consider two cases separately. 
    
    {\em Case 1:} $r_l(j)\leq r_l$ and $r_l(j+1)\leq r_l$ (a similar argument holds for $r_l(j)> r_l$ and $r_l(j+1)> r_l$).
    In this case,
    \begin{align*}
        r_l - r_l(j) &= \eps_1\leq \eps \Rightarrow C_l(i_j,i_{j+1}) = (i_{j+1} - i_{j}) (r_l-\eps_1)\\
        r_l - r_l(j+1) &= \eps_2\leq \eps\Rightarrow C_l(i_{j+1},i_{j+2}) = (i_{j+2} - i_{j+1}) (r_l-\eps_2)\\
    \end{align*}

Then,
    \begin{align*}
        r_l(j..j+1)& = \frac{C_l(i_{j},i_{j+2})}{i_{j+2} - i_{j}}
        \\
        &= \frac{C_l(i_{j},i_{j+1})+C_l(i_{j+1},i_{j+2})}{i_{j+2} - i_{j}}\\
        &= \frac{(i_{j+1} - i_{j}) (r_l-\eps_1)+ (i_{j+2} - i_{j+1}) (r_l-\eps_2)}{i_{j+2} - i_{j}}
    \end{align*}
    Hence, 
    $r_l(j..j+1)$ is a weighted average of $r_l - \eps_1$ and $r_l - \eps_2$, so $r_l -
r_l(j..j+1)$ is a weighted average of $\eps_1$ and $\eps_2$ and is therefore at most $\max\{\eps_1, \eps_2\} <= \eps$.

    As a result, in Case 1, the super bucket $B_{j..j+1}$ is $\eps$-biased.

    \vspace{2mm}
    {\em Case 2:} $r_l(j)> r_l$ and $r_l(j+1)\leq r_l$ (the same argument holds for $r_l(j+1)> r_l$ and $r_l(j)\leq r_l$).
    Hence, 
    \[
        r_l(j) - r_l = \eps_1\leq \eps\quad,\quad
        r_l - r_l(j+1) = \eps_2\leq \eps
    \]

    In this case, $B_j$ has a surplus $C^+$ of the group $\gee_l$, while $B_{j+1}$ has a shortage $C^-$ of $\gee_l$. 
    Specifically,
    \(C^+ = \eps_1(i_{j+1}-i_j)\) and \(C^- = \eps_2(i_{j+2}-i_{j+1})\).
    If $C^+\geq C^-$, then \(r_l(j..j+1) - r_l \leq \eps_1\leq \eps\).
    This is because after merging the two buckets, the super bucket has a smaller surplus compared to $B_j$ (i.e., $C^+ - C^- < C^+$), while also being larger than $B_j$ (i.e., $i_{j+2}-i_j > i_{j+1}-i_j$).
    Similarly, if $C^+< C^-$, then
    \( r_l - r_l(j..j+1) \leq \eps_2\leq \eps\).
    As a result, in Case 2 also, the super bucket $B_{j..j+1}$ is $\eps$-biased.

    Having proven that merging two neighboring $\eps$-biased buckets results in an $\eps$-biased super bucket, we can now extend the merge operation to merging a chain of buckets $\langle B_i, B_{i+1},\cdots, B_j \rangle$.
    Let us first merge $B_i$ and $B_{i+1}$ to form the super bucket $B_{i..i+1}$. Since $B_i$ and $B_{i+1}$ are $\eps$-biased, the super bucket $B_{i..i+1}$ is $\eps$-biased. Next, merge this super bucket with $B_{i+2}$ to form the super bucket $B_{i..i+2}$, which again is $\eps$-biased as it involves merging two $\eps$-biased buckets $B_{i..i+1}$ and $B_{i+2}$. 
    We can continue this operation until the last step merges $B_{i..j-1}$ with $B_j$ to form the $\eps$-biased super bucket $B_{i..j}$. 

    Now, let us apply the merge operation on the first $\lceil \frac{\kappa}{2}\rceil$ buckets of $\mathcal{B}$ to form the super bucket $B_{1..\lceil \frac{\kappa}{2}\rceil}$. Similarly let us merge the last $\lfloor \frac{\kappa}{2}\rfloor$ buckets of $\mathcal{B}$ to form the super bucket $B_{\lceil \frac{\kappa}{2}\rceil+1..n}$.
    Since all buckets of $\mathcal{B}$ have a bias of at most $\eps$, the super buckets $B_{1..\lceil \frac{\kappa}{2}\rceil}$ and $B_{\lceil \frac{\kappa}{2}\rceil+1..n}$ are $\eps$-biased. This would be a valid solution for $\DnC(l,h,\kappa,\eps)$, which contradicts the assumption that there was no feasible solution for it.
\end{proof}

\begin{lemma}\label{th:dnc:compexity}
    The $\DnC$ algorithm (Algorithm~\ref{alg:dnc}) has the time complexity of $O(n\log k)$.
\end{lemma}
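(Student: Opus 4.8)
The plan is to bound the running time by the standard ``work-per-level times number-of-levels'' argument on the recursion tree of $\DnC$. First I would observe that, given the precomputed cumulative group-counts $cnt$ (obtainable once in $O(n\ell)=O(n)$ time, since $\ell$ is a constant), each invocation of {\sc is-$\eps$-biased}$(l,i,h,cnt)$ runs in $O(\ell)=O(1)$ time: it reads only a constant number of entries of $cnt$ and takes a maximum over the $\ell$ groups. Hence both the initial feasibility test and every iteration of the while loop cost $O(1)$.

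Next I would bound the work done inside a single call $\DnC(l,h,\kappa,\eps)$, excluding its two recursive calls. Computing the initial boundary $i=l+\lceil \kappa/2\rceil\times\frac{h-l}{\kappa}$ is $O(1)$, so the only nontrivial cost is the outward search for a valid boundary. Since the loop variable always keeps $i\in[l,h]$, the loop guard $j<\min(i-l,h-i)$ caps the number of iterations at $\min(i-l,h-i)\le (h-l)/2$. Therefore a single call performs $O(h-l)$ work beyond its recursive calls.

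The key step is then to sum this over the recursion tree. At any fixed recursion depth, the sub-ranges $[l,h]$ handed to the active calls are pairwise disjoint and contained in $[0,n]$: each call splits its range at the chosen boundary $i$ and passes $[l,i]$ and $[i,h]$ to its children. Hence their lengths sum to at most $n$, and the total work at that level is $\sum (h-l)/2 \le n/2 = O(n)$. Because $\kappa$ is (essentially) halved at every recursive step --- the children receive $\lceil \kappa/2\rceil$ and $\lfloor \kappa/2\rfloor$ --- the recursion reaches the base case $\kappa=1$ after $O(\log \kappa)=O(\log k)$ levels. Multiplying the per-level cost by the depth yields $O(n\log k)$, and adding the $O(n)$ preprocessing for $cnt$ preserves the bound $O(n\log k)$.

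The main obstacle I anticipate is justifying the per-\emph{level} $O(n)$ bound rather than settling for a weaker per-\emph{call} estimate: at first glance the outward boundary search looks potentially expensive, but the guard $\min(i-l,h-i)$ limits its radius to at most half the current range length, and it is the disjointness of the ranges across one level that collapses the aggregate search cost to $O(n)$. I would also check that branches reaching the base case $\kappa=1$ early contribute only $O(1)$ work and thus do not disturb the level-summation argument, and confirm that supplying $cnt$ is precisely what keeps each {\sc is-$\eps$-biased} evaluation at $O(1)$ rather than $O(n)$.
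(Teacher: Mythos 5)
Your proposal is correct and follows essentially the same argument as the paper's proof: constant-time bias checks via the precomputed $cnt$ array, $O(n)$ total work per recursion level because the subranges at each level partition $[0,n]$, and depth $O(\log k)$ from halving $\kappa$. Your write-up is in fact more explicit than the paper's --- notably in bounding the outward-search loop by the guard $j<\min(i-l,h-i)$ and in spelling out the disjointness argument that the paper compresses into ``each index is checked at most once per level'' --- but the underlying decomposition is identical.
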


\begin{proof}
    Consider the execution tree of the $\DnC$ algorithm (Figure~\ref{fig:dnc}). At every level, the number of buckets ($\kappa$) is divided by two. As a result, the depth of the tree is $\log k$.
    Since the algorithm keeps {\em partitioning} the search space, each index $i\in [n]$ is checked at most once for every level, while each checking is in constant time.
    As a result, the time complexity of the algorithm is $O(n \log k)$.
\end{proof}

The $\DnC$ algorithm can act as a practical heuristic to quickly identify a near-optimal solution for $\eps$-biased binning. In such a setting, considering the time to sort $\dee$ based on the attribute $x$, the total time complexity of the algorithm is $O(n\log n)$. 

\subsubsection{Step 2: Local Search}
According to Theorem~\ref{th:dnc}, the $\DnC$ algorithm finds a valid solution for the $\eps$-biased $k$-binning problem if one exists. However, it does not guarantee finding the optimal solution.
Therefore, the second step of the algorithm verifies the optimality by 
applying a local search around the boundaries of the initial binning, while using the objective value of the solution found by $\DnC$ to prune the search space.

Let $\mathcal{B}_{dnc}$ be the solution found by the $\DnC$ algorithm, and let
$w_{dnc} = \text{\sc obj}(\mathcal{B}_{dnc})$ be the objective value of $\DnC$ solution. In particular, for equal-size binning, $w_{dnc} = w^\uparrow_{dnc} - w^\downarrow_{dnc}$, where $w^\downarrow_{dnc}$ and $w^\uparrow_{dnc}$ are the minimum and maximum width of its buckets.

$w_{dnc}$ provides an upper-bound threshold for the local search since the optimal objective value $w_{opt}$ is at most equal to $w_{dnc}$, i.e., $w_{opt}\leq w_{dnc}$.
Therefore, any binning whose objective value is larger than $w_{dnc}$ can be safely pruned.

\begin{figure}[!tb]
    \centering
    \begin{tikzpicture}
    \definecolor{highlight}{RGB}{255,220,120}
    
    \fill[highlight] (1.8,0) rectangle ++(.4,1/3);
    \fill[highlight] (3.8,0) rectangle ++(.4,1/3);
    \fill[highlight] (5.8,0) rectangle ++(.4,1/3);

    \foreach \i in {0,...,3} {
        \draw (2*\i,0) rectangle ++(2,1/3);
    }

    \draw[<->] (1.8,0.5) -- (2.2,0.5);
    \node at (2., 0.7) {$w_{dnc}$};
\end{tikzpicture}
    \caption{Illustration of the local-search around the equal-size buckets with the exploration-window width $w_{dnc}$.}
    \label{fig:ls}
\end{figure}

Following this observation, the $\LS$ algorithm \technicalreport{(Algorithm~\ref{alg:ls})} draws a window with width $w_{dnc}$ around each boundary of the equal-size binning (Figure~\ref{fig:ls}).
For each combination of the indices in the boundary windows (i.e., $\varprod_{\kappa=1}^{k-1}w_{dnc}$), the algorithm checks if the corresponding binning is (a) $\eps$-biased and (b) has a smaller objective value than the best-known solution; if so, it updates the best-known solution $\mathcal{B}_{opt}$.
\submit{\textcolor{blue}{The pseudocode of the $\LS$ algorithm is provided in \cite{techrep}.\marginpar{R1.M2}}}
\technicalreport{The pseudocode of the $\LS$ algorithm is provided Algorithm~\ref{alg:ls}.}

\technicalreport{
\begin{algorithm}[t]
\caption{Local-search based algorithm for $\eps$-biased Binning}
\label{alg:ls}
\begin{algorithmic}[1]
\Require The dataset $\dee$, the value $k$, and the value $\eps$.
\Ensure The optimal $\eps$-biased $k$-binning.
\Function{LS}{$\dee, k, \eps$}
    \State $\{t_1,\cdots,t_n\}\gets$ {\sc sort}$(\dee,x)$ \Comment{\small sort $\dee$ on $x$}
    \State $cnt\gets [0]_{n,\ell}$;
    $cnt[1,\gee(t_1)]\gets 1$
    \For{$i\gets 2$ to $n$}
        \State $cnt[i,:]\gets cnt[i-1,:]$
        \State $cnt[i,\gee(t_i)]\gets cnt[i,\gee(t_i)]+1$
    \EndFor
    \State $\mathcal{B}_{dnc}\gets \DnC(0,n,k,\eps,cnt)$
    \If{$\mathcal{B}_{dnc}=$infeasible}
        {\bf Return} infeasible
    \EndIf
    \State $w_{dnc}\gets$ {\sc obj}$(\mathcal{B}_{dnc})$
    \State $bnd_{base} \gets [\max(0,(\frac{i\times n}{k}-\frac{w}{2})$ {\bf for} $i\in[k-1]]$
    \State $\mathcal{B}_{opt}\gets \mathcal{B}_{dnc}$; $w_{opt}\gets w_{dnc}$ 
    \For{combination $\mathcal{C}$ in $\varprod_{\kappa=1}^{k-1}w_{dnc}$}
        \State $\mathcal{B}_{tmp}\gets [bnd_{base}[i]+\mathcal{C}[i]$ {\bf for} $i\in[k-1]]$
        \State $w_{tmp}\gets$ {\sc obj}$(\mathcal{B}_{tmp})$
        \If{$w_{tmp}<w_{opt}$ AND $\beta(\mathcal{B}_{tmp})\leq \eps$}
            \State $\mathcal{B}_{opt}\gets \mathcal{B}_{tmp}$; $w_{opt}\gets w_{tmp}$
        \EndIf
    \EndFor
    \State {\bf Return} $\mathcal{B}_{opt}$
\EndFunction
\end{algorithmic}
\end{algorithm}
}

\begin{lemma}
\label{lem:ls:complexity}
    Let $w_{dnc}$ be the objective value of the solution found by $\DnC$.
    The $\LS$ algorithm \technicalreport{(Algorithm~\ref{alg:ls})} has a space complexity of $O(n)$ and a time complexity of
    \(
        O(n \log n + (w_{dnc})^{k-1}) \leq O(n \log n + n^{k-1})
    \).
\end{lemma}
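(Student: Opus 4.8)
The plan is to prove Lemma~\ref{lem:ls:complexity} by a phase-by-phase accounting of Algorithm~\ref{alg:ls}, bounding the time and space contributed by each stage and then summing. The algorithm decomposes cleanly into three parts: (i) a preprocessing stage consisting of sorting $\dee$ on $x$ and building the prefix-count array $cnt$; (ii) the single call to $\DnC$ that produces the upper bound $w_{dnc}$; and (iii) the local-search loop that ranges over all boundary combinations. Throughout I treat the number of buckets $k$ and the number of groups $\ell$ as constants (recall $k\ll n$ and $\ell$ is a small constant), which is what lets the per-iteration work be absorbed into the constant.

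For the space bound, I would observe that every persistent structure is linear in $n$. The sorted array of tuples is $O(n)$; the prefix-count table $cnt$ has dimensions $n\times\ell$, i.e.\ $O(n)$ since $\ell$ is constant; and the $\DnC$ call uses recursion depth $O(\log k)$ together with an output of $O(k)$ boundaries, which is dominated by $O(n)$. The local-search loop stores only the base boundaries $bnd_{base}$, the current candidate $\mathcal{B}_{tmp}$, and the incumbent $\mathcal{B}_{opt}$, each of size $O(k)$. Hence the overall space complexity is $O(n)$.

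For the time bound, I would account for the stages in order. Sorting costs $O(n\log n)$, and the single sweep that fills $cnt$ costs $O(n\ell)=O(n)$. By Lemma~\ref{th:dnc:compexity} the $\DnC$ call runs in $O(n\log k)\le O(n\log n)$ time, and computing $w_{dnc}$ via {\sc obj} takes $O(k)$ time once the boundaries are known. The dominant cost is the local-search loop, which iterates over the Cartesian product $\varprod_{\kappa=1}^{k-1}w_{dnc}$ of $k-1$ windows each of width $w_{dnc}$, giving exactly $(w_{dnc})^{k-1}$ combinations. I would argue that each iteration costs $O(1)$ (after absorbing the constants $k$ and $\ell$): given a candidate boundary vector, the count of each group inside each bucket is recovered in $O(1)$ as a difference of two entries of $cnt$, so both {\sc obj}$(\mathcal{B}_{tmp})$ and the bias test $\beta(\mathcal{B}_{tmp})\le\eps$ are evaluated in $O(k\ell)$ time per combination. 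Summing the stages yields $O(n\log n+(w_{dnc})^{k-1})$.

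The final step is to replace $w_{dnc}$ by the crude bound $w_{dnc}\le n$: since $w_{dnc}$ is the difference between the maximum and minimum bucket sizes of a $k$-binning of $n$ tuples, it cannot exceed $n$, so $(w_{dnc})^{k-1}\le n^{k-1}$ and the time complexity is at most $O(n^{k-1})$. The only genuinely delicate point is the claim that each local-search iteration runs in time independent of $n$; this rests entirely on the prefix-count table $cnt$ supporting $O(1)$ group-count queries per bucket, which is precisely why $cnt$ is precomputed before the loop rather than recomputed inside it. Everything else is a routine summation of per-stage costs.
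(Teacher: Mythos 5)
Your proposal is correct and follows essentially the same route as the paper's proof: sorting dominates the $O(n\log k)$ cost of $\DnC$ and the $O(n)$ construction of $cnt$, the local search contributes $(w_{dnc})^{k-1}$ combinations, and $w_{dnc}\leq n$ gives the final bound. Your added observations --- that the prefix-count table $cnt$ makes each combination check $O(k\ell)=O(1)$ time, and the explicit justification that $w_{dnc}\leq n$ --- merely spell out details the paper leaves implicit.
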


\begin{proof}
    Since the algorithm needs to maintain the counts matrix $cnt$, it has a space complexity of $O(n)$, given that $\ell$ is a small constant. 
    Sorting $\dee$ on $x$ takes $O(n\log n)$, which dominates the $O(n \log k)$ time complexity of the $\DnC$ algorithm and the $O(n)$ time to compute the $cnt$ values.
    After finding the initial solution by $\DnC$, the 
    total number of combinations that the algorithm checks for finding the optimal solution is $(w_{dnc})^{k-1}$. Therefore, its time complexity is $O(n \log n + (w_{dnc})^{k-1}) \leq O(n^{k-1})$.
\end{proof}

While the time complexity of the $\LS$ algorithm in worst-case is exponential in $k-1$, as we shall observe in our experiments, it is efficient in practice, 
since the adversarial cases where $w_{dnc}$ is large are unlikely.
In such adversarial cases, the $\DnC$ algorithm can serve as a heuristic for quickly finding a practical solution for $\eps$-biased $k$-binning.

\section{Experiments}\label{sec:exp}
Having introduced our algorithms, in this section, we experimentally evaluate their performance on (a) real-world benchmark datasets and (b) synthetic datasets.

\subsection{Experiments Setup}\label{sec:exp:setup}
The algorithms are implemented using {\sc Python 3.9.6}, and are publicly available on GitHub.
The experiments were conducted on a personal computer with an {\sc Apple M2 Pro} CPU with 12 cores and 32 GB of {\sc DDR5} memory, running {\sc MAC OS 14.5}.

\stitle{Real Datasets}
We use two benchmark datasets in our evaluations:
\begin{enumerate}[leftmargin=*]
    \item {\bf German Credit} dataset\footnote{\url{https://www.kaggle.com/datasets/uciml/german-credit}}: is a widely used benchmark for fairness and credit-scoring models. It contains information on 1,000 individuals, each described by attributes such as credit history, employment, housing, and personal status. This dataset commonly evaluates issues such as {\em gender biases} in credit scoring evaluation.
    We use the attribute \at{\bf Sex} (\at{male}, \at{female}) for grouping, and the attribute \at{\bf Credit amount} for binning. 
    \item {\bf COMPAS} dataset\footnote{\url{https://www.kaggle.com/datasets/danofer/compass}}: includes the records of individuals assessed for their risk of recidivism by the COMPAS tool.
    It is widely used in the literature to examine and mitigate bias in criminal justice algorithms, especially regarding disparities across {\em racial groups}.
    The dataset contains around 6800 entries with features such as age, race, gender, prior offenses, and risk scores. We use the attribute \at{\bf Race} (\at{Caucasian}, \at{African-American}, \at{Hispanic}, \at{Other}) for grouping and the attribute \at{\bf RawScore} for binning.
\end{enumerate}

\stitle{Synthetic Datasets ({\sc syn})}
In addition to the benchmark datasets, we also used synthetic datasets to control and study the impact of data distribution parameters on the algorithms' performance and scalability.
We generate the values of the attribute $x$ for a group $\gee_l\in \Gee$, according to the Gaussian distribution $dist_l = \mathcal{N}(\mu_l,\sigma_l)$, with the mean $\mu_l$ and the standard deviation $\sigma_l$.
We also control the overall ratio of each group $\gee_l$ in the dataset with the probability $p_l$. That is, the probability that a random tuple $t_i\in\dee$ belongs to group $\gee_l$ is $p_l$.
For each setting, specified by the values $\langle n,\ell,\{\mu_l\}^{\ell},\{\sigma_l\}^{\ell},\{p_l\}^{\ell}\rangle$, we generate 30 synthetic datasets, using different random seeds.

\subsection{Validation}\label{exp:validation}
Before evaluating the performance of our algorithms, we use the \at{German Credit} benchmark in this section to illustrate how attribute binning affects both the performance and fairness of downstream models trained for loan approval.

As further discussed in Section~\ref{sec:discussion}, it is important to reiterate that unbiased binning is only a fairness-aware component of the data preparation stage within the broader data analytics pipeline~\cite{jagadish2014big}. By itself, it cannot guarantee fairness in downstream models or analyses. Unbiased binning does not replace the FairML techniques required during model development; rather, it serves as a complementary step that must be used in conjunction with them.

\begin{table}[tb!]
\centering
\small
\caption{Logistic Regression: Performance and Fairness Metrics Before and After Unbiased Binning.}
\label{tab:exp:validation:lr}
\resizebox{\linewidth}{!}{
\setlength{\tabcolsep}{2.5pt} 
\begin{tabular}{@{}lccccc@{}}
\hline
\textbf{Metric} & \textbf{Baseline} & \multicolumn{3}{c}{\textbf{$\eps$-biased Binning}} & \textbf{Improvement} \\
 & \textbf{Binning} & \textcolor{black}{$0.8I$} & \textcolor{black}{$0.6I$} & $\sim 0.5I$ & \\
\hline
\hline
\textbf{Accuracy} & 0.67 & \textcolor{black}{0.7033} & \textcolor{black}{0.70} & 0.70 & \textcolor{green}{$\uparrow$} Slightly improved \\
\textbf{F1-score} & 0.25 & \textcolor{black}{0.2261} & \textcolor{black}{0.2623} & 0.18 & \textcolor{red}{$\downarrow$} Slightly decreased \\
\hline
\textbf{DI}  & 0.9378 & \textcolor{black}{0.9804} & \textcolor{black}{0.9832} & \textbf{0.9949} & \textcolor{green}{$\checkmark$} Very close to 1 (ideal) \\
\textbf{SPD} & -0.0545 & \textcolor{black}{-0.0180} & \textcolor{black}{-0.0151} & \textbf{-0.0048} & \textcolor{green}{$\checkmark$} Almost perfect parity \\
\textbf{EOD} & -0.0794 & \textcolor{black}{-0.0195} & \textcolor{black}{-0.0215} & \textbf{+0.0116} & \textcolor{green}{$\checkmark$} Major Improvement \\
\textbf{AOD} & -0.0341 & \textcolor{black}{-0.0124} & \textcolor{black}{-0.0113} & \textbf{-0.0050} & \textcolor{green}{$\checkmark$} Almost perfect parity \\
\hline
\textcolor{black}{\bf INDV} & \textcolor{black}{0.974} & \textcolor{black}{0.973} & \textcolor{black}{0.968} & \textcolor{black}{0.958} & \textcolor{red}{$\downarrow$} \textcolor{black}{Slightly decreased} \\
\hline
\multicolumn{6}{p{8.5cm}}{\textcolor{black}{\scriptsize $I=[0.25, 0.093, 0.1]$ is the initial Bias of Baseline (Equal-size) Binning}} \\
\multicolumn{6}{p{8.5cm}}{{\scriptsize {\bf DI}: Disparate Impact; {\bf SPD}: Statistical Parity Difference;}} \\
\multicolumn{6}{p{8.5cm}}{{\scriptsize {\bf EOD}: Equal Opportunity Difference; {\bf AOD}: Average Odds Difference.}} \\
\multicolumn{6}{p{8.5cm}}{\textcolor{black}{\scriptsize {\bf INDV}: Consistency (Individual Fairness).}}\\
\hline
\end{tabular}
}
\end{table}

\begin{table}[tb!]
\centering
\small
\caption{SVM: Performance and Fairness Metrics Before and After Unbiased Binning.}
\label{tab:exp:validation:svm}
\resizebox{\linewidth}{!}{
\setlength{\tabcolsep}{2.5pt} 
\begin{tabular}{@{}lccccc@{}}
\hline
\textbf{Metric} & \textbf{Baseline} & \multicolumn{3}{c}{\textbf{$\eps$-biased Binning}} & \textbf{Improvement} \\
 & \textbf{Binning} & \textcolor{black}{$0.8I$} & \textcolor{black}{$0.6I$} & $\sim 0.5I$ & \\
\hline
\hline
\textbf{Accuracy} & 0.66 & \textcolor{black}{0.7033} & \textcolor{black}{0.70} & 0.70 & \textcolor{green}{$\uparrow$} Slightly improved \\
\textbf{F1-score} & 0.21 & \textcolor{black}{0.2261} & \textcolor{black}{0.2742} & 0.18 & \textcolor{red}{$\downarrow$} Slightly decreased \\
\hline
\textbf{DI}  & 0.9563 & \textcolor{black}{0.9804} & \textcolor{black}{0.9940} & \textbf{0.9949} & \textcolor{green}{$\checkmark$} Very close to 1 (ideal) \\
\textbf{SPD} & -0.0386 & \textcolor{black}{-0.0180} & \textcolor{black}{-0.0054} & \textbf{-0.0048} & \textcolor{green}{$\checkmark$} Almost perfect parity \\
\textbf{EOD} & -0.0794 & \textcolor{black}{-0.0195} & \textcolor{black}{-0.0145} & \textbf{+0.0116} & \textcolor{green}{$\checkmark$} Major Improvement \\
\textbf{AOD} & -0.0126 & \textcolor{black}{-0.0124} & \textcolor{black}{0.0002} & \textbf{-0.0050} & \textcolor{green}{$\checkmark$} Almost perfect parity \\
\hline
\textcolor{black}{\bf INDV} & \textcolor{black}{0.974} & \textcolor{black}{0.973} & \textcolor{black}{0.962} & \textcolor{black}{0.962} & \textcolor{red}{$\downarrow$} \textcolor{black}{Slightly decreased} \\
\hline
\end{tabular}
}
\end{table}

We considered a scenario in which the attributes \at{age}, \at{Credit amount}, and \at{Duration} (credit duration) were each partitioned into six equal-sized bins. The resulting bucketized attributes exhibited bias levels of 0.25, 0.093, and 0.1, respectively. 
Using these bucketized attributes, we trained two machine learning models -- \at{logistic regression} and \at{SVM} -- to predict the binary attribute \at{Risk}, indicating whether a loan application is considered high risk. Both models achieved reasonable performance in terms of accuracy and F1 score.

We then evaluated the fairness of the models using IBM’s AIF360 toolkit~\cite{bellamy2019ai}, applying several standard fairness metrics and treating \at{sex: female} as the unprivileged group. The corresponding results appear in the first column (``Baseline Binning'') of Tables~\ref{tab:exp:validation:lr} and~\ref{tab:exp:validation:svm}.

Next, we applied $\eps$-biased binning, selecting values of $\eps$ to generate datasets with varying levels of bias.
In particular, we constructed datasets corresponding to 80\%, 60\%, and 50\% of the original bias to systematically evaluate the impact of bias reduction. 
We retrained both models using the newly derived attributes at each bias level and reevaluated their performance and fairness using the same metrics.
The results, reported in Tables~\ref{tab:exp:validation:lr} and~\ref{tab:exp:validation:svm}, show that reducing bias consistently improves fairness metrics across all evaluated models, while accuracy and F1 scores remain largely stable. 
Interestingly, in both models, accuracy increased slightly, although there was a modest price in terms of F1 scores. 
More importantly, reducing the bias levels by half, 
\color{black}
substantially {\em improved all fairness metrics}: DI (Disparate Impact), SPD (Statistical Parity Difference), and AOD (Average Odds Difference) all moved toward their ideal values (1 for DI; 0 for SPD and AOD), while EOD (Equalized Odds Difference) significantly improved from -8\% to +1\% in both models.
These results demonstrate that although fairness-aware data preparation techniques, including unbiased binning, cannot guarantee fairness in downstream tasks, they can nonetheless be highly effective in practice, significantly reducing model unfairness.

\noindent\textbf{Individual fairness.} 
In addition to group fairness metrics, we evaluate individual fairness using the notion of consistency, as implemented in AIF360. The results are reported in the last row of Tables~\ref{tab:exp:validation:lr} and~\ref{tab:exp:validation:svm}. We observe that while group fairness metrics improve significantly as bias is reduced, consistency remains largely stable across all settings, with only a negligible decrease.
This suggests that unbiased binning improves group-level fairness without substantially affecting individual fairness.

\begin{table}[tb!]
\centering
\small
\caption{\textcolor{black}{Fair logistic regression using Exponentiated Gradient~\cite{agarwal2019fair}, with $\eps$-biased Binning as input. Fair-opt shows the fairness metric used for optimization.}}
\label{tab:exp:validation:fairML}
\begin{tabular}{@{}l|cc|cccc@{}}
\hline
\textcolor{black}{\textbf{Fair-opt}} & \textcolor{black}{\textbf{Accuracy}} & \textcolor{black}{\textbf{F1-score}} & \textcolor{black}{\textbf{DI}} & \textcolor{black}{\textbf{SPD}} & \textcolor{black}{\textbf{EOD}} & \textcolor{black}{\textbf{AOD}} \\
\hline
\textcolor{black}{\textbf{DP}} & \textcolor{black}{0.7033} & \textcolor{black}{0.2261} & \textcolor{black}{0.9914} & \textcolor{black}{-0.0079} & \textcolor{black}{0.0075} & \textcolor{black}{-0.0054} \\
\textcolor{black}{\textbf{EO}} & \textcolor{black}{0.70} & \textcolor{black}{0.2373} & \textcolor{black}{0.9895} & \textcolor{black}{-0.0095} & \textcolor{black}{-0.0270} & \textcolor{black}{0.0102} \\
\hline
\end{tabular}
\end{table}

\stitle{Integration with fairness-aware learning} 
To evaluate the synergy between our data preparation approach and downstream fairness interventions, we combined $\eps$-biased binning with Fair logistic regression using Exponentiated Gradient~\cite{agarwal2019fair}, implemented in AIF360. This method is a popular in-process intervention for FairML, and supports multiple fairness objectives, including demographic parity (DP) and equalized odds (EO).
The results are reported in Table~\ref{tab:exp:validation:fairML} for the dataset generated using $\eps$-biased binning with approximately 50\% of the initial bias. The two rows correspond to fair regression under DP and EO constraints, respectively. We observe that fair regression can be effectively applied on top of $\eps$-biased binning and can further reduce unfairness. At the same time, the gains were not large because, after $\eps$-biased binning, even standard models trained on the $\eps$-biased representation were already near-perfectly fair, leaving limited room for further improvement.

\color{black}

\begin{figure*}[!tb]
\centering
    \begin{subfigure}[t]{0.24\linewidth}
        \includegraphics[width=\linewidth]{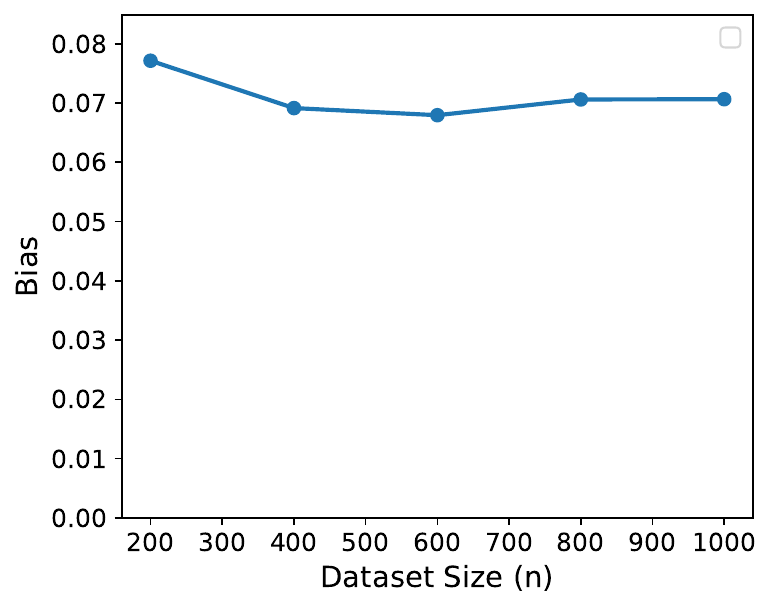}
        \caption{Bias of equal-size binning}
        \label{fig:gc:bias-n}
    \end{subfigure}
    \hfill
    \begin{subfigure}[t]{0.24\linewidth}
        \includegraphics[width=\linewidth]{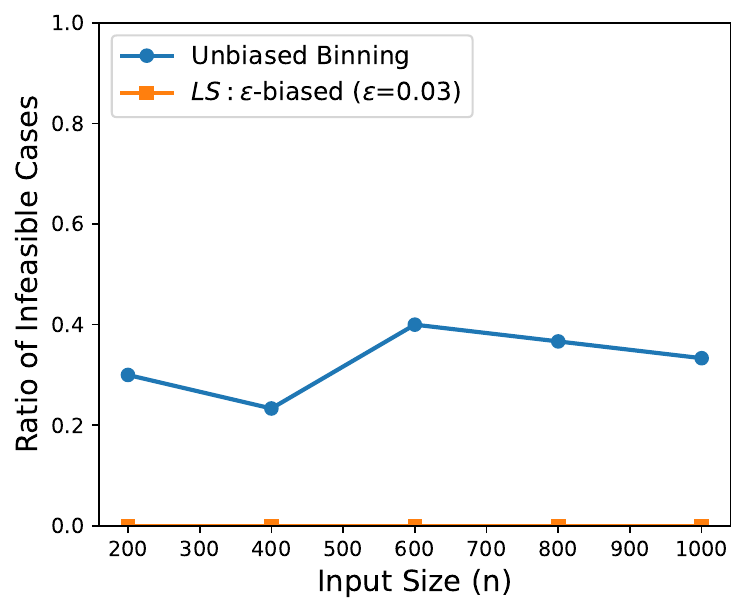}
        \caption{Infeasible ratio vs Input size}
        \label{fig:gc:invalid-n}
    \end{subfigure}
    \hfill
    \begin{subfigure}[t]{0.24\linewidth}
        \includegraphics[width=\linewidth]{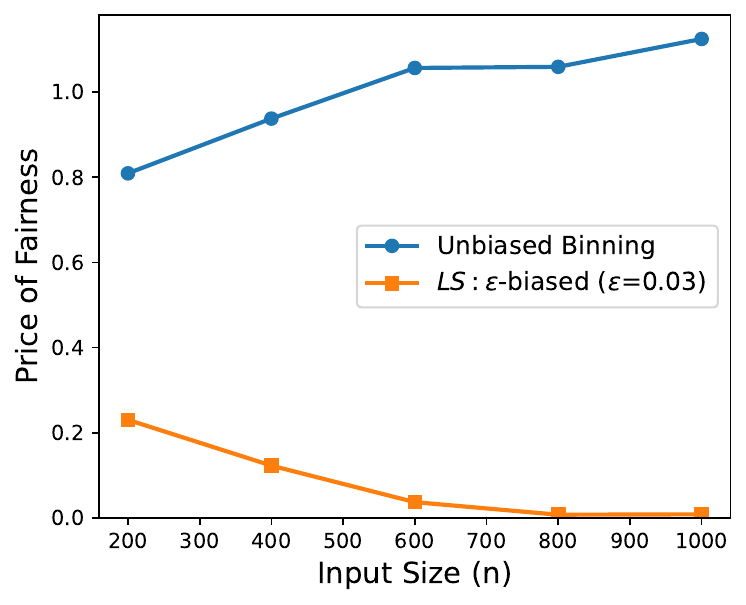}
        \caption{PoF vs Input size}
        \label{fig:gc:PoF-n}
    \end{subfigure}
    \hfill
    \begin{subfigure}[t]{0.24\linewidth}
        \includegraphics[width=\linewidth]{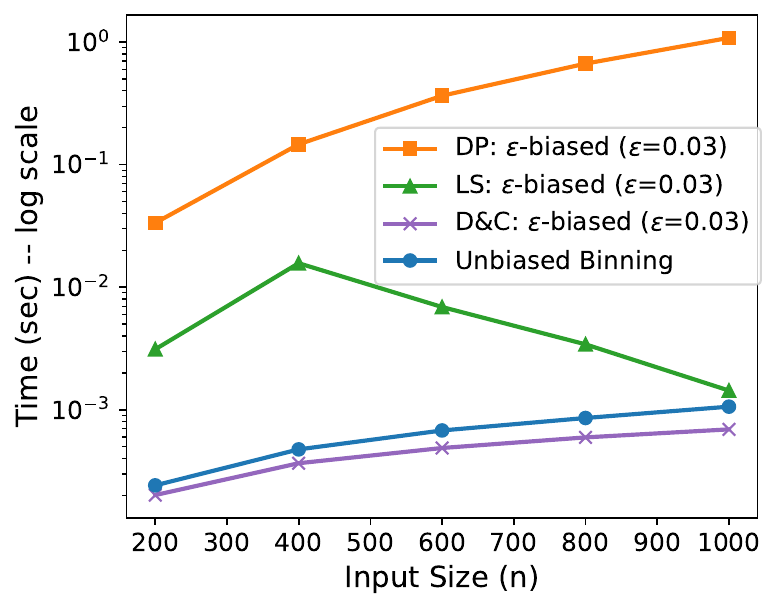}
        \caption{Time vs Input size}
        \label{fig:gc:time-n}
    \end{subfigure}

    \begin{subfigure}[t]{0.24\linewidth}
        \includegraphics[width=\linewidth]{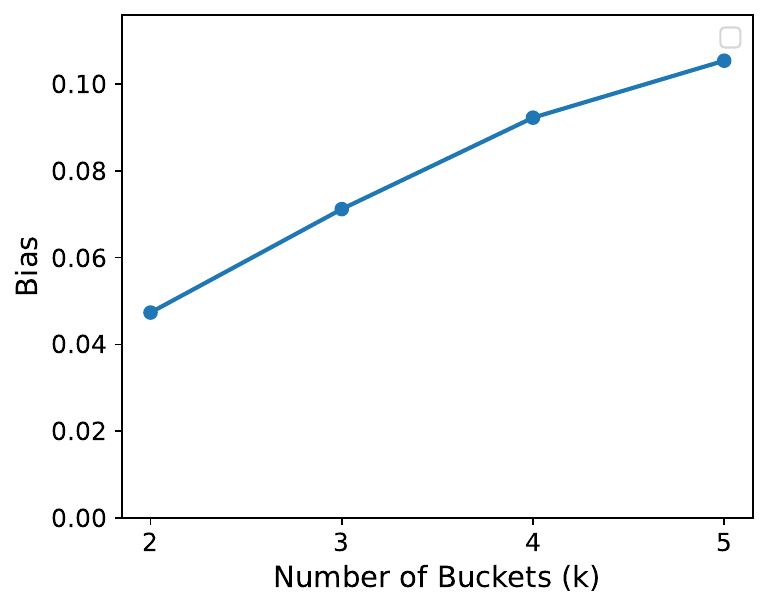}
        \caption{Bias of equal-size binning}
        \label{fig:gc:bias-k}
    \end{subfigure}
    \hfill
    \begin{subfigure}[t]{0.24\linewidth}
        \includegraphics[width=\linewidth]{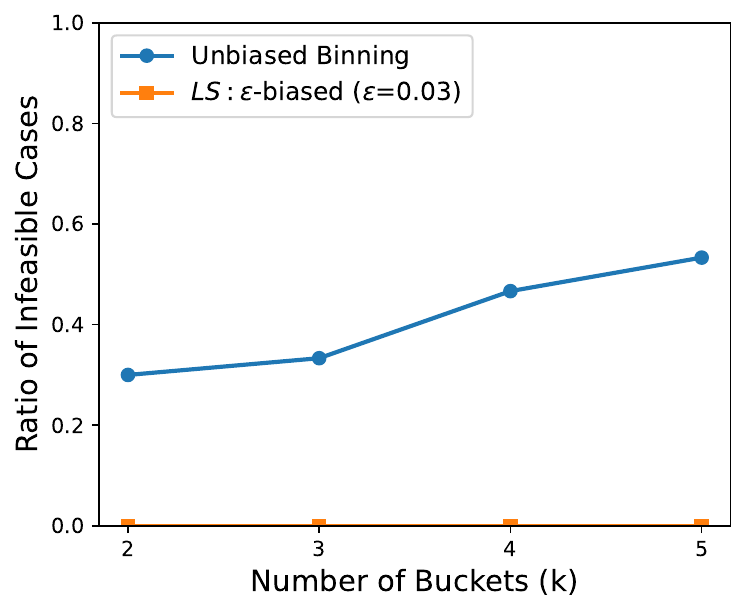}
        \caption{Infeasible ratio vs \# buckets}
        \label{fig:gc:invalid-k}
    \end{subfigure}
    \hfill
    \begin{subfigure}[t]{0.24\linewidth}
        \includegraphics[width=\linewidth]{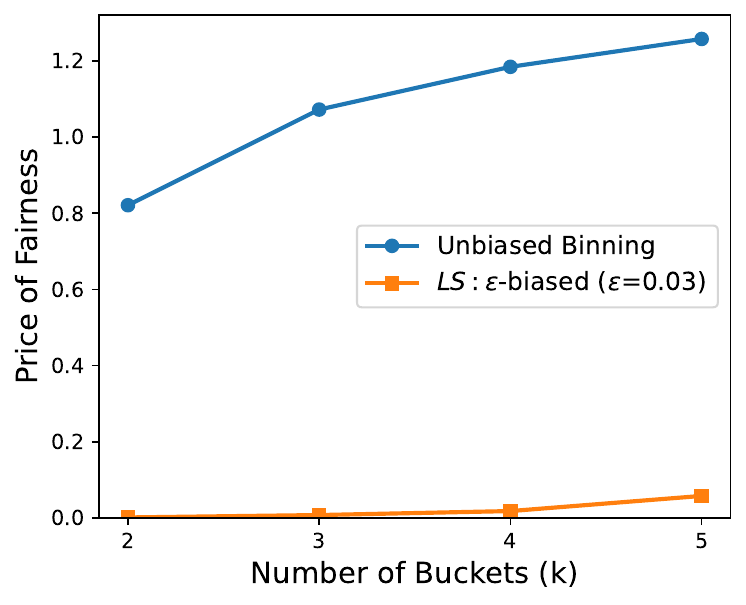}
        \caption{PoF vs \# buckets}
        \label{fig:gc:PoF-k}
    \end{subfigure}
    \hfill
    \begin{subfigure}[t]{0.24\linewidth}
        \includegraphics[width=\linewidth]{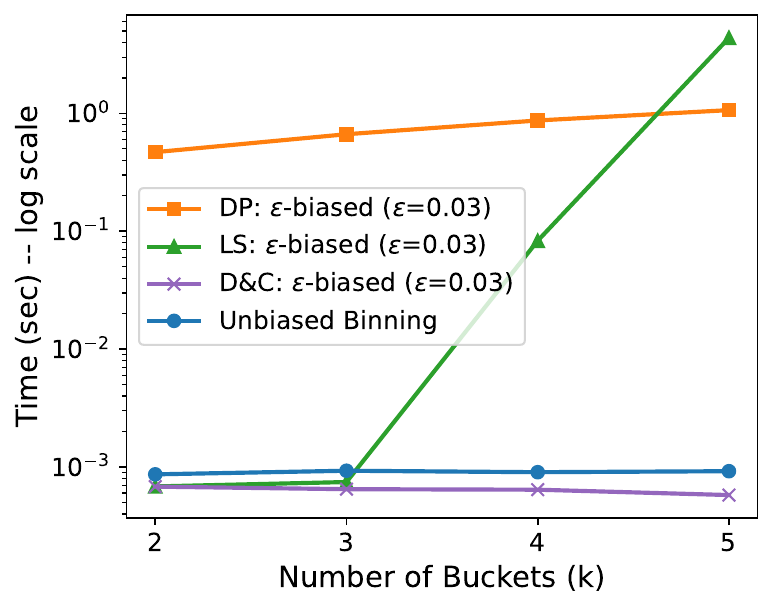}
        \caption{Time vs \# buckets}
        \label{fig:gc:time-k}
    \end{subfigure}
\caption{Experiment Results on German Credit dataset.}
\label{fig:gc}
\end{figure*}

\begin{figure}[!tb]
\centering
    \begin{subfigure}[t]{0.48\linewidth}
        \includegraphics[width=\linewidth]{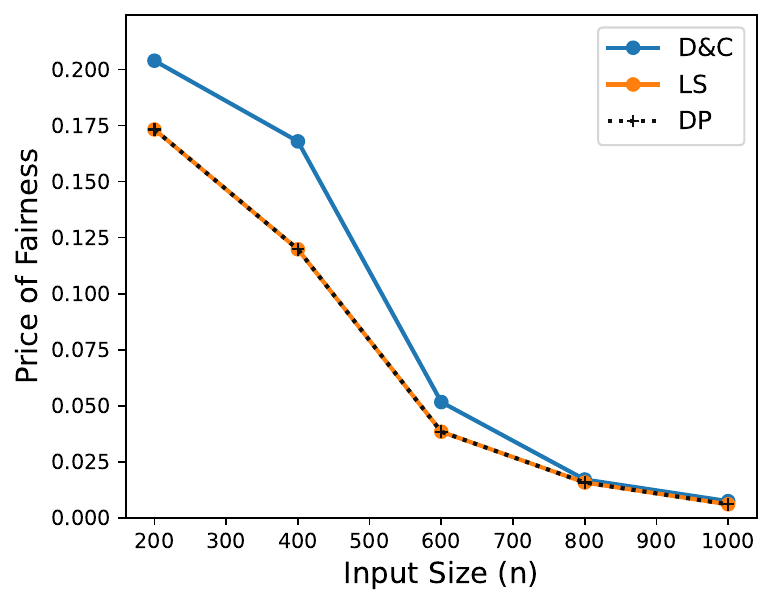}
        \caption{\textcolor{black}{Varying $n$}}
        \label{fig:gc-dncvsls-n}
    \end{subfigure}
    \hfill
    \begin{subfigure}[t]{0.48\linewidth}
        \includegraphics[width=\linewidth]{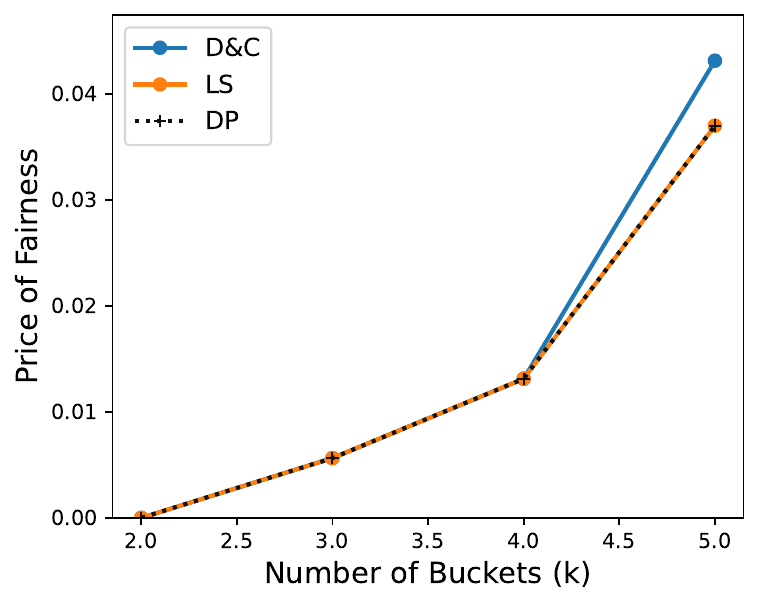}
        \caption{\textcolor{black}{Varying $k$}}
        \label{fig:gc-dncvsls-k}
    \end{subfigure}
\caption{\textcolor{black}{Near-optimality of $\DnC$ (German Credit)}}
\label{fig:exp:dnc}
\end{figure}  

\subsection{Performance Evaluation}
In this section, we evaluate the performance of our proposed algorithms across various settings using both real and synthetic datasets.
The evaluation values are presented as an average of multiple runs for each setting.
We measure the efficiency of algorithms using (i) {\em execution time}. For $\eps$-biased binning, we consider our quadratic-time dynamic programming solution $\DP$ as the baseline.
The bias $\beta_\dee(\mathcal{B})$ of a bucketization $\mathcal{B}$ is measured using Equation~\ref{eq:bias} as its (ii) {\em maximum bias} for the groups $\{\gee_1,\cdots,\gee_\ell\}$.
It may not always be possible to generate an unbiased (or $\eps$-biased) $k$-binning for a given setting. Repeating each setting multiple times on randomly selected data, we use the (iii) {\em ratio of infeasible} settings to measure infeasibility likelihood for that setting.
Last but not least, we measure (iv) the {\em price of fairness (PoF)}. 
For each bin $B_j\in \mathcal{B}$, the PoF is its relative size difference from the original bins. Specifically, for 
equal-size binning 
$PoF(B_j) = \left\vert 1- \frac{|B_j|}{n/k}\right\vert$.
The PoF of a $k$-binning $\mathcal{B}$ is the average of the PoF of its bins. That is,
\(PoF(\mathcal{B}) = \frac{1}{k} \sum_{j=1}^k PoF(B_j)\).
The results in this section are presented separately for real and synthetic datasets.

\subsubsection{Benchmark Datasets}\label{sec:exp:validation}
We begin this section by evaluating the performance of our algorithms on real-world benchmark datasets, COMPAS and German Credit.
The results were consistent across the two datasets. \submit{Thus, due to the space constraints, we provide the results on COMPAS in the technical report~\cite{techrep}.}
\technicalreport{The results on COMPAS are provided in the Appendix.}

The experiment results 
for various input sizes ($n$) and numbers of buckets ($k$) are presented in Figure~\ref{fig:gc}, using the default parameters $n=800$, $k=3$, and $\eps=0.03$.
First, Figures~\ref{fig:gc:bias-n} and \ref{fig:gc:bias-k} verify that (fairness-unaware) equal-size bucketization of raw credit values failed to generate unbiased binning in various settings. Specifically, {\em the (gender) bias was close to 8\% for various input sizes, and it exceeded 10\% as the number of buckets increased to $k=5$ credit score buckets}.
Next, we applied Algorithm~\ref{alg:step2} to find the unbiased bucketization for each setting.
However, as reflected in Figures~\ref{fig:gc:invalid-n} and \ref{fig:gc:invalid-k}, in approximately one-third of the cases, a fully unbiased binning did not exist. This number increased to nearly 60\% as the number of buckets was increased. This is because the number of boundary candidates $m$ was less than 5 in many cases, while in cases where $k> m$, the problem is infeasible. Similarly, Figures~\ref{fig:gc:PoF-n} and \ref{fig:gc:PoF-k} show a high price of fairness (PoF), sometimes more than 1.0, for unbiased binning.
On the other hand, {\em allowing a small bias threshold ($\eps=0.03$), the PoF dropped to close to zero in most cases}, while there was no infeasible case for any of the $\eps$-biased settings. 
Still, when $n$ is small, the PoF is higher. This can be explained based on the central-limit theorem~\cite{fischer2011history}, as aggregates (ratios) over a larger number of samples (larger buckets) have a smaller variance around the mean (overall ratios). 
The same reasoning explains the small increase in the PoF for the larger value of $k$. 

Figures~\ref{fig:gc:time-n} and \ref{fig:gc:time-k} show the average running time of each algorithm for each setting. 
Among the optimal algorithms, unbiased binning was the fastest, with the running time of the algorithm in all cases dominated by the time required to sort the data on attribute $x$.
On the other hand, the $\DP$ algorithm for $\eps$-biased binning was around $10^3$ times slower than the unbiased binning algorithm in all cases. This verifies the average-case quadratic time complexity of the $\DP$ algorithm, compared to the near-linear time complexity of the unbiased-binning algorithm.
The $\LS$ algorithm, although not theoretically efficient in worst-case scenarios, performed well in practice, as its running times were typically orders of magnitude less than those of the $\DP$ algorithm in most cases.
Yet, its running time significantly increases when its exploration-window width (see Figure~\ref{fig:ls}) and the value of $k$ are not small.
This is observed in Figure~\ref{fig:gc:time-k}, where $\LS$ takes longer than the $\DP$ algorithm for $k=5$.
Nevertheless, the solution found by the $\DnC$ algorithm can serve as a practical alternative in these cases. 

$\DnC$ was the fastest algorithm across all experiments.
Although it does not guarantee optimality, our experiments show that it consistently finds either the optimal or a near-optimal solution. 
This is demonstrated in Figure~\ref{fig:exp:dnc}, where either $\DnC$ found the optimal solution, or the PoF of its solution is close to the optimal.
Besides, to experimentally confirm the optimality of $\DP$ and $\LS$, we included $\DP$ in the figure, verifying that both algorithms found optimal solutions with the same objective values and PoF.

\begin{figure}[!tb]
\centering
    \begin{subfigure}[t]{0.48\linewidth}
        \includegraphics[width=\linewidth]{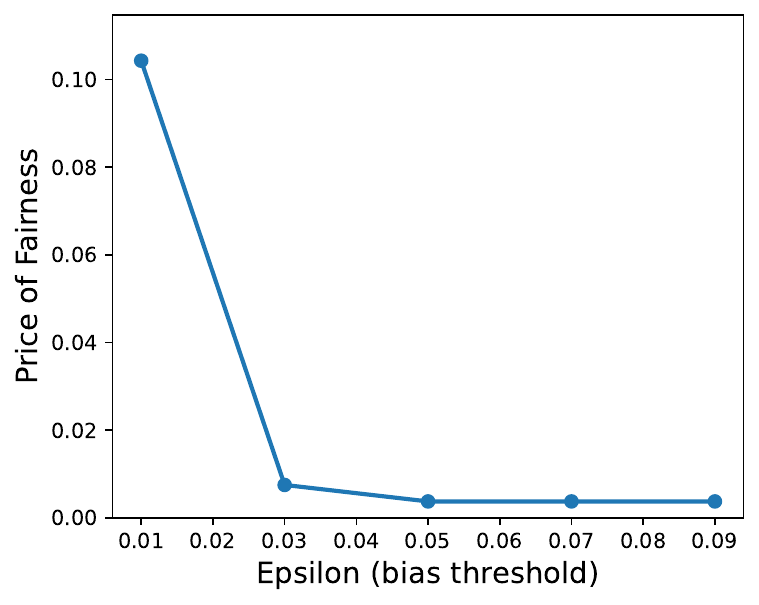}
        \caption{German Credit: PoF vs $\eps$}
        \label{fig:gc:PoF-eps}
    \end{subfigure}
    \hfill
    \begin{subfigure}[t]{0.48\linewidth}
        \includegraphics[width=\linewidth]{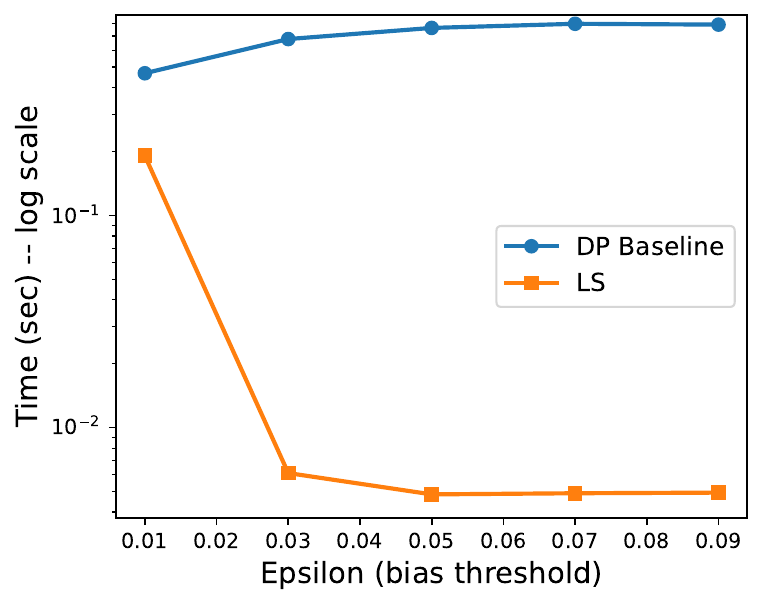}
        \caption{German Credit: Time vs $\eps$}
        \label{fig:gc:time-eps}
    \end{subfigure}
\caption{The effect of varying the bias threshold ($\eps$) on time and PoF (German Credit)}
\label{fig:exp:eps}
\end{figure}   

\begin{figure}[!tb]
\centering
\includegraphics[width=.48\linewidth]{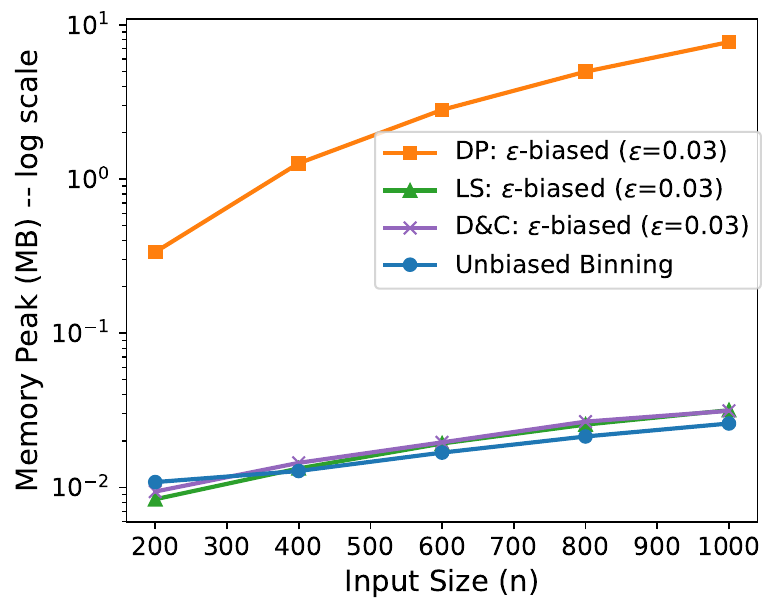}
        \caption{\textcolor{black}{German Credit: Space vs Input size}}
        \label{fig:gc-space-n}
\end{figure}  

Next,
we compared the performance of the $\DP$ and $\LS$ algorithms for different values of bias threshold. The results are provided in Figure~\ref{fig:exp:eps}. 
First, as reflected in Figure~\ref{fig:gc:PoF-eps}, while the price of fairness is large when $\eps$ is 1\%, it suddenly drops for larger values of $\eps$. This confirms that by slightly increasing the max-bias threshold, one can find a binning that is very close to equal-size binning.
Next, from Figure~\ref{fig:gc:time-eps}, we can confirm the different running-time behaviors of the $\DP$ and $\LS$ algorithms when $\eps$ increases.
Increasing the value of $\eps$ slightly increases the running time of $\DP$ as it increases the valid boundaries. On the other hand, while the running time of the $\LS$ algorithm is high for $\eps=0.01$, it quickly decreases for larger values of $\eps$. This major drop can be explained by the significant drop in the PoF when $\eps$ increases. When the PoF is low, the near-optimal solution found by the $\DnC$ algorithm is small, hence the $\LS$ can stop after checking a few combinations.

\stitle{Memory Peak (Space)} 
Finally, by varying the input size, we evaluate the peak memory usage of each method. The results in Figure~\ref{fig:gc-space-n} are consistent with the space complexities reported in Table~\ref{tab:results}. Specifically, the Unbiased Binning algorithm, along with the $\DnC$ and $\LS$ algorithms for $\eps$-biased binning, have linear space complexity. Consequently, their memory footprint grows only modestly with the input size. In contrast, the $\DP$ algorithm for $\eps$-biased binning has quadratic space complexity, leading to significantly higher memory consumption compared to the other methods.


\stitle{Entropy-based Binning}
Even though we use equal-size binning for performance evaluation, our local search ($\LS$) and divide-and-conquer ($\DnC$) algorithms are not restricted to this setting and can be applied on top of other binning approaches. To illustrate this, we additionally evaluated the $\DnC$ and $\LS$ algorithms 
using an entropy-based binning method that maximizes the correlation between the bucketized attribute and a target variable.
Specifically, our goal is to maximize the information gain with respect to a target variable (\at{Risk} in German Credit) by minimizing the entropy of $y$ values in each bucket.
A penalty term is also added to prevent the creation of small buckets.
We then apply the Local search algorithm to reduce the binning bias to $\eps=0.03$.
Figure~\ref{fig:entropy-time} shows the running times of $\DnC$ and $\LS$ for different input sizes. As expected, having a linearithmic time complexity, the $\DnC$ algorithm was efficient for various values of $n$. The running time of $\LS$, however, increases as the width of its exploration window increases.

\begin{figure*}[!tb] 
\centering
    \begin{minipage}[t]{0.24\linewidth}
        \centering
        \includegraphics[width=\textwidth]{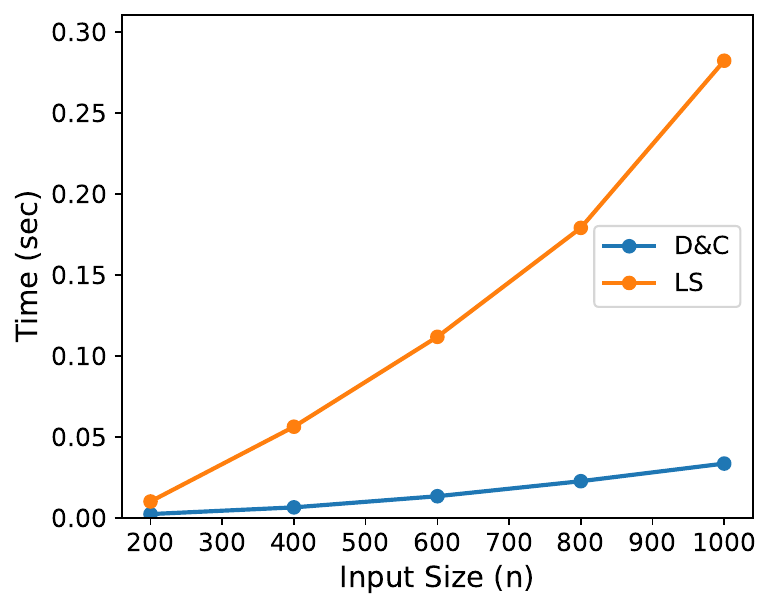}
        \caption{Entropy-based Binning (German Credit)}
        \label{fig:entropy-time}
    \end{minipage}
    \hfill
    \begin{minipage}[t]{0.24\linewidth}
        \centering
        \includegraphics[width=\textwidth]{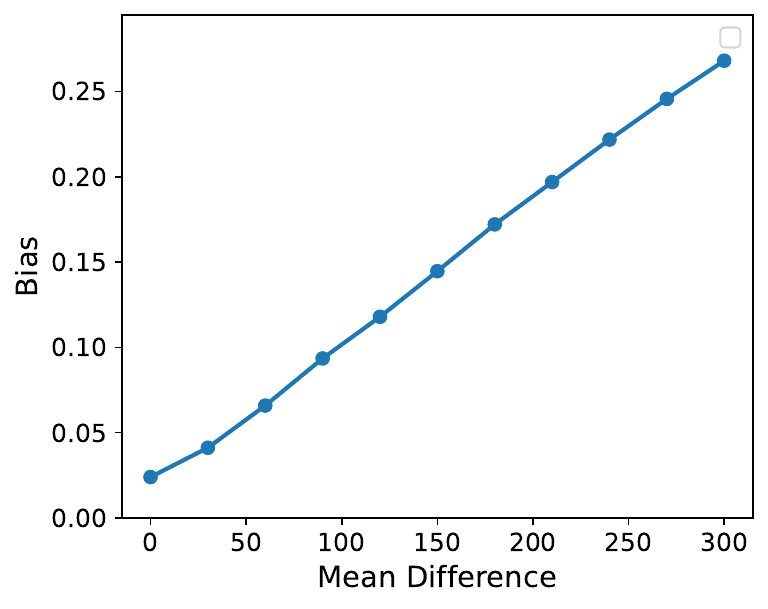}
        \caption{{\sc syn}: Bias v.s. Mean difference}
        \label{fig:bias-mean}
    \end{minipage}
    \hfill
    \begin{minipage}[t]{0.24\linewidth}
        \centering
        \includegraphics[width=\textwidth]{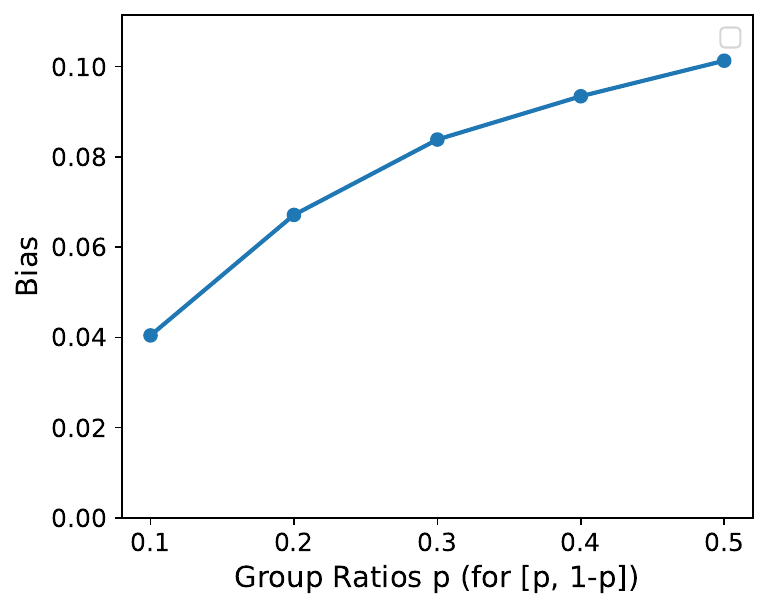}
        \caption{{\sc syn}: Bias v.s. group ratios}
        \label{fig:bias-p}
    \end{minipage}
    \hfill
    \begin{minipage}[t]{0.24\linewidth}
        \centering
        \includegraphics[width=\textwidth]{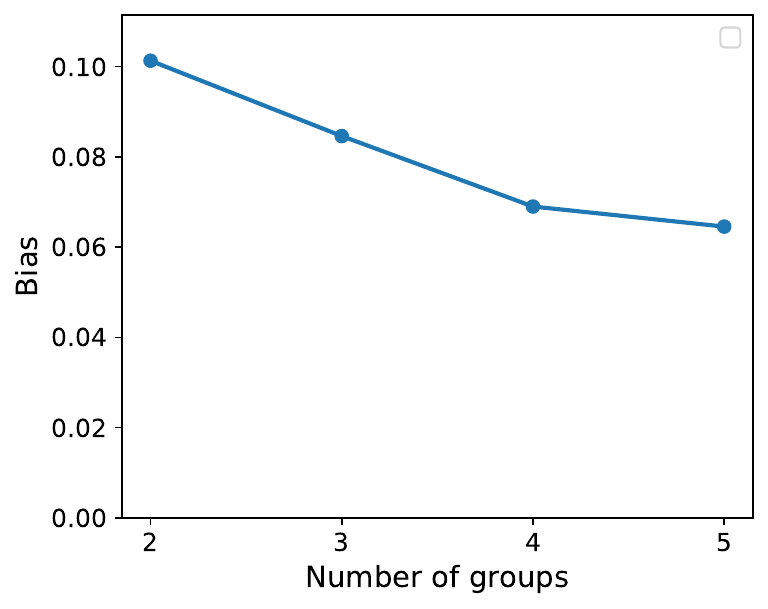}
        \caption{{\sc syn}: Bias v.s. number of groups}
        \label{fig:bias-ell}
    \end{minipage}
    
    \begin{minipage}[t]{0.24\linewidth}
        \centering
        \includegraphics[width=\textwidth]{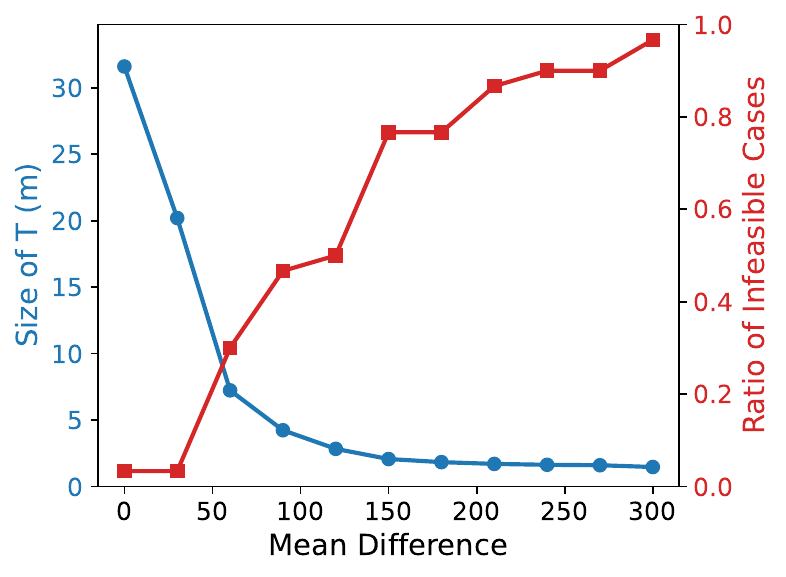}
        \caption{{\sc syn}: \# boundary candidates v.s. Mean difference}
        \label{fig:m-mean}
    \end{minipage}
    \hfill
    \begin{minipage}[t]{0.24\linewidth}
        \centering
        \includegraphics[width=\textwidth]{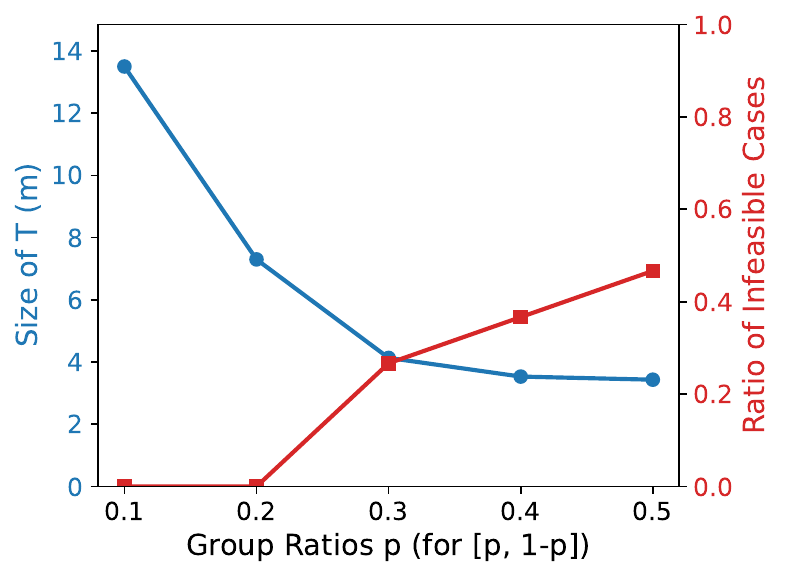}
        \caption{{\sc syn}: \# boundary candidates v.s. group ratios}
        \label{fig:m-p}
    \end{minipage}
    \hfill
    \begin{minipage}[t]{0.24\linewidth}
        \centering
        \includegraphics[width=\textwidth]{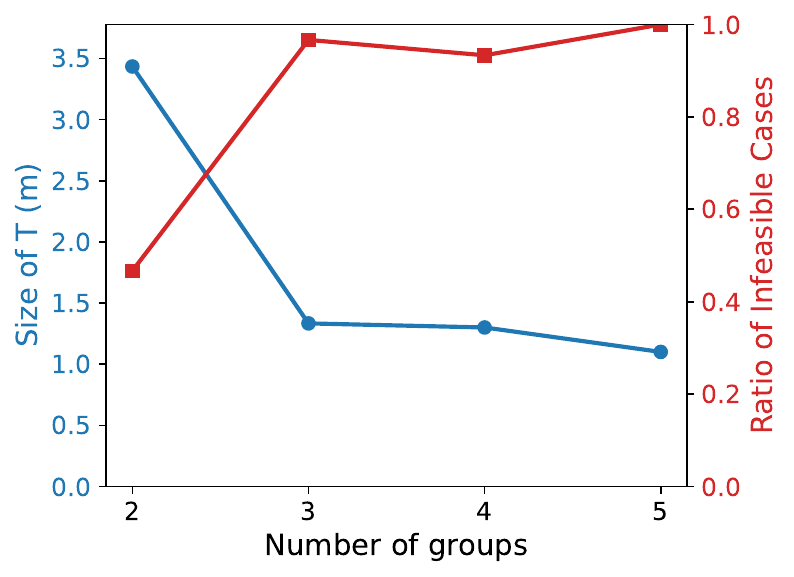}
        \caption{{\sc syn}: \# boundary candidates v.s. \# groups}
        \label{fig:m-ell}
    \end{minipage}
    \hfill
    \begin{minipage}[t]{0.24\linewidth}
        \centering
        \includegraphics[width=\textwidth]{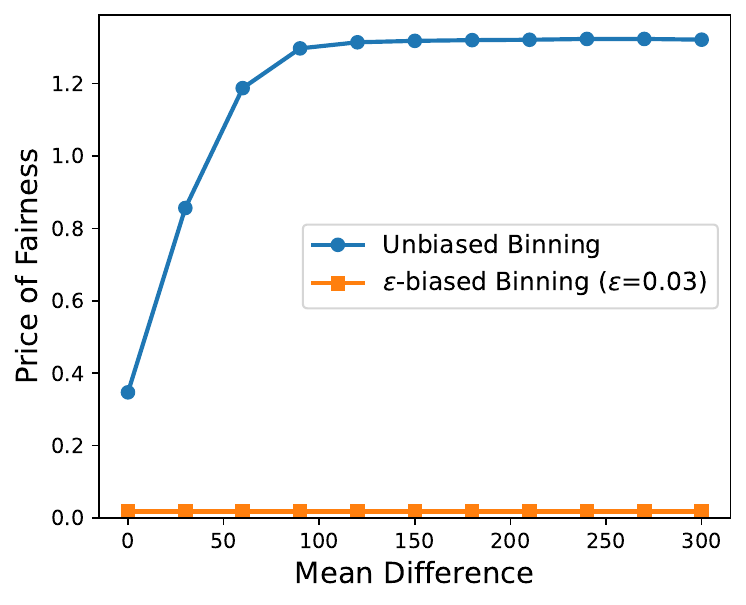}
        \caption{{\sc syn}: PoF v.s. Mean difference}
        \label{fig:pof-mean}
    \end{minipage}
    
    \begin{minipage}[t]{0.24\linewidth}
        \centering
        \includegraphics[width=\textwidth]{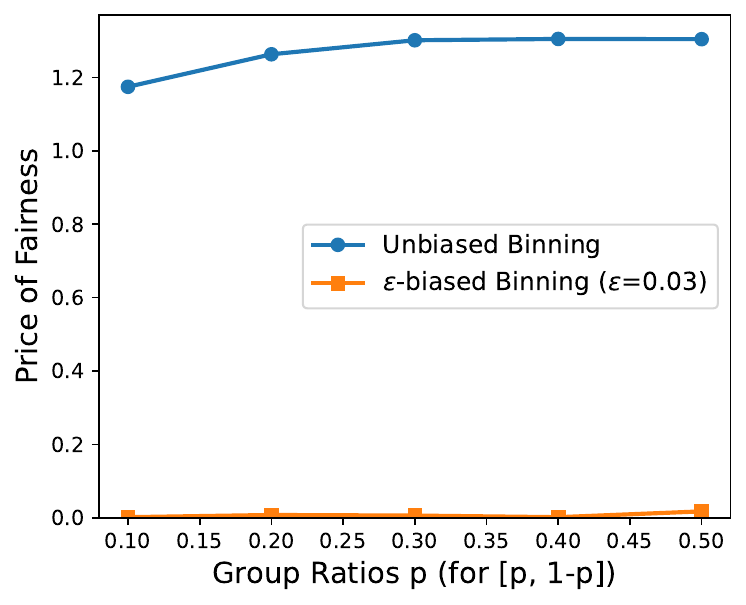}
        \caption{{\sc syn}: PoF v.s. group ratios}
        \label{fig:pof-p}
    \end{minipage}
    \hfill
    \begin{minipage}[t]{0.24\linewidth}
        \centering
        \includegraphics[width=\textwidth]{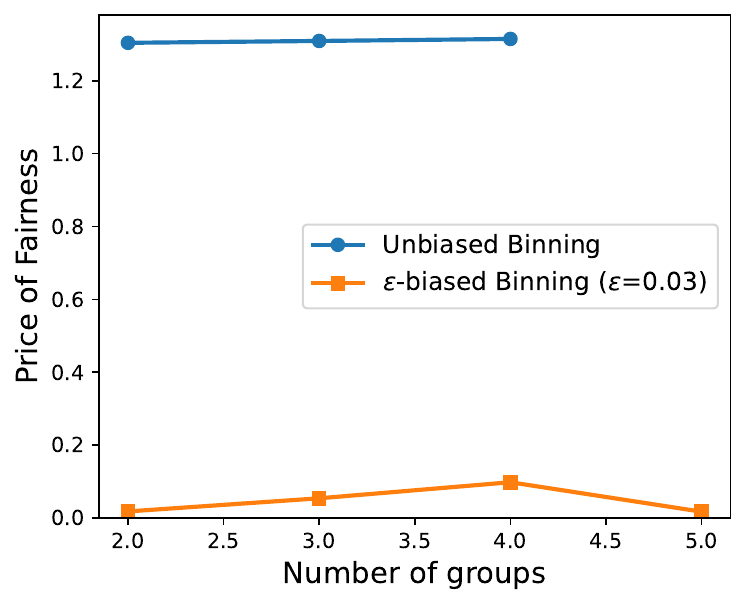}
        \caption{{\sc syn}: PoF v.s. \# groups}
        \label{fig:pof-ell}
    \end{minipage}
    \hfill
    \begin{minipage}[t]{0.24\linewidth}
        \centering
        \includegraphics[width=\textwidth]{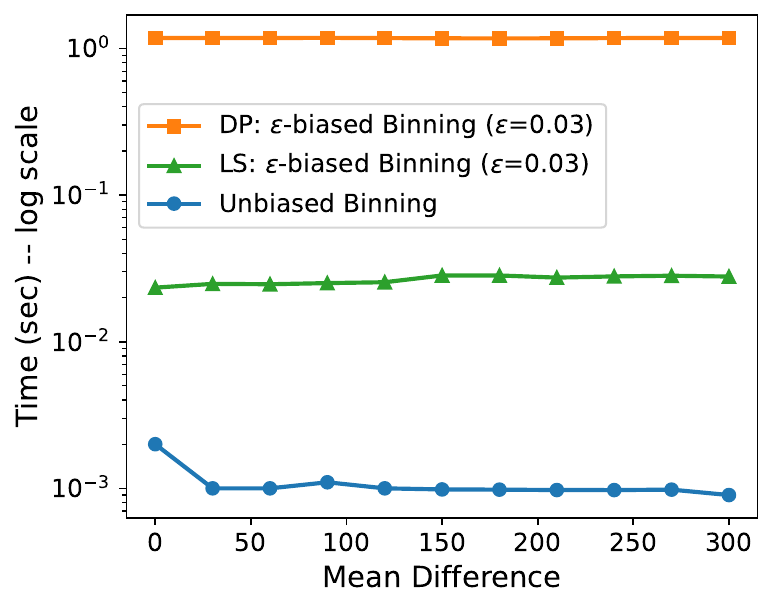}
        \caption{{\sc syn}: Time v.s. Mean difference}
        \label{fig:time-mean}
    \end{minipage}
    \hfill
    \begin{minipage}[t]{0.24\linewidth}
        \centering
        \includegraphics[width=\textwidth]{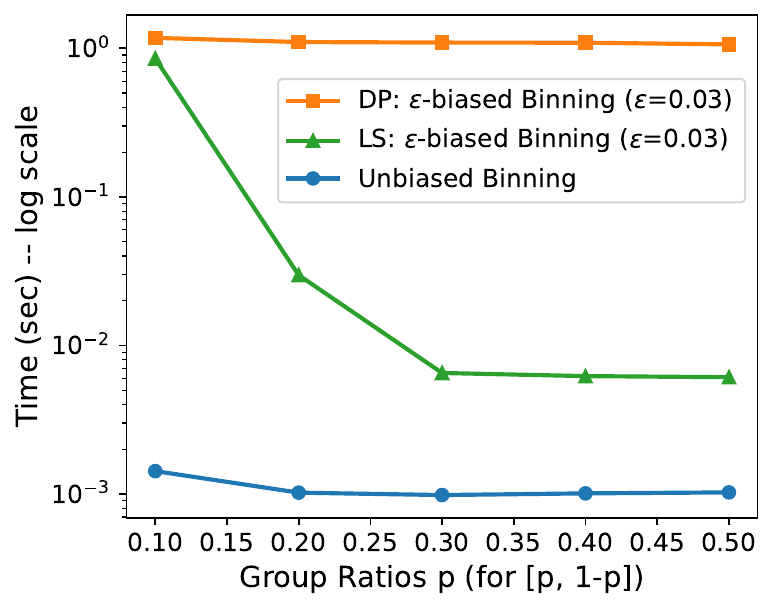}
        \caption{{\sc syn}: Time v.s. group ratios}
        \label{fig:time-p}
    \end{minipage}
\end{figure*}

\begin{figure*}[!tb] 
\centering
    \begin{minipage}[t]{0.24\linewidth}
        \centering
        \includegraphics[width=\textwidth]{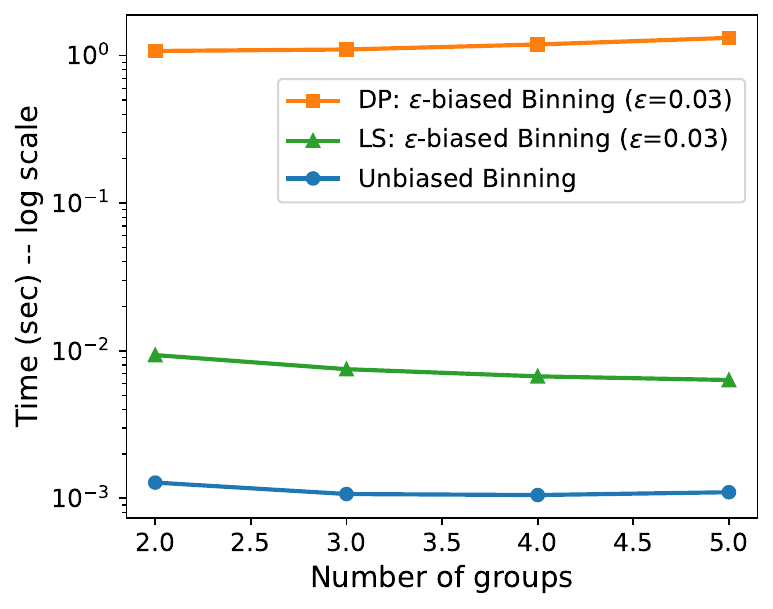}
        \caption{{\sc syn}: Time v.s. \# groups}
        \label{fig:time-ell}
    \end{minipage}
    \hfill
    \begin{minipage}[t]{0.24\linewidth}
        \centering
        \includegraphics[width=\textwidth]{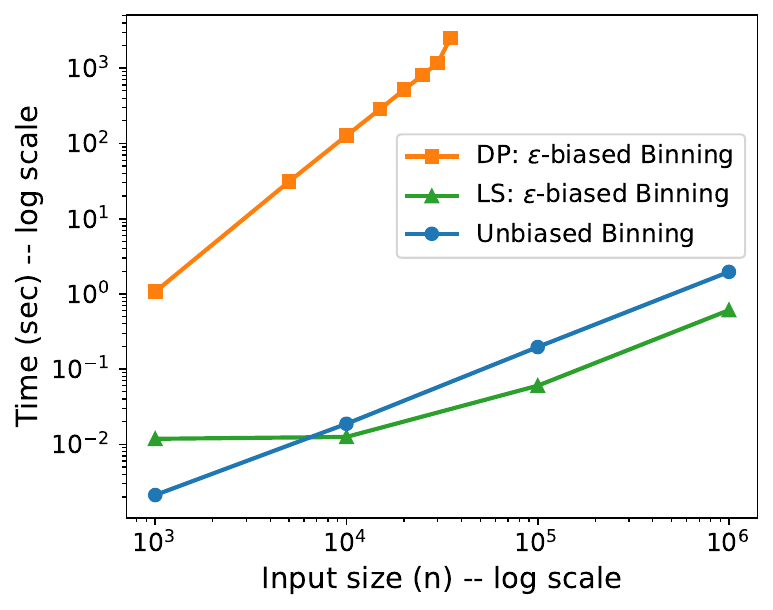}
        \caption{{\sc syn}: Scalability evaluation.}
        \label{fig:scalability}
    \end{minipage}
    \hfill
    \begin{minipage}[t]{0.24\linewidth}
        \centering
        \includegraphics[width=\textwidth]{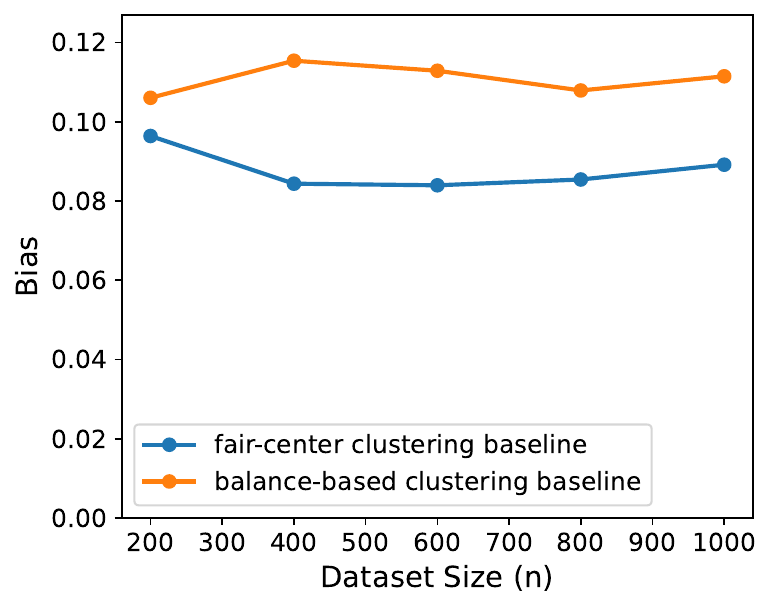}
        \caption{\textcolor{black}{{\sc gc}: Fair Clustering Baselines: Bias of Binning}}
        \label{fig:fairclustering1}
    \end{minipage}
    \hfill
    \begin{minipage}[t]{0.24\linewidth}
        \centering
        \includegraphics[width=\textwidth]{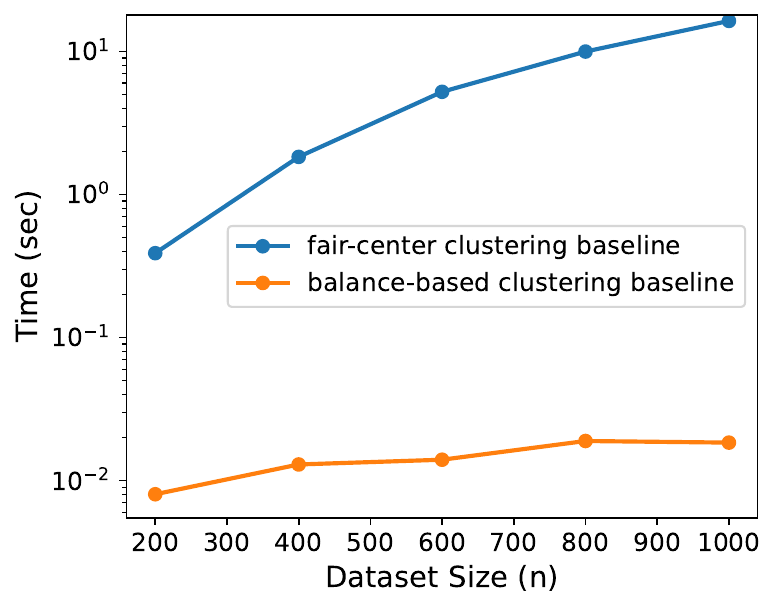}
        \caption{\textcolor{black}{{\sc gc}: Fair Clustering Baselines: Run time.}}
        \label{fig:fairclustering2}
    \end{minipage}
\end{figure*}

\subsubsection{Synthetic Datasets}\label{sec:exp:synthetic}
After evaluating the performance of our algorithms on real-world benchmark datasets, we use synthetic datasets with tunable parameters in this section to study the algorithms' performance under various dataset properties.
To do so, we fix the other parameters to their default values and vary one of the parameters, measuring the average performance metrics over 30 runs. To be consistent with our real-world experiments, we set $n=1000$, $k=3$, and $\eps=0.03$. For the odd (resp. even) values of $l\in \ell$, the $x$ values are generated using $dist_l=\mathcal{N}(1000+50,300)$ (resp. $dist_{l'}=\mathcal{N}(1000-50,300)$). Thus, the default value for the distribution-mean difference is $100$. The choices of the distribution means are motivated by the real-world income disparities. The default number of groups is $\ell=2$, and we use equal group ratios (i.e., equal expected number of samples from each group) as the default setting.
The performance results for varying the input size ($n$), number of buckets ($k$), and bias threshold $\eps$ were similar to the real-world datasets. Therefore, in the following, we focus on varying (C1) the distribution-mean difference between the groups, (C2) the (expected) group ratios, and (C3) the number of groups.

\stitle{Bias}
Figures~\ref{fig:bias-mean}, \ref{fig:bias-p}, and \ref{fig:bias-ell} show the results for (C1), (C2), and (C3), respectively.
Notably, when the values are generated using the same distribution (i.e., the mean difference is 0), the bias is close to zero. However, the bias increases (almost linearly) up to 27\% as the mean differences increase.

\stitle{Existence of an unbiased binning}
Next, in Figures~\ref{fig:m-mean}, \ref{fig:m-p}, and \ref{fig:m-ell}, we evaluate the impact of varying distribution mean, group ratios, and number of groups on the likelihood of the existence of an unbiased binning.
In each plot, the left-y-axis shows the number of boundary candidates ($m$) for unbiased binning (see Section~\ref{sec:unbiased:step1}), while the right-y-axis is the ratio of infeasible cases (out of 30 runs) for each setting.
Evidently, when the distribution-mean difference between the groups increases, the number of boundary candidates sharply reduces, which causes a rapid increase in the ratios of infeasible cases. In particular, for the mean difference of 300, only one out of the 30 cases had a valid unbiased binning. Similarly, as the group ratios approach 50\%, and for the larger number of groups, the number of boundary candidates decreases, resulting in an increase in the ratio of infeasible cases. Yet, as also observed in our experiments on COMPAS dataset (Appendix~\ref{app:exp}), increasing the value of $\ell$ (number of groups) notably decreases the probability that a bucket is unbiased -- i.e., it has equal ratios for all groups. As a result, for $\ell>2$, the ratio of infeasible cases increased to nearly 100\%.

\stitle{Price of Fairness}
Figures~\ref{fig:pof-mean}, \ref{fig:pof-p}, and \ref{fig:pof-ell} show the average PoF for varying distribution-mean difference, group ratios, and number of groups. As expected, the PoF of unbiased binning quickly increases when the gap between the distribution means increases. Similarly, the PoF of unbiased binning was significant for all cases, regardless of the group ratios and the number of groups.
Still, similar to our previous experiments, allowing a small bias threshold (3\%), the PoF dropped close to zero in almost all cases.

\stitle{Running time}
Next, Figures~\ref{fig:time-mean}, \ref{fig:time-p}, and \ref{fig:time-ell} show the running time of our algorithms for various cases. The time complexity of the $\DP$ algorithm for $\eps$-biased binning does not depend on parameters such as value distributions, group ratios, or the number of groups. This is reflected in our experiments, as its running time was almost the same in all cases. Similarly, the running time of the local-search algorithm $\LS$ did not notably change when varying the mean differences or the number of groups. However, when the ratio of the minority group was 10\%, its running time was almost the same as the $\DP$ algorithm. For such cases, the objective value $w_{dnc}$ (see Figure~\ref{fig:ls}) of the solution found by the $\DnC$ algorithm was larger. This caused a major increase in the number of combinations $\LS$ had to check, hence a significant increase in its running time. The unbiased binning algorithm has a negligible running time across all experiments.

\stitle{Scalability}
Lastly, we evaluated the scalability of our algorithms by exponentially increasing the input size from 1,000 to 1 million.
The running times are provided in Figure~\ref{fig:scalability}.
The $\DP$ algorithm for $\eps$-biased binning did not scale beyond an order of 10K. In particular, for $n=30K$, it took close to 1 hour on average to finish.
In contrast, both the unbiased binning algorithm and the local-search-based algorithm for $\eps$-biased binning could scale to very large input sizes, taking less than 1 second in almost all cases.

\subsubsection{Fair Clustering Baselines}\label{sec:exp:baseline}
We evaluated two representative fair clustering baselines: (i) a fair-center clustering method based on the LS1 heuristic\footnote{\textcolor{black}{The LS1 method performs local search by iteratively swapping centers with points from the same group to improve the clustering cost while maintaining feasibility.}} of~\cite{thejaswi2022clustering}, and (ii) a balance-based clustering approach using fairlet decomposition followed by k-means~\cite{chierichetti2017fair}. 
Since these methods do not produce ordered partitions, we converted their outputs into bins by sorting cluster centers and defining boundaries via midpoints. Figures~\ref{fig:fairclustering1} and~\ref{fig:fairclustering2} report the bias and running time as a function of the dataset size, using the German Credit dataset and averaged over 30 runs. Comparing Figure~\ref{fig:fairclustering1} with Figure~\ref{fig:gc:bias-n}, neither baseline is able to effectively reduce the bias compared to fairness-unaware binning, indicating that enforcing fairness at the clustering level does not translate to unbiased bins. Moreover, the running time results show that Baseline (i) incurs significantly higher computational cost and does not scale with the dataset size. These results show that, because of their fundamentally different objectives, fair clustering methods are not well-suited to addressing the unbiased binning problem.
\color{black}
\section{Discussions}\label{sec:discussion}

\begin{algorithm}[t]
\caption{Binary search for selecting the value of $\eps$}
\label{alg:BSearch}
\begin{algorithmic}[1]
\Require The dataset $\dee$, the value $k$.
\Ensure The $\eps$-biased $k$-binning by automatically selecting $\eps$.
\Function{BinarySearch-Bin}{$\dee, k$}
    \State $\mathcal{B}_{init}\gets$ $k$-binning($\dee, k$)
    \State $\eps_{init}\gets \beta_\dee(\mathcal{B}_{init})$
    \Comment{\small the bias on the initial binning}
    \State $PoF_{l}\gets$ PoF of unbiased binning 
    \State $(\eps_l,\eps_r)\gets (0,\eps_{init});$ $\mathcal{B}_r\gets \mathcal{B}_{init};$ PoF$_r\gets 0$
    \While{$\eps_r - \eps_l\geq \frac{\eps_{init}}{n}$}
        \State $\eps_m \gets \frac{\eps_l + \eps_r}{2}$
        \State $\mathcal{B}_m\gets \eps$-binning($\dee, k,\eps$);
        PoF$_m\gets$ PoF$(\mathcal{B}_m)$
        \State {\bf if} PoF$_l-$ PoF$_m>$ PoF$_m-$ PoF$_r$ {\bf then} $(\mathcal{B}_r, \eps_r)\gets (\mathcal{B}_m, \eps_m)$
        \State {\bf else} $(\mathcal{B}_l, \eps_l)\gets (\mathcal{B}_m, \eps_m)$
    \EndWhile
    \State {\bf Return} $\mathcal{B}_{r},\eps_r$
\EndFunction
\end{algorithmic}
\end{algorithm}

\subsection{Tuning the value of $\eps$}
Given an input parameter $\eps\in [0,1]$, $\eps$-biased Binning aims at finding the optimal binning whose maximum bias does not exceed $\eps$. The choice of the bias threshold can be guided by existing regulatory standards, such as the disparate impact law~\cite{barocas2016big}.

Alternatively, the bias threshold may be selected by examining the {\em Price of Fairness} (PoF) tradeoff. Our experiments on both benchmark and synthetic datasets show that although the PoF can be large when $\eps \to 0$, it decreases rapidly as $\eps$ increases slightly (Figure~\ref{fig:exp:eps}). This observation suggests that a reasonable balance between bias and PoF can often be achieved by selecting the knee of the $\eps$–PoF curve (e.g., the knee in Figure~\ref{fig:gc:PoF-eps} occurs around $\eps = 0.03$).

To implement this selection strategy, starting from the initial bias $\eps_{\text{init}}$ of the binning, one can perform a {\bf binary search} over the interval $[0, \eps_{\text{init}}]$ 
\textcolor{black}{to identify the smallest value of $\eps$ that yields a feasible solution without incurring a significant increase in PoF}.
More concretely, by discretizing the search interval into $\frac{\eps_{\text{init}}}{n}$ values, the binary search runs in $O(\log n)$ time. Thus, the overall runtime of $\eps$-biased binning increases only by a logarithmic factor, which is negligible in practice.

Algorithm~\ref{alg:BSearch} presents the pseudocode for this binary-search–based procedure for selecting $\eps$.

\subsection{Problem Scope and Limitations}\label{sec:limitations}
Attribute binning is applied before sharing a dataset, enabling its use in a variety of data-driven tasks and analytical workflows. While unbiased binning is fairness-aware -- i.e., it ensures that the binning procedure does not itself introduce biased bucketized attributes -- it {\em does not guarantee fairness in downstream tasks}.
Rather, unbiased binning complements algorithmic interventions aimed at developing fair models and data-driven systems~\cite{barocas2023fairness}; it cannot replace them. In other words, unbiased binning is {\em necessary but not sufficient}, and must be used {\em in conjunction} with other fairness-aware data preparation and intervention techniques.
The following limitations of unbiased binning further underscore that it cannot substitute for algorithmic fairness methods:

\begin{enumerate}[leftmargin=*]
    \item {\em Higher-order correlations:} 
    Unbiased binning ensures that each bucketized attribute is individually uncorrelated with demographic groups. However, this does not preclude the presence of higher-order correlations~\cite{draper1998applied}, where a combination of (individually unbiased) attributes has a joint association with the demographic groups. \textcolor{black}{As a result, dependencies that arise from combinations of attributes may still persist after binning.}
    \item {\em Task-unawareness:}
    Unbiased binning is task-agnostic, which is both an advantage and a limitation. On one hand, because the binning is not tailored to a specific task, the resulting dataset remains broadly applicable to diverse data-driven systems and analyses. On the other hand, its task-unaware nature means it cannot guarantee that all downstream analytics performed on the bucketized attributes will themselves be unbiased.
\end{enumerate}

\textcolor{black}{Extending unbiased binning to higher-order correlations and task-aware binning remains an important direction for future work.}

\subsection{Fairness Objectives and Generality}
\label{sec:fairness-objectives}
Our formulation focuses on removing statistical dependence between the bucketized attribute and the sensitive (grouping) attribute.
As a data curation step, unbiased binning operates upstream in the data analytics pipeline and is complementary to downstream fairness-aware modeling techniques.

By eliminating such dependence at the representation level, unbiased binning can facilitate the satisfaction of multiple group-based fairness objectives in downstream tasks.
This observation is consistent with the ``impossibility theorems''~\cite{barocas2023fairness}, stating that if the input data is biased -- i.e., if attributes are statistically dependent on a sensitive attribute -- then it is impossible to simultaneously satisfy all fairness criteria. By construction, unbiased binning removes such dependence between the attributes and the sensitive attributes. Consequently, models trained on unbiased attributes may be able to satisfy multiple fairness metrics simultaneously, as our experiments in Section~\ref{exp:validation} illustrate.

At the same time, our approach has inherent limitations, as discussed in Section~\ref{sec:limitations}.
Moreover, addressing biases at the group level, it does not contribute to improving individual fairness notions.
As a result, in our experiments in Section~\ref{exp:validation}, we observed that, while significantly improving group fairness of the downstream models, unbiased binning did not improve individual fairness. 

\section{Related Work}\label{sec:relatedwork}

\stitle{Fair ML}
Seminal studies have shown that ML models often inherit or amplify biases present in training data, resulting in disparate impacts across demographic groups~\cite{barocas2016big}. 
In response, numerous fairness definitions have been introduced, and a 
broad range of pre-processing, in-processing, and post-processing interventions have been developed to support the training of fair models~\cite{barocas2023fairness,mehrabi2021survey}.
Unbiased binning operates at a different stage of the data analytics pipeline and must therefore be used in conjunction with FairML techniques. Unbiased binning replaces raw attribute values with unbiased bucketized values to prepare a dataset for sharing; thus, the dataset is the {\em output} of the binning process.
In contrast, FairML methods take a (potentially biased) dataset as their {\em input}, and aim to produce fair models. By reducing biases in the shared dataset upstream, unbiased binning can facilitate the ability of downstream FairML techniques to train models that satisfy multiple fairness definitions simultaneously, with minimal performance degradation.

Among FairML approaches, pre-process interventions are technically the closest to unbiased binning. Pre-process approaches such as disparate impact remover~\cite{feldman2015certifying} and fair representation learning~\cite{xu2023fair,calmon2017optimized} {\em change} attribute values in the dataset -- e.g., by synthetically increasing an individual’s income -- in order to decorrelate attributes from demographic groups.
Consequently, the modified values are no longer strictly factual.
In contrast, unbiased binning bucketizes the raw attribute values without altering them.

\stitle{Responsible Data Management}   
While the major focus of the literature has been on FairML, there is a growing understanding that upstream data management operations can be a critical source of bias~\cite{jagadish2022many}.
In response, the research under the umbrella of responsible data management~\cite{stoyanovich2020responsible} has been dedicated to identifying and mitigating bias across the data analytics pipeline.
Some of the work in this area includes causal data repair techniques for fairness~\cite{salimi2020database,salimi2019interventional}, 
identifying and tuning problematic data slices~\cite{tae2021slice},
fairness-aware query adjustment~\cite{roy2024fairness,zhu2023consistent,shetiya2022fairness,li2023query,accinelli2020coverage},
coverage-based approaches for mitigating representation bias~\cite{shahbazi2023representation},
label repair for individual fairness~\cite{zhang2023iflipper},
mining the minority groups in data~\cite{dehghankar2025mining},
fair and private data generation~\cite{pujol2023prefair},
and fairness-aware 
data imputation~\cite{schelter2021jenga, zhu2024overcoming},
entity matching~\cite{shahbazi2023through,shahbazi2024fairness,moslemi2024threshold},
data cleaning~\cite{tae2019data},
data integration~\cite{nargesian2022responsible, chang2024data},
data augmentation~\cite{sharma2020data,erfanian2024chameleon},
and many more.
Within the related work, FairHash~\cite{shahbazi2024fairhash} shares some high-level similarities with our problem, as it finds a hash map that guarantees uniform distribution at the group level across hash buckets. However, despite its high level of similarity, the objective of FairHash is different from ours: given a set of attributes, FairHash finds a hash function as the combination of the attributes such that an equal-size binning on the hash values satisfies equal-group ratios.
Our problem is different: given an attribute where equal-size binning is biased, we want to partition it into unbiased bins.
As a result, the algorithms (based on ranking and necklace splitting) proposed in~\cite{shahbazi2024fairhash} cannot be adapted for unbiased binning.

\stitle{Group Spatial Fairness} 
Group spatial fairness aims to ensure equal model performance across geographic regions.
As a {\em post-process intervention}, 
\cite{kyriakopoulos2025promis} 
proposes a model-agnostic method that adjusts model parameters for predefined spatial regions to reduce unfairness while preserving accuracy.
In contrast, \cite{shaham2024fair} adjusts the spatial regions while training a separate model for each region to ensure calibration fairness across the neighborhoods.
Concretely, it designs a fair KD-tree that recursively splits the space by training a classifier within each cell and selecting split points that minimize calibration unfairness across the resulting partitions.
Similar to other Fair ML approaches~\cite{barocas2023fairness}, these approaches operate at the model development stage of the data analytics pipeline, whereas attribute binning is task-agnostic and occurs before sharing a dataset.

\stitle{Fair Clustering} 
Fair clustering~\cite{chhabra2021overview} has been extensively studied in general metric spaces, where the goal is to partition data into clusters while enforcing various fairness constraints, including balance (demographic parity)~\cite{chierichetti2017fair}, socially-fair clustering~\cite{ghadiri2021socially}, and fair center selection~\cite{thejaswi2022clustering, hotegni2023approximation}.
\cite{thejaswi2022clustering} studies diversity-aware clustering, where given a collection of (possibly overlapping) groups, the selected centers must include at least a specified number from each group while minimizing a clustering objective.
\cite{hotegni2023approximation} studies fair range clustering, where the number of selected centers from each group must lie within specified lower and upper bounds, offering a more flexible alternative to strict fairness constraints.
Conversely, 
\cite{chierichetti2017fair} defines fairness as demographic parity (balance) in each cluster. It introduces fairlet decomposition, a preprocessing approach that partitions the data into small, balanced subsets (fairlets) such that each fairlet satisfies a desired group proportion.
While these approaches are related in that they both partition the space, their objectives are fundamentally different. Clustering methods optimize distance-based objectives (e.g., k-means or k-median), whereas binning partitions an ordered attribute into intervals that satisfy structural constraints such as equal-size buckets. This mismatch makes fair clustering approaches ill-suited for unbiased binning, as confirmed by our experiments (Section~\ref{sec:exp:baseline}).

\stitle{Attribute Binning} As a step before sharing datasets, attribute binning is popular in settings where privacy preservation, interpretability, or noise reduction is desired~\cite{mironchyk2017monotone,oliveira2008rigorous,zeng2014necessary}. Techniques such as equal-width binning have long been standard tools for transforming continuous features into categorical ones~\cite{kotsiantis2006discretization,boulle2005optimal}.
A variety of methods for attribute discretization have been proposed in the literature, with several surveys offering comprehensive overviews~\cite{kotsiantis2006discretization}.
Even though improper binning can magnify machine bias in downstream data analytics tasks, to the best of our knowledge, this paper is the \underline{\em first} to introduce and study the problem of unbiased binning to address such issues.
\section{Conclusion}\label{sec:conclusion}
In this paper, underscoring the impact of attribute binning on amplifying bias in data-driven decision systems, we introduced the unbiased binning problem to achieve group-parity across discretized buckets.
Next, recognizing practical limitations of unbiased binning, we introduced $\eps$-biased binning for settings where a small bias in the group ratios may be tolerable.
We proposed various solutions to address our problems and conducted extensive experiments on real-world benchmarks and synthetic datasets to evaluate their effectiveness.

\balance
\bibliographystyle{ACM-Reference-Format}
\bibliography{ref}
\appendix
\section*{APPENDIX}

\section{Dynamic Programming (DP) Algorithm for $\eps$-biased Binning}\label{sec:dp2}
The solution proposed for unbiased binning does not directly extend to $\eps$-biased binning, mainly since Theorem~\ref{th:cboundary} no longer holds for $\eps$-biased binning.
Therefore, in the following, we propose a new idea that enables the development of $\DP$ for this setting.

Index the tuples $\{t_1,\cdots,t_n\}$ in a sorted order based on $x$.
The first step of our algorithm is to construct an upper-triangular table $\tee$ (Figure~\ref{fig:ebiased}).
For each pair $i<j\leq n$, the value of $\tee[i,j]$ is one if a bucket with $t_i$ and $t_j$ as its boundaries satisfies the $\eps$-bias requirement.
\begin{align}\label{eq:ebiasreq}
\tee[i,j]=\begin{cases}
    1 & \text{if }\max\limits_{l\in[\ell]}\left\vert
        \frac{|\Gee_l\cap \{t_{i+1},\cdots,t_j\}|}{j-i} - \frac{|\Gee_l|}{n}
    \right\vert \leq \eps \\[1.7ex]
    0 & \text{otherwise}
\end{cases}
\end{align}

Having introduced the table $\tee$, we are ready to develop the $\DP$ algorithm.

\stitle{the DP formulation} $\opt(j,\kappa)$ is the optimal solution for the $\eps$-biased binning problem with $\kappa$ buckets, on the tuples $\{t_1,\cdots,t_j\}$. 
\(\opt(j, \kappa)\) returns $w^\downarrow$ and $w^\uparrow$, the minimum and maximum widths of its buckets.

\stitle{The boundary conditions}
A subproblem $\opt(j,\kappa)$ is infeasible if for all $i<j$, $\tee[i,j]=0$. We use $(0,\infty)$ as the output of an infeasible subproblem.
When $\tee[0,j]=1$, the bucket from the beginning to $t_j$ is $\eps$-biased. Hence, $\opt(j,1)=(j,j)$, if $\tee[0,j]=1$.
Also, $\kappa\leq j$.

\begin{figure}[t]
\centering
\begin{tikzpicture}[x={(0.49\textwidth/8,0)}, y={(0,-0.8cm)}] 

    \filldraw[fill=blue!30, draw=black] (0,0) rectangle (1,1);
    \node at (0.5,-.2) {$1$};
    \filldraw[fill=blue!30, draw=black] (1,0) rectangle (2,1);
    \node at (1.5,-.2) {$2$};
    \filldraw[fill=blue!30, draw=black] (2,0) rectangle (3,1);
    \node at (2.5,-.2) {$3$};
    \filldraw[fill=blue!30, draw=black] (3,0) rectangle (4,1);
    \node at (3.5,-.2) {$4$};
    \filldraw[fill=blue!30, draw=black] (4,0) rectangle (6,1);
    \node at (5,-.2) {$\cdots$};
    \filldraw[fill=blue!30, draw=black] (6,0) rectangle (7,1);
    \node at (6.5,-.2) {$n$};
    \node at (7.2,.5) {$0$};
    
    \node at (7.2,1.5) {$1$};
    \filldraw[fill=blue!30, draw=black] (1,1) rectangle (2,2);
    \filldraw[fill=blue!30, draw=black] (2,1) rectangle (3,2);
    \filldraw[fill=blue!30, draw=black] (3,1) rectangle (4,2);
    \filldraw[fill=blue!30, draw=black] (4,1) rectangle (6,2);
    \filldraw[fill=blue!30, draw=black] (6,1) rectangle (7,2);

    \node at (7.2,2.5) {$2$};
    \filldraw[fill=blue!30, draw=black] (2,2) rectangle (3,3);
    \filldraw[fill=blue!30, draw=black] (3,2) rectangle (4,3);
    \filldraw[fill=blue!30, draw=black] (4,2) rectangle (6,3);
    \filldraw[fill=blue!30, draw=black] (6,2) rectangle (7,3);

    \node at (7.2,3.5) {$3$};
    \filldraw[fill=blue!30, draw=black] (3,3) rectangle (4,4);
    \filldraw[fill=blue!30, draw=black] (4,3) rectangle (6,4);
    \filldraw[fill=blue!30, draw=black] (6,3) rectangle (7,4);

    \node at (7.2,5) {$\vdots$};
    \filldraw[fill=blue!30, draw=black] (4,4) rectangle (6,6);
    \filldraw[fill=blue!30, draw=black] (6,4) rectangle (7,6);
    \node at (5,5) {$\ddots$};

    \node at (7.35,6.5) {$n-1$};
    \filldraw[fill=blue!30, draw=black] (6,6) rectangle (7,7);

    \node at (4.2,-.4) {$\mathbf{j}$};
    \node at (7.4,4.2) {$\mathbf{i}$};
    \node at (6.5,0.5) {$\checkmark$};
    \node at (6.5,2.5) {$\checkmark$};
    \node at (6.5,3.5) {$\checkmark$};
    \node at (3.5,1.5) {$\checkmark$};
    \node at (3.5,2.5) {$\checkmark$};
    \node at (2.5,.5) {$\checkmark$};
    
\end{tikzpicture}
\caption{Illustration of the table $\tee$, representing the potential buckets that are $\eps$-biased.}
\label{fig:ebiased}
\end{figure}

\stitle{Recursive Formula}
The valid ($\eps$-biased) buckets ending at an index $j$ are specified by the cells $\tee[i,j]=1$. The subproblem $\opt(j,\kappa)$ must select its last bucket among such valid buckets.

For each index $i$ where $\tee[i,j]=1$, the last bucket is $\big(t_i[x],t_j[x]\big]$. Hence, $|B_\kappa\cap \dee |=j-i$.
Recursively solving the next subproblem, let $(w^\downarrow_i,w^\uparrow_i) = \opt(i,\kappa-1)$.
Then after adding $B_\kappa$,
\begin{align*}
    w_{i}^\uparrow &= \max (w_p^\uparrow~,~ j-i)\\
    w_{i}^\downarrow &= \min (w_p^\downarrow~,~ j-i)
\end{align*}
Among the possible options, the one with the minimum objective value is selected.
Hence,
\begin{align}\label{eq:dp2}
    \opt(j,\kappa) = (w^\downarrow_{i^*},w^\uparrow_{i^*}) \text{, where } i^*=\argmin\limits_{\mathcal{T}[i,j]=1} {w_{i}^\uparrow - w_{i}^\downarrow}
\end{align}

\begin{algorithm}[ht]
\caption{$\eps$-biased Binning}
\label{alg:ebiased}
\begin{algorithmic}[1]
\Require The dataset $\dee$, the value $k$, and the value $\eps$.
\Ensure The optimal $\eps$-biased $k$-binning.
    \State $\{t_1,\cdots,t_n\}\gets$ {\sc sort}$(\dee,x)$ \Comment{\small sort $\dee$ on $x$}
    
    \For{$i\gets 0$ to $n-1$} \Comment{\small filling the table $\tee$}
        \For{$j\gets i+1$ to $n$}
            \State $\tee[i,j]\gets$ Equation~\ref{eq:ebiasreq}
        \EndFor
    \EndFor

    \State $M\gets [\langle(0,\infty),null \rangle]_{n\times k}$\Comment{\small initializing $M$}
    \For{$j\gets 1$ to $n$} \Comment{\small the boundary conditions}
        \If{$\tee[0,j]=1$}
            $M[j,1]\gets \langle(j,j),1 \rangle$
        \EndIf
    \EndFor

    \Statex {\small /*Filling the matrix $M$*/}
    \For{$\kappa\gets 2$ to $k$}
    \For{$j\gets k$ to $n$}
        \State $\langle (w^\downarrow,w^\uparrow),i^*\rangle\gets \langle(0,\infty),null \rangle$
        \For{$i\gets 1$ to $j-1$}
            \State $\langle (w_i^\downarrow,w_i^\uparrow),\square \rangle\gets M[i,\kappa-1]$
            \If{$(w_i^\uparrow - w_i^\downarrow)<(w^\uparrow - w^\downarrow)$}
                \State $\langle (w^\downarrow,w^\uparrow),i^*\rangle\gets \langle(w_i^\downarrow,w_i^\uparrow),i \rangle$
            \EndIf
        \EndFor
        $M[j,\kappa]\gets \langle (w^\downarrow,w^\uparrow),i^*\rangle$
    \EndFor
    \EndFor

    \Statex {\small /*Tracing back the output*/}
    \If{$M[n,k] = \langle(0,\infty),null \rangle$}
        {\bf Return} Infeasible
    \EndIf
    \State $S\gets [~]$ \Comment{\small initialize the output}
    \State $j\gets n; \kappa \gets k$
    \While{$\kappa>0$}
        \State $\langle(w_i^\downarrow,w_i^\uparrow),i \rangle \ gets M[j,\kappa]$
        \State $S.push(t_{i}[x])$ \Comment{\small add $t_{i}[x]$ to the beginning of $S$}
            \State $j\gets i$; $\kappa \gets \kappa-1$
    \EndWhile
    \State {\bf Return} $S$
\end{algorithmic}
\end{algorithm}

\subsubsection{Implementation based on Matrix Filling}
Algorithm~\ref{alg:ebiased} shows the implementation of our $\DP$ algorithm for the $\eps$-biased binning problem.
The algorithm defines a matrix $M_{n\times k}$, where each cell $M[j,\kappa]$ includes $\langle\opt(j,\kappa),i^*\rangle$. $i^*$ is the selected index according to Equation~\ref{eq:dp2}. The matrix $M$ is then filled based on the DP formulation discussed above.

\vspace{2mm}
{\sc Lemma~\ref{lem:ebiased-complexity}.}
{\it
    Algorithm~\ref{alg:ebiased} has a time complexity of $O(n^2k)$ and space complexity of $O(n^2)$.
}

\begin{proof}
{\em Space complexity:}
Storing the table $\tee$ requires a space complexity of $O(n^2)$. Therefore, although the matrix $M$ has a size of $n\times k$, the overall space complexity is $O(n^2)$.

\vspace{2mm}\noindent{\em Time complexity:}
\begin{itemize}[leftmargin=*]
    \item Sorting $\dee$ based on the attribute $x$ is done in $O(n\log n)$.
    \item The table $\tee$ is filled in $O(n^2)$, for a constant value of $\ell$.
    \item Iterating through the indices $i<j$ to fill each cell of matrix $M[j,\kappa]$ is in $O(n)$. Therefore, filling the matrix $M$ is done in $O(n^2k)$
    \item Finally, the trace back in the matrix $M$ to find the optimal binning is in $O(k)$.
\end{itemize}
Therefore, putting everything together, Algorithm~\ref{alg:ebiased} has a time complexity of $O(n^2 k)$.
\end{proof}

Note that, by setting $\eps=0$, one can use Algorithm~\ref{alg:ebiased} for solving the unbiased binning problem. In other words, this algorithm also serves as a (slower) baseline for Problem~\ref{problem}.

\begin{figure}[!tbh]
\centering
    \begin{subfigure}[t]{0.7\linewidth}
        \includegraphics[width=\linewidth]{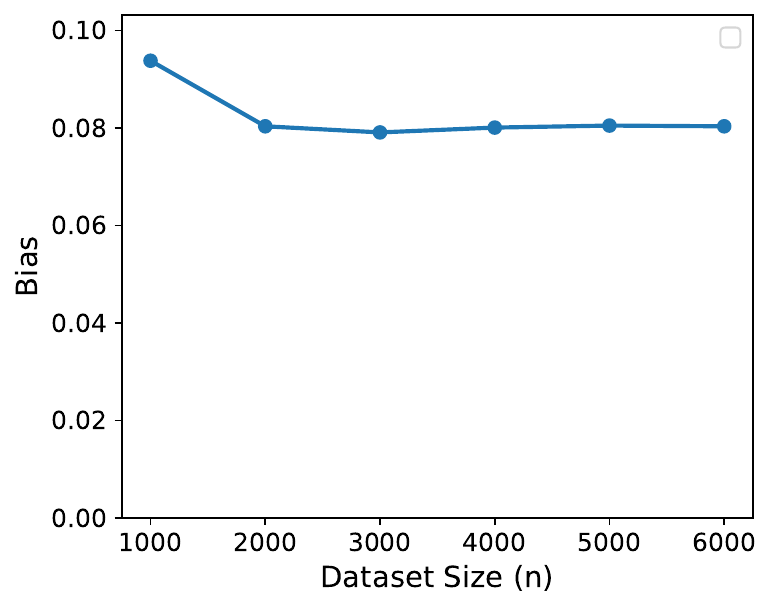}
        \caption{}
        \label{fig:compas:bias-n}
    \end{subfigure}

    \begin{subfigure}[t]{0.7\linewidth}
        \includegraphics[width=\linewidth]{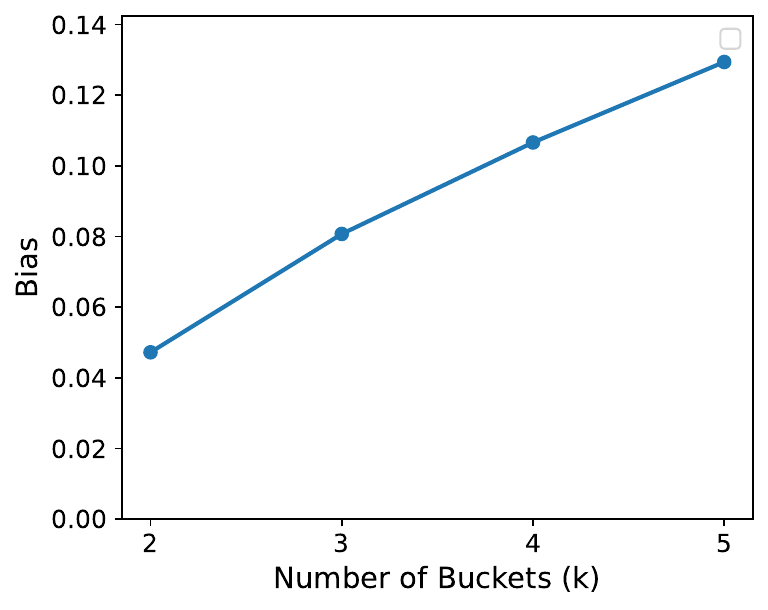}
        \caption{}
        \label{fig:compas:bias-k}
    \end{subfigure}
\vspace{-4mm}
\caption{COMPAS dataset: Bias of equal-size binning}
\label{fig:compas}
\end{figure}

\begin{figure}[!tbh]
\centering
    \begin{subfigure}[t]{0.7\linewidth}
        \includegraphics[width=\linewidth]{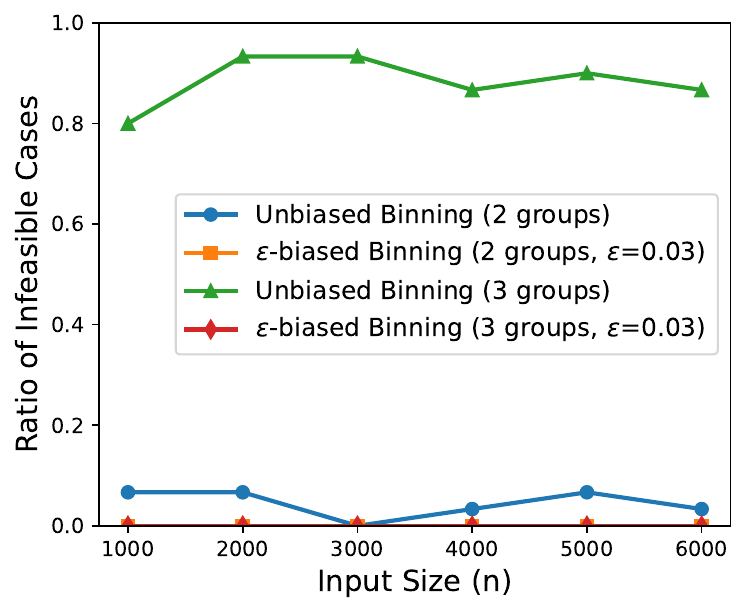}
        \caption{Input size}
        \label{fig:compas:invalid-n}
    \end{subfigure}
    
    \begin{subfigure}[t]{0.7\linewidth}
        \includegraphics[width=\linewidth]{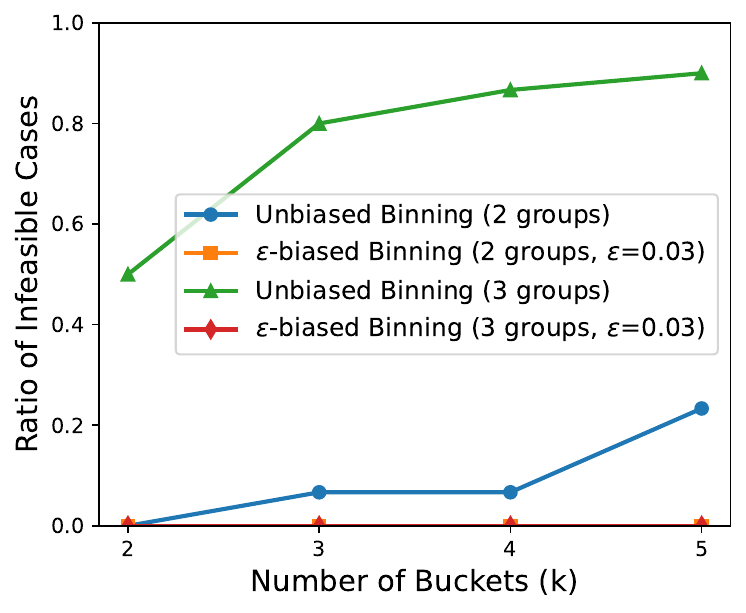}
        \caption{\# buckets}
        \label{fig:compas:invalid-k}
    \end{subfigure}
\vspace{-4mm}
\caption{COMPAS dataset: Infeasible ratio.}
\label{fig:compas}
\end{figure}

\begin{figure}[!tbh]
\centering
    \begin{subfigure}[t]{0.7\linewidth}
        \includegraphics[width=\linewidth]{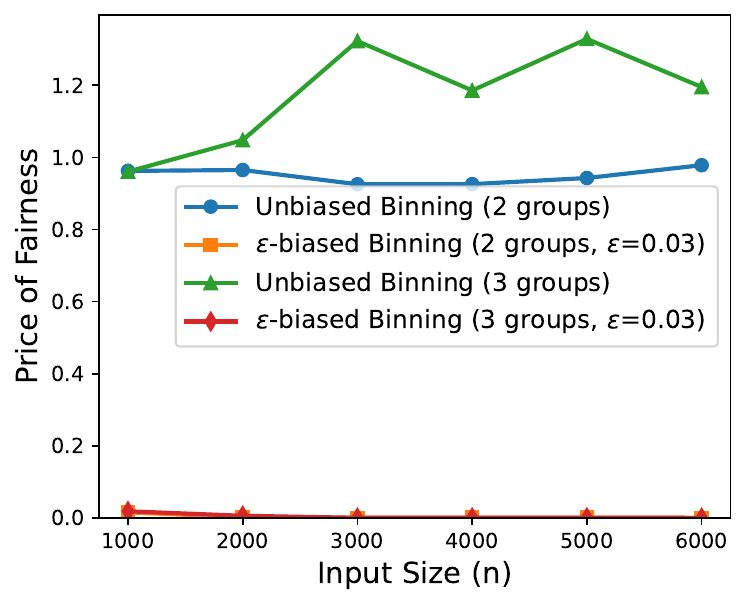}
        \caption{Input size}
        \label{fig:compas:PoF-n}
    \end{subfigure}
    
    \begin{subfigure}[t]{0.7\linewidth}
        \includegraphics[width=\linewidth]{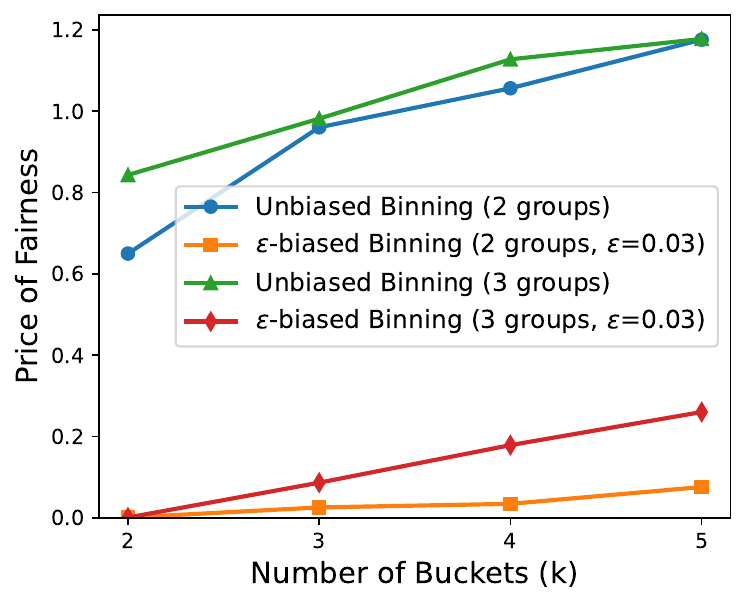}
        \caption{\# buckets}
        \label{fig:compas:PoF-k}
    \end{subfigure}
\vspace{-4mm}
\caption{COMPAS dataset: Price of Fairness.}
\label{fig:compas}
\end{figure}

\begin{figure}[!tbh]
\centering
    \begin{subfigure}[t]{0.7\linewidth}
        \includegraphics[width=\linewidth]{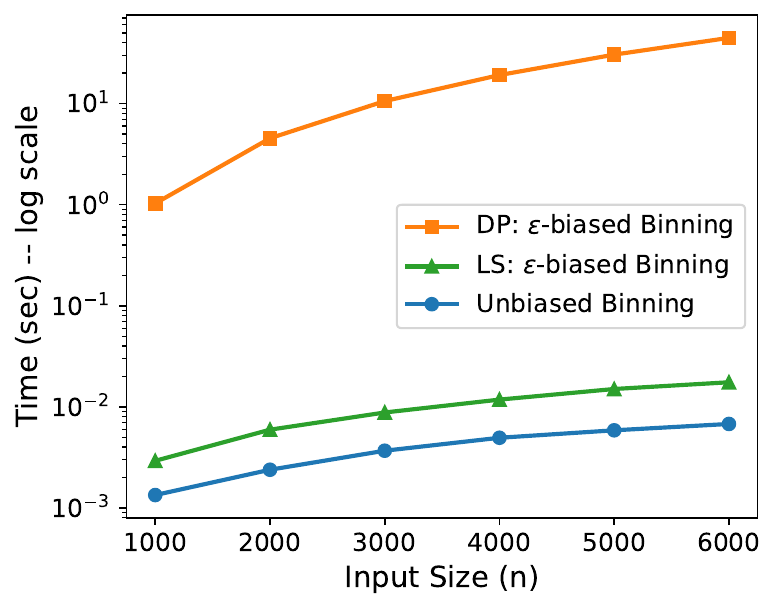}
        \caption{Input size}
        \label{fig:compas:time-n}
    \end{subfigure}

    \begin{subfigure}[t]{0.7\linewidth}
        \includegraphics[width=\linewidth]{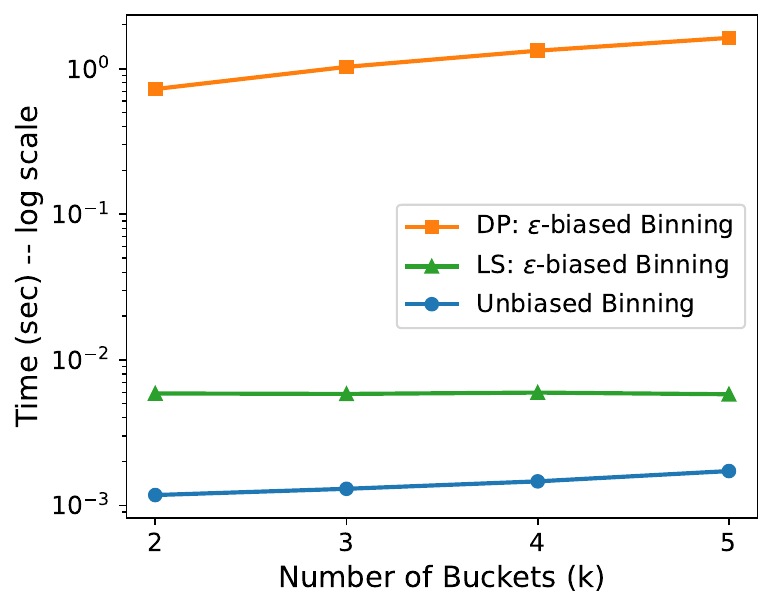}
        \caption{\# buckets}
        \label{fig:compas:time-k}
    \end{subfigure}
\vspace{-4mm}
\caption{COMPAS dataset: Time.}
\label{fig:compas}
\end{figure}

\begin{figure}[!tb]
\centering
    \begin{subfigure}[t]{0.7\linewidth}
        \includegraphics[width=\linewidth]{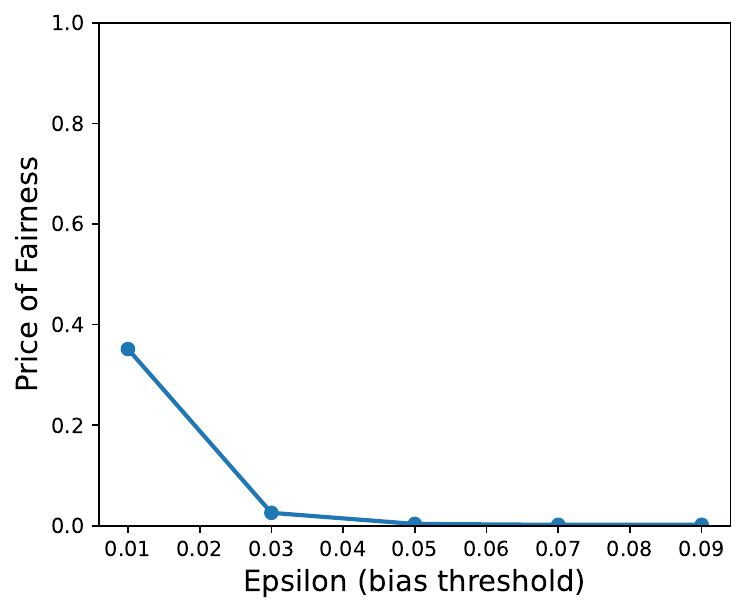}
        \caption{PoF vs $\eps$}
        \label{fig:compas:PoF-eps}
    \end{subfigure}
    
    \begin{subfigure}[t]{0.7\linewidth}
        \includegraphics[width=\linewidth]{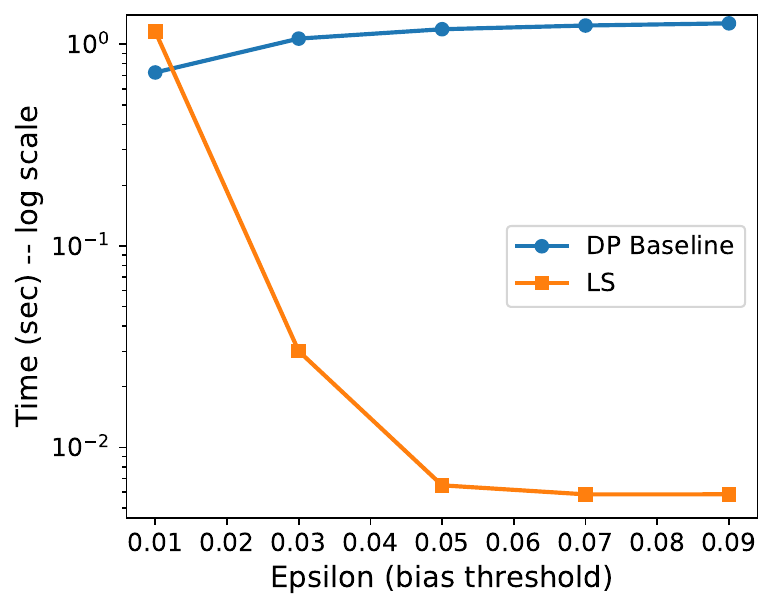}
        \caption{Time vs $\eps$}
        \label{fig:compas:time-eps}
    \end{subfigure}
    \vspace{-4mm}
\caption{The effect of varying the bias threshold ($\eps$) on time and PoF (COMPAS)}
\label{fig:exp:eps}
\end{figure} 

\section{Extended Experiment Result on COMPAS Dataset}\label{app:exp}
Our experiment results on the COMPAS dataset for various input sizes and the number of buckets are provided in Figure~\ref{fig:compas}. The default values are $n=1000$, $k=3$, and $\eps=0.03$.
COMPAS recidivism scores are well-known for machine bias, highlighting the bias against the racial group \at{Black} compared to the group of \at{White} individuals~\cite{compas2016}.
We also observed this in Figures~\ref{fig:compas:bias-n} and \ref{fig:compas:bias-k}, where the equal-size binning of the raw recidivism scores resulted in up to 10\% bias for varying input sizes, while this value increased to around 14\% for $k=5$ risk-score buckets.

While the experiment results for the COMPAS dataset were consistent with the German Credit dataset, the ratio of infeasible cases (Figures~\ref{fig:compas:invalid-n} and \ref{fig:compas:invalid-k}) for unbiased binning for the two groups \at{black} and \at{white}, was less (between 0 to 20\%). Still, the PoF was significant for unbiased binning. On the other hand, allowing a small bias threshold of 3\% could reduce the PoF to almost zero in all cases.
Next, to evaluate the impact of the number of groups ($\ell$), we repeated our experiments for the three groups of \at{Black}, \at{Hispanic}, and \at{White}. The ratio of infeasible cases surged to more than 80\% in Figures~\ref{fig:compas:invalid-n} and \ref{fig:compas:invalid-k}. This is because the chance of finding the boundaries with unbiased ratios on both \at{Black} and \at{Hispanic} is less than only considering the \at{Black} group. Therefore, the expected number of boundary candidates for the three groups is less than the binary groups. Similarly, the PoF of unbiased binning is higher for considering three groups.
However, allowing a maximum bias of 3\%, the PoF significantly dropped, while there was no infeasible case in all settings. Still, the PoF increased to around 25\% for $k=5$ buckets.

The running times of the algorithms were nearly identical for two and three groups. Therefore, in Figures~\ref{fig:compas:time-n} and \ref{fig:compas:time-k}, we only provided the running times for binary-group experiments. Similar to our experiments on the German credit dataset, the unbiased-binning algorithm was the fastest in all cases, while the $\DP$ algorithm for $\eps$-biased binning was $10^3$ times slower than it. Also, the $\LS$ algorithm proved to be practically efficient in all cases.

Next,
we compared the performance of the $\DP$ and $\LS$ algorithms for different values of bias threshold. The results are provided in Figure~\ref{fig:exp:eps}. First, as reflected in Figures~\ref{fig:compas:PoF-eps}, while the price of fairness is large when $\eps$ is 1\%, it suddenly drops for larger values of $\eps$. This confirms that by slightly increasing the max-bias threshold, one can find a binning that is very close to equal-size binning.
Next, looking at Figures~\ref{fig:compas:time-eps}, we can confirm the different running-time behaviors of the $\DP$ and $\LS$ algorithms when $\eps$ increases.
Increasing the value of $\eps$ slightly increases the running time of $\DP$ as it increases the valid boundaries. On the other hand, while the running time of the $\LS$ algorithm is high for $\eps=0.01$, it quickly decreases for larger values of $\eps$. This major drop can be explained by the significant drop in the PoF when $\eps$ increases. When the PoF is low, the near-optimal solution found by the $\DnC$ algorithm is small, hence the $\LS$ can stop after checking a few combinations.
\end{document}